\newtheorem{definition}{Definition}
\newtheorem{lemma}{Lemma}
\newtheorem{corollary}{Corollary}
\newtheorem{theorem}{Theorem}
\begin{document}
\title{Non-Asymptotic Delay Bounds for Multi-Server Systems with Synchronization Constraints}
\author{\IEEEauthorblockN{Markus Fidler\thanks{This work was supported in part by the European Research Council (ERC) under Starting Grant UnIQue (StG 306644). This manuscript is a revised and extended version of the paper~\cite{fidler:forkjoin} that appeared in the IEEE Infocom 2016 proceedings.}, Brenton Walker, and Yuming Jiang}\thanks{M. Fidler and B. Walker are with the Institute of Communications Technology, Leibniz Universit\"at Hannover, Germany, (E-mail: markus.fidler@ikt.uni-hannover.de and brenton.walker@ikt.uni-hannover.de). Y. Jiang is with the Department of Telematics, NTNU Trondheim, Norway (E-mail: jiang@item.ntnu.no).}}
\maketitle
\begin{abstract}
Multi-server systems have received increasing attention with important implementations such as Google MapReduce, Hadoop, and Spark. Common to these systems are a fork operation, where jobs are first divided into tasks that are processed in parallel, and a later join operation, where completed tasks wait until the results of all tasks of a job can be combined and the job leaves the system. The synchronization constraint of the join operation makes the analysis of fork-join systems challenging and few explicit results are known. In this work, we model fork-join systems using a max-plus server model that enables us to derive statistical bounds on waiting and sojourn times for general arrival and service time processes. We contribute end-to-end delay bounds for multi-stage fork-join networks that grow in $\mathcal{O}(h \ln k)$ for $h$ fork-join stages, each with $k$ parallel servers. We perform a detailed comparison of different multi-server configurations and highlight their pros and cons. We also include an analysis of single-queue fork-join systems that are non-idling and achieve a fundamental performance gain, and compare these results to both simulation and a live Spark system.
\end{abstract}
%
%
\section{Introduction}
\label{sec:introduction}
Fork-join systems are an essential model of parallel data processing, e.g., Google MapReduce~\cite{dean:mapreduce}, Hadoop, or Spark~\cite{spark-usenix}, where jobs are divided into $k$ tasks (fork) that are processed in parallel by $k$ servers. Once all tasks of a job are completed, the results are combined (join) and the job leaves the system. Fig.~\ref{fig:fjqueue} illustrates an example. Multi-stage fork-join networks comprise several fork-join systems in tandem, where all tasks of a job have to be completed at the current stage before the job is handed over to the next stage. The difficulty in analyzing such systems is due to a) the statistical dependence of the workload on the parallel servers that is due to the common arrival process~\cite{baccelli:forkjoin,kemper:forkjoin}, and b) the synchronization required by the join operation~\cite{baccelli:forkjoin,tan:mapreduce}.

Significant research has been performed to analyze the performance of fork-join systems. However, exact results are known only for few specific systems, such as two parallel M$\mid$M$\mid$1 queues~\cite{flatto:forkjoin, nelson:forkjoin}. For more complex systems, approximation techniques, e.g.,~\cite{nelson:forkjoin, lebrecht:forkjoin, ko:forkjoin, tan:forkjoin, varma:forkjoin, varki:forkjoin, kemper:forkjoin, alomari:forkjoin}, and bounds, using stochastic orderings~\cite{baccelli:forkjoin}, martingales~\cite{rizk:forkjoin}, or stochastic burstiness constraints~\cite{kesidis:forkjoin}, have been explored. Given the difficulties posed by single-stage fork-join systems, few works consider multi-stage networks. A notable exception is~\cite{varki:forkjoin} where an approximation for closed fork-join networks is developed.

Related synchronization problems also occur in the case of load balancing using parallel servers and in the case of multi-path routing of packet data streams~\cite{han:resequencing} using multi-path protocols~\cite{rizk:forkjoin}. The tail behavior of delays in multi-path routing is investigated in~\cite{han:resequencing} as well as in~\cite{xia:resequencing,gao:resequencing} where large deviation results of resequencing delays for parallel M$\mid$M$\mid$1 queues are derived.

\begin{figure}
  \centering
  \includegraphics[width=0.85\columnwidth]{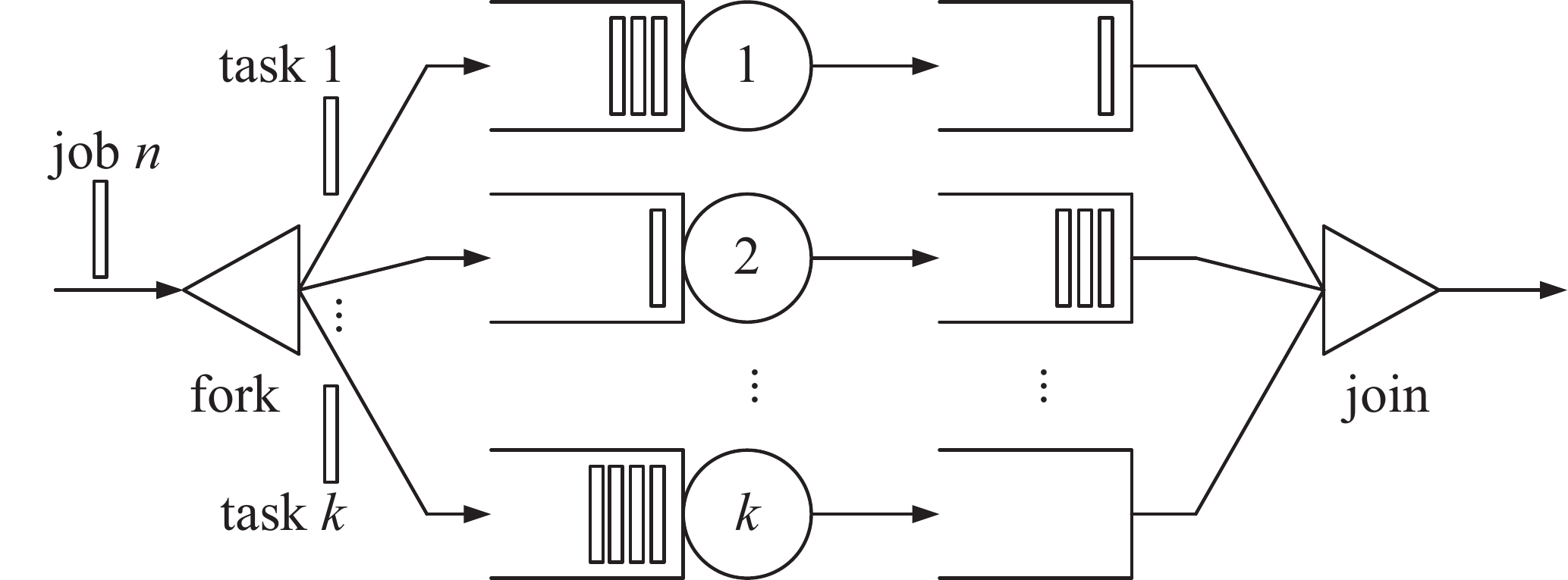}
  \caption{Fork-join system. Each job is composed of $k$ tasks with individual service requirements, that are distributed to $k$ servers (fork). Once all tasks of a job are completed, the job leaves the system (join), i.e., the tasks of a job wait at the join step until all tasks of the job are completed.}
  \label{fig:fjqueue}
\end{figure}

Split-merge systems are a variant of fork-join systems with a stricter synchronization constraint: all tasks of a job have to start execution simultaneously. In contrast, in a fork-join system, the start times of tasks are not synchronized. Split-merge systems are solvable to some extent as they can be expressed as a single server queue where the service process is governed by the service time of the maximal task of each job~\cite{harrison:splitmerge, lebrecht:forkjoin, rizk:forkjoin, joshi:knforkjoin}.

Most closely related to this work are three recent papers~\cite{kesidis:forkjoin, rizk:forkjoin, poloczek:parallelsystems} that employ similar methods. The work~\cite{kesidis:forkjoin} considers single-stage fork-join systems with load balancing, general arrivals of the type defined in~\cite{yin:generalizedstochasticallyboundedburstiness}, and deterministic service. A service curve characterization of fork-join systems is provided and statistical delay bounds are presented. The paper~\cite{rizk:forkjoin} contributes delay bounds for single-stage fork-join systems with renewal as well as Markov-modulated inter-arrival times and independent and identically distributed (iid) service times. The authors prove that delays for fork-join systems grow as $\mathcal{O}(\ln k)$ for $k$ parallel servers, as also found in~\cite{baccelli:forkjoin}. Split-merge systems are shown to have inferior performance~\cite{rizk:forkjoin}, where the stability region of $k$ parallel M$\mid$M$\mid$1 queues decreases with $\ln k$. The work also includes a first application to multi-path routing, assuming a generic window-based protocol that operates on batches of packets. The authors conclude that multi-path routing is only beneficial in the case of two parallel paths and moderate to high utilization. Otherwise, resequencing delays are found to dominate. In~\cite{poloczek:parallelsystems}, the authors evaluate different task assignment policies for parallel server systems with task replication, considering the effects of correlated replicas. Task replication relates to the more general concept of $(k,l)$ fork-join systems~\cite{joshi:knforkjoin, fidler:forkjoin}, where a job is considered completed once $l$ out of $k$ tasks have finished service.

While~\cite{rizk:forkjoin} focuses on split-merge vs. fork-join systems with iid service times, we also consider the case of non-iid service, where we are able to generalize important results, such as the growth of delays in $\mathcal{O}(\ln k)$ for fork-join systems with $k$ parallel servers. Furthermore, we show that fork-join systems can be formulated as a server under the max-plus algebra~\cite{chang:performanceguarantees}. This essential lemma enables the analysis of multi-stage fork-join networks. For $h$ statistically independent fork-join stages each with $k$ parallel servers, we prove that the growth of delays is in $\mathcal{O}(h \ln k)$. The result compares to a scaling in $\mathcal{O}(h \ln (hk))$ that we obtained previously in~\cite{fidler:forkjoin} without assuming independence of the stages.

We perform a detailed evaluation of different multi-server configurations which reveals that fork-join systems mostly but not universally outperform classical multi-server systems. Beyond~\cite{fidler:forkjoin, kesidis:forkjoin, rizk:forkjoin, poloczek:parallelsystems}, we also include single-queue multi-server as well as single-queue fork-join systems. In contrast to the standard fork-join model, where each of the servers has an individual queue, single-queue systems are non-idling in the sense that queueing can occur only if all servers are busy. Our evaluation reveals a fundamental performance gain of non-idling single-queue systems. We include reference results, mostly obtained by simulation as well measurements from a live Spark cluster, that verify the tightness of our performance bounds.

The remainder of this paper is structured as follows. In Sec.~\ref{sec:forkjoin}, we formulate basic models of G$\mid$G$\mid$1 as well as GI$\mid$GI$\mid$1 fork-join systems in max-plus system theory. Multi-stage fork-join networks are considered in Sec.~\ref{sec:multistage}. We compare fork-join systems with classical multi-server systems with thinning and optional resequencing in Sec.~\ref{sec:thinning}. In Sec.~\ref{sec:nonidling}, we analyze non-idling single-queue implementations of multi-server and fork-join systems, respectively. Sec.~\ref{sec:conclusions} presents brief conclusions. Extensive proofs and a detailed description of the simulation and Spark experiments are in the appendix.
%
%
\section{Basic Fork-Join Systems}
\label{sec:forkjoin}
In this section, we derive a set of results for basic fork-join systems in max-plus system theory~\cite{baccelli:synchronizationlinearity, chang:performanceguarantees, jiang:maxplus, jiang:onecoin, luebben:availbw2}. Max-plus system theory is a branch of the deterministic~\cite{cruz:networkdelaycalculus, chang:performanceguarantees, leboudec:networkcalculus}, respectively, stochastic network calculus~\cite{chang:performanceguarantees, burchard:endtoendstatisticalcalculus, ciucu:networkservicecurvescaling2, fidler:momentcalculus, jiang:stochasticnetworkcalculus, fidler:netcalcsurvey, ciucu:goodvalue, fidler:netcalcguide}. In comparison to~\cite{rizk:forkjoin}, which is focused entirely on waiting and sojourn times of specific systems, the more general max-plus approach enables us to construct multi-stage fork-join networks as well as more advanced fork-join systems. Further, we generalize central results from~\cite{rizk:forkjoin} considering general arrival and service processes. Throughout this work, we consider only the case of homogeneous servers, i.e., all servers have identical service time distribution. Heterogeneous servers can be dealt with in the same way by a notational extension. We show results for heterogeneous servers and load balancing in~\cite{fidler:forkjoin}.
\subsection{Notation and Queueing Model}
We label jobs in the order of arrival by $n \ge 1$ and let $A(n)$ denote the time of arrival of job $n$. It follows for $n \ge m \ge 1$ that $A(n) \ge A(m) \ge 0$. For notational convenience, we define $A(0) = 0$. Further, we let $A(m,n) = A(n) - A(m)$ be the time between the arrival of job $m$ and job $n$ for $n \ge m \ge 1$. Hence, $A(n,n+1)$ is the inter-arrival time between job $n$ and job $n+1$ for $n \ge 1$. Similarly, $D(n)$ denotes departure times. To model systems, we adapt the definition of g-server from~\cite[Def. 6.3.1]{chang:performanceguarantees} using a notion of service process $S(m,n)$.
\begin{definition}[Max-plus server]
\label{def:maxplusserviceprocess}
A system with arrivals $A(n)$ and departures $D(n)$ is an $S(m,n)$ server under the max-plus algebra if it holds for all $n \ge 1$ that
\begin{equation*}
D(n) \le \max_{m \in [1,n]} \{ A(m) + S(m,n) \} .
\end{equation*}
It is an exact $S(m,n)$ server if it holds for all $n \ge 1$ that
\begin{equation*}
D(n) = \max_{m \in [1,n]} \{ A(m) + S(m,n) \} .
\end{equation*}
\end{definition}

The following Lem.~\ref{lem:exactmaxplusserviceprocess} shows that the general class of work-conserving systems satisfy the definition of exact server. We use $V(n)$ to denote the time at which job $n$ starts service.
\begin{lemma}[Work-conserving system]
\label{lem:exactmaxplusserviceprocess}
Consider a lossless, work-conserving, first-in first-out system and let $L(n)$ denote the service time of job $n$, where $n \ge 1$. Define for $n \ge m \ge 1$
\begin{equation*}
S(m,n) = \sum_{\nu=m}^n L(\nu) .
\end{equation*}
The system is an exact $S(m,n)$ server.
\end{lemma}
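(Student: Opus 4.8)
The plan is to establish the recursion that governs departures in a lossless, work-conserving, FIFO system, and then unroll it. First I would write down the standard Lindley-type recursion for the start-of-service times: because the system is work-conserving and FIFO, job $n$ begins service as soon as both (a) it has arrived, i.e.\ at time $A(n)$, and (b) the server has finished the previous job, i.e.\ at time $D(n-1)$. Hence $V(n) = \max\{A(n), D(n-1)\}$, and since the service time of job $n$ is $L(n)$ and the system is lossless, $D(n) = V(n) + L(n) = \max\{A(n), D(n-1)\} + L(n)$, with the convention $D(0) = 0 \le A(1)$ so that the base case is consistent.

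Next I would solve this recursion by backward substitution. Expanding $D(n) = \max\{A(n) + L(n),\, D(n-1) + L(n)\}$ and repeatedly substituting the expression for $D(n-1)$, $D(n-2)$, and so on down to $D(0) = 0$, the $\max$ distributes over addition (this is exactly the max-plus linearity being invoked) and yields
\begin{equation*}
D(n) = \max_{m \in [1,n]} \Bigl\{ A(m) + \sum_{\nu = m}^{n} L(\nu) \Bigr\}
= \max_{m \in [1,n]} \{ A(m) + S(m,n) \},
\end{equation*}
which is precisely the exact-server identity of Definition~\ref{def:maxplusserviceprocess} with the stated $S(m,n)$. A clean way to present this rigorously is by induction on $n$: the base case $n = 1$ gives $D(1) = \max\{A(1), 0\} + L(1) = A(1) + S(1,1)$; for the inductive step, substitute the induction hypothesis for $D(n-1)$ into $D(n) = \max\{A(n), D(n-1)\} + L(n)$, add $L(n)$ inside the maximum, note that $A(n) + L(n) = A(n) + S(n,n)$ is the $m = n$ term, and combine to get the maximum over $m \in [1,n]$.

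The only real subtlety, rather than obstacle, is justifying the start-of-service recursion $V(n) = \max\{A(n), D(n-1)\}$ from the three hypotheses: FIFO ensures the server processes jobs in arrival order so that job $n$ cannot start before job $n-1$ departs; work-conserving ensures the server does not idle while job $n$ is waiting, so job $n$ starts at the \emph{earliest} such time rather than later; and lossless ensures every job is eventually served with its full requirement $L(n)$, so $D(n) = V(n) + L(n)$ exactly. Once this recursion is in hand, the remainder is the routine max-plus unrolling above, and since equality holds at every step the system is an \emph{exact} $S(m,n)$ server, not merely an $S(m,n)$ server.
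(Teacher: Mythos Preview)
Your proposal is correct and is essentially identical to the paper's proof: the paper writes the recursion as $V(n) = \max\{A(n), V(n-1) + L(n-1)\}$ with $V(1)=A(1)$, which is the same as your $V(n) = \max\{A(n), D(n-1)\}$ since $D(n-1) = V(n-1) + L(n-1)$, and then unrolls it recursively before adding $L(n)$ to obtain $D(n)$. The only cosmetic difference is that the paper phrases the unrolling as ``recursive insertion'' rather than explicit induction.
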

\begin{proof}
Since the system is lossless, work-conserving, and serves jobs in first-in first-out order, job $n \ge 2$ starts service at
\begin{equation}
V(n) = \max \{ A(n), V(n-1) + L(n-1) \} ,
\label{eq:starttime}
\end{equation}
and job 1 at $V(1) = A(1)$. By recursive insertion of~\eqref{eq:starttime} we have
\begin{equation}
V(n) = \max_{m \in [1,n]} \left\{ A(m) + \sum_{\nu=m}^{n-1} L(\nu) \right\} ,
\label{eq:starttimesolved}
\end{equation}
for $n \ge 1$. Since $D(n) = V(n) + L(n)$, it follows with~\eqref{eq:starttimesolved} that $D(n) = \max_{m \in [1,n]} \{A(m) + \sum_{\nu=m}^{n} L(\nu)\}$, which proves that the work-conserving system is an exact max-plus server.
\end{proof}

For the sojourn time of job $n \ge 1$, defined as $T(n) = D(n) - A(n)$, it follows by insertion of Def.~\ref{def:maxplusserviceprocess} that
\begin{equation}
T(n) = \max_{m \in [1,n]} \{ S(m,n) - A(m,n) \} .
\label{eq:sojourntime}
\end{equation}
The waiting time of job $n \ge 1$ is $W(n) = V(n) - A(n)$. As in the case of work-conserving systems in Lem.~\ref{lem:exactmaxplusserviceprocess}, $V(n) = \max \{A(n), D(n-1)\}$, so we have $W(n) = [D(n-1) - A(n)]^+$, where $[X]^+ = \max\{X,0\}$ is the non-negative part and $D(0)=0$ by definition. With Def.~\ref{def:maxplusserviceprocess}, it holds that
\begin{equation}
W(n) = \biggl[ \sup_{m \in [1,n-1]} \{ S(m,n-1) - A(m,n) \} \biggr]^+ .
\label{eq:waitingtime}
\end{equation}
Here, we use the supremum since for $n=1$ \eqref{eq:waitingtime} evaluates to an empty set. For non-negative real numbers the $\sup$ of an empty set is zero. While we used the definition of an exact server to derive an expression for the sojourn and waiting times, we note that the upper bound specified by the definition of a server is usually sufficient, as it provides upper bounds of sojourn and waiting times.
%
%
\subsection{Statistical Performance Bounds}
Next, we derive statistical performance bounds for servers as defined above. Throughout the paper, we generally assume that the arrival and service processes are independent of each other. Considering general arrival and service processes, the server is a G$\mid$G$\mid$1 queue. The results enable us to generalize recent findings obtained for iid service times, i.e., for a GI service model, in~\cite{rizk:forkjoin}.

We consider arrival and service processes that belong to the broad class of $(\sigma,\rho)$-constrained processes~\cite{chang:performanceguarantees}, that are characterized by affine bounding functions of the moment generating function (MGF). The MGF of a random variable $X$ is defined as $\mathsf{M}_X(\theta) =\mathsf{E}\bigl[e^{\theta X}\bigr]$ where $\theta$ is a free parameter. The following definition adapts~\cite{chang:performanceguarantees} to max-plus systems.
\begin{definition}
\label{def:sigmarho}
An arrival process is $(\sigma_A,\rho_A)$-lower constrained if for all $n \ge m \ge 1$ and $\theta > 0$ it holds that
\begin{equation*}
\mathsf{E}\Bigl[e^{-\theta A(m,n)}\Bigr] \le e^{-\theta (\rho_A(-\theta) (n-m) - \sigma_A(-\theta))} .
\end{equation*}
Similarly, a service process is $(\sigma_S,\rho_S)$-upper constrained if for all $n \ge m \ge 1$ and $\theta > 0$ it holds that
\begin{equation*}
\mathsf{E}\Bigl[e^{\theta S(m,n)}\Bigr] \le e^{\theta(\rho_S(\theta) (n-m+1) + \sigma_S(\theta))} .
\end{equation*}
\end{definition}
Considering the service times of jobs as in Lem.~\ref{lem:exactmaxplusserviceprocess}, we also apply Def.~\ref{def:sigmarho} to characterize the cumulative service process $L(m,n) = \sum_{\nu=m}^n L(\nu)$ by $(\sigma_L,\rho_L)$.

In the special case of GI arrival processes, $A(m,n) = \sum_{\nu=m}^{n-1} A(\nu,\nu+1)$ has iid inter-arrival times $A(\nu,\nu+1)$. It follows that $\mathsf{E}\bigl[e^{-\theta A(\nu,\nu+1)}\bigr] = \mathsf{E}\bigl[e^{-\theta A(1,2)}\bigr]$ for $\nu \ge 1$. Next, we use that the MGF of a sum of independent random variables is the product of their individual MGFs, i.e., $\mathsf{M}_{X+Y}(\theta) = \mathsf{M}_{X}(\theta)\mathsf{M}_{Y}(\theta)$ to derive minimal traffic parameters from Def.~\ref{def:sigmarho} as $\sigma_A(-\theta) = 0$ and
\begin{equation}
\rho_A(-\theta) = -\frac{1}{\theta} \ln \mathsf{E}\Bigl[e^{-\theta A(1,2)}\Bigr] .
\label{eq:arrivalparameter}
\end{equation}
Similarly for GI service processes, $S(m,n) = \sum_{\nu=m}^n S(\nu)$ is composed of iid service increments $S(\nu)$ that have minimal parameters $\sigma_S(\theta) = 0$ and
\begin{equation}
\rho_S(\theta) = \frac{1}{\theta} \ln \mathsf{E}\Bigl[e^{\theta S(1)}\Bigr].
\label{eq:serviceparameter}
\end{equation}
Parameter $\rho_A(-\theta)$ decreases with $\theta > 0$ from the mean to the minimum inter-arrival time and $\rho_S(\theta)$ increases with $\theta > 0$ from the mean to the maximum service time.
\begin{theorem}[Statistical performance bounds]
\label{th:gg1}
Consider a server as in Def.~\ref{def:maxplusserviceprocess}, with arrival and service parameters $(\sigma_A(-\theta),\rho_A(-\theta))$ and $(\sigma_S(\theta),\rho_S(\theta))$ as specified by Def.~\ref{def:sigmarho}. For $n \ge 1$, the sojourn time $T(n) = D(n) - A(n)$ satisfies
\begin{equation*}
\mathsf{P} [ T(n) > \tau ] \le \alpha e^{\theta \rho_{S}(\theta)} e^{-\theta \tau} ,
\end{equation*}
and the waiting time $W(n) = [D(n-1) - A(n)]^+$ satisfies
\begin{equation*}
\mathsf{P} \left[ W(n) > \tau \right] \le \alpha e^{-\theta \tau} .
\end{equation*}
In the case of G$\mid$G arrival and service processes, the free parameter $\theta > 0$ has to satisfy $\rho_{S}(\theta) < \rho_A(-\theta)$ and
\begin{equation*}
\alpha = \frac{e^{\theta(\sigma_A(-\theta) + \sigma_{S}(\theta))}}{1-e^{-\theta (\rho_A(-\theta)-\rho_{S}(\theta))}} .
\end{equation*}
In the special case of GI$\mid$GI arrival and service processes, $\theta > 0$ has to satisfy $\rho_{S}(\theta) \le \rho_A(-\theta)$ and $\alpha=1$.
\end{theorem}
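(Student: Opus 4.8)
The plan is to establish the G$\mid$G bounds by a Chernoff argument combined with a union bound over the maximum, and then to sharpen the prefactor to $\alpha=1$ in the GI$\mid$GI case by a time-reversal plus martingale argument.

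For the G$\mid$G sojourn bound I would start from $T(n)\le\max_{m\in[1,n]}\{S(m,n)-A(m,n)\}$ (cf.~\eqref{eq:sojourntime}), which follows from Def.~\ref{def:maxplusserviceprocess}. Applying Chernoff's bound $\mathsf{P}[T(n)>\tau]\le e^{-\theta\tau}\mathsf{E}[e^{\theta T(n)}]$ and bounding the expectation of a maximum of nonnegative terms by the sum of their expectations gives $\mathsf{E}[e^{\theta T(n)}]\le\sum_{m=1}^n\mathsf{E}[e^{\theta S(m,n)}]\,\mathsf{E}[e^{-\theta A(m,n)}]$, where the factorization uses the assumed independence of the arrival and service processes. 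Inserting the envelopes of Def.~\ref{def:sigmarho}, substituting $k=n-m$, and recognizing a geometric series with ratio $e^{-\theta(\rho_A(-\theta)-\rho_S(\theta))}<1$ — this is exactly where $\rho_S(\theta)<\rho_A(-\theta)$ is needed, to guarantee summability — yields $\mathsf{E}[e^{\theta T(n)}]\le\alpha\,e^{\theta\rho_S(\theta)}$, hence the claimed tail bound. The waiting-time bound follows the same way from~\eqref{eq:waitingtime}: for $\tau>0$ the operator $[\cdot]^+$ is irrelevant, so $\mathsf{P}[W(n)>\tau]=\mathsf{P}[\sup_{m\in[1,n-1]}\{S(m,n-1)-A(m,n)\}>\tau]$ (the bound being trivial for $\tau\le0$ and for $n=1$, where the supremum is over the empty set), and the same Chernoff/union-bound/geometric-series computation, now with $n-m$ matching service increments and inter-arrival times, removes the extra $e^{\theta\rho_S(\theta)}$ factor and leaves the MGF of the supremum bounded by $\alpha$.

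For the GI$\mid$GI refinement I would exploit that with $\sigma_A=\sigma_S=0$ the quantity inside the supremum in~\eqref{eq:waitingtime} is a partial sum of the iid increments $Y_\nu=S(\nu)-A(\nu,\nu+1)$, namely $S(m,n-1)-A(m,n)=\sum_{\nu=m}^{n-1}Y_\nu$. Reversing the index, $W(n)$ has the same distribution as $\max_{0\le j\le n-1}\sum_{i=1}^j\tilde Y_i$ with iid $\tilde Y_i$, the $j=0$ term accounting for the $[\cdot]^+$. Under the relaxed condition $\rho_S(\theta)\le\rho_A(-\theta)$ we have $\mathsf{E}[e^{\theta\tilde Y_1}]=e^{\theta(\rho_S(\theta)-\rho_A(-\theta))}\le1$, so $Z_j=\exp(\theta\sum_{i=1}^j\tilde Y_i)$ is a nonnegative supermartingale with $Z_0=1$, and Doob's maximal inequality gives $\mathsf{P}[W(n)>\tau]\le\mathsf{P}[\max_j Z_j\ge e^{\theta\tau}]\le e^{-\theta\tau}$, i.e.\ $\alpha=1$. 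For the sojourn time I would then write $T(n)\le S(n)+M$, where $M=\max\{0,\max_{m\in[1,n-1]}\sum_{\nu=m}^{n-1}Y_\nu\}$ is independent of the last service increment $S(n)$ and distributed as the maximum above, condition on $S(n)$, and use $\mathsf{P}[M>t]\le e^{-\theta t}$ (trivial for $t\le0$) to obtain $\mathsf{P}[T(n)>\tau]\le\mathsf{E}[e^{-\theta(\tau-S(n))}]=e^{\theta\rho_S(\theta)}e^{-\theta\tau}$ via~\eqref{eq:serviceparameter}.

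The main obstacle is the GI$\mid$GI improvement: the crude union bound is lossy by precisely the geometric-series factor $1/(1-e^{-\theta(\rho_A(-\theta)-\rho_S(\theta))})$, so recovering $\alpha=1$ together with the relaxation that allows $\rho_S(\theta)=\rho_A(-\theta)$ genuinely requires replacing it with the martingale/maximal-inequality argument and the time reversal that turns the ``backward'' maximum in~\eqref{eq:waitingtime} into a maximum of forward partial sums. The remaining ingredients — the Chernoff bounds, the independence factorizations, the substitution $k=n-m$, and the geometric summation — are routine.
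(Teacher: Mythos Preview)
Your proposal is correct and matches the paper's proof: the G$\mid$G case is handled by Chernoff plus union bound and a geometric series, and the GI$\mid$GI improvement by a supermartingale/Doob argument on the time-reversed increments. The only cosmetic difference is that the paper applies the supermartingale argument directly to $T(n)$, taking $U(m)=e^{\theta(S(n-m+1,n)-A(n-m+1,n))}$ so that $\mathsf{E}[U(1)]=\mathsf{E}[e^{\theta S(n)}]=e^{\theta\rho_S(\theta)}$ already contains the extra service factor, whereas you first prove the waiting-time bound with $Z_0=1$ and then recover the sojourn bound via the decomposition $T(n)=S(n)+W(n)$ and conditioning; the two arguments are equivalent.
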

\begin{figure}
  \centering
  \includegraphics[width=0.75\columnwidth]{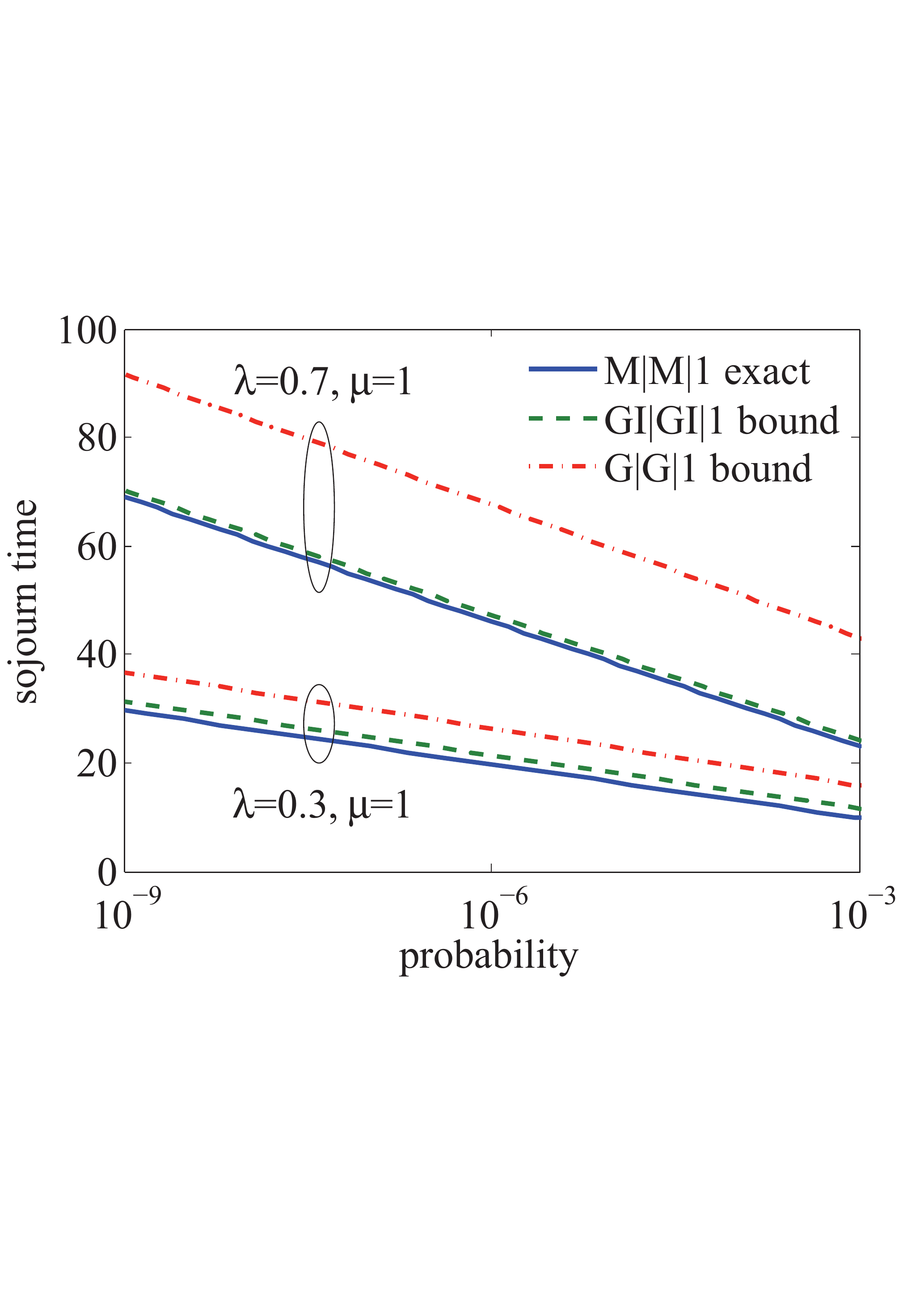}
  \caption{M$\mid$M$\mid$1 queue. The bounds show the correct exponential tail decay.}
  \label{fig:backlog_qt}
\end{figure}
The proof is provided in the appendix. For the special case of GI$\mid$GI arrival and service processes, Th.~\ref{th:gg1} recovers the classical bound for the waiting time of GI$\mid$GI$\mid$1 queues~\cite{kingman:gg1} in the max-plus system theory. Like~\cite{kingman:gg1}, the proof uses Doob's martingale inequality~\cite{doob:stochasticprocesses}. The proof for the G$\mid$G arrival and service processes adapts the approach from~\cite{chang:performanceguarantees, fidler:momentcalculus} to max-plus systems. The important property of the G$\mid$G result is that it differs only by a constant factor $\alpha$ from the GI$\mid$GI result and otherwise recovers the characteristic exponential tail decay $e^{-\theta \tau}$ with the same maximal decay rate $\theta$.
\paragraph*{M$\mid$M$\mid$1 Queue}
For evaluation of the accuracy of the bounds in Th.~\ref{th:gg1}, we consider the basic case of an M$\mid$M$\mid$1 queue, where exact results are available for comparison. Given iid exponential inter-arrival and service times with parameters $\lambda$ and $\mu$, respectively, \eqref{eq:arrivalparameter} and \eqref{eq:serviceparameter} evaluate to
\begin{equation}
\rho_A(-\theta) = -\frac{1}{\theta} \ln \left(\frac{\lambda}{\lambda+\theta}\right) ,
\label{eq:expoarrivalparameter}
\end{equation}
for $\theta > 0$, and
\begin{equation}
\rho_{S}(\theta) = \frac{1}{\theta} \ln \left(\frac{\mu}{\mu-\theta}\right) ,
\label{eq:exposerviceparameter}
\end{equation}
for $\theta \in (0,\mu)$. From the condition $\rho_{S}(\theta) \le \rho_A(-\theta)$ it follows that $\theta \le \mu-\lambda$ under the stability condition $\lambda < \mu$. By the choice of the maximal $\theta = \mu - \lambda$ we have from Th.~\eqref{th:gg1} that
\begin{equation}
\mathsf{P} \left[ T(n) > \tau \right] \le \frac{\mu}{\lambda} e^{-(\mu-\lambda) \tau} .
\label{eq:mm1bound}
\end{equation}

Compared to the exact distribution of the sojourn time of the M$\mid$M$\mid$1 queue, that is $\mathsf{P} \left[ T(n) > \tau \right] = e^{-(\mu-\lambda) \tau}$ see, e.g.,~\cite{adan:queueingsystems}, \eqref{eq:mm1bound} has the same tail decay and differs only by the pre-factor $\mu/\lambda$. Obviously, the bound becomes tighter if the utilization is high, in which case $\mu/\lambda$ approaches one.

In Fig.~\ref{fig:backlog_qt}, we illustrate the bounds from Th.~\ref{th:gg1} compared to the exact M$\mid$M$\mid$1 result. Clearly, the curves show the same tail decay, where the GI$\mid$GI$\mid$1 bound provides better numerical accuracy compared to the G$\mid$G$\mid$1 bound that does not use independence of the increment processes and hence has parameter $\alpha > 1$. In the case of the G$\mid$G$\mid$1 bound, the parameter $\theta$ is optimized numerically to obtain the smallest delay bound.

For numerical evaluation, we will frequently use M$\mid$M arrival and service processes as specified by \eqref{eq:expoarrivalparameter} and \eqref{eq:exposerviceparameter}. We note that Th.~\ref{th:gg1} provides results for G$\mid$G arrival and service processes by substitution of the MGFs of the respective processes into Def.~\ref{def:sigmarho}.
\subsection{Fork-Join Systems}
\label{sec:forkjoinidling}
In a fork-join system, each job $n \ge 1$ is composed of $k$ tasks with service times $Q_i(n)$ for $i \in [1,k]$, i.e., the service requirements of the tasks may differ from each other and may or may not be independent. The tasks are distributed to $k$ parallel servers (fork) and once all tasks of a job are served, the job leaves the system (join), see Fig.~\ref{fig:fjqueue}. The parallel servers are not synchronized; i.e., server $i$ starts serving task $i$ of job $n+1$ (assuming it is already in the system), once it finishes serving task $i$ of job $n$, which departs from server $i$ at $D_i(n)$. Job $n$ has finished service once all of its tasks $i \in [1,k]$ have finished service. The following lemma shows that fork-join systems are servers under the max-plus algebra. After estimating the MGF of the respective service process, performance bounds are obtained.
\begin{lemma}[Fork-join system]
\label{lem:forkjoin}
Consider a fork-join system with $k$ parallel servers as in Lem.~\ref{lem:exactmaxplusserviceprocess}. Let $Q_i(n)$ denote the service time of task $i$ of job $n$ where $i \in [1,k]$ and $n \ge 1$. Define for $n \ge m \ge 1$
\begin{equation*}
S(m,n) = \max_{i \in [1,k]} \Biggl\{ \sum_{\nu=m}^n Q_i(\nu) \Biggr\} .
\end{equation*}
The system is an exact $S(m,n)$ server.
\end{lemma}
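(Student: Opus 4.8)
The plan is to reduce the claim to Lem.~\ref{lem:exactmaxplusserviceprocess} applied separately to each of the $k$ parallel servers, and then to combine the per-server departure times through the join operation. First I would observe that server $i$, taken in isolation, is a lossless, work-conserving, first-in first-out system whose arrivals are the job arrivals $A(n)$ (the fork hands task $i$ of job $n$ to server $i$ at time $A(n)$) and whose service time for the $n$-th task is $Q_i(n)$. Hence Lem.~\ref{lem:exactmaxplusserviceprocess} applies with $S_i(m,n) = \sum_{\nu=m}^n Q_i(\nu)$, giving the exact departure time of task $i$ of job $n$ from server $i$ as
\[
D_i(n) = \max_{m\in[1,n]}\Bigl\{ A(m) + \sum_{\nu=m}^n Q_i(\nu) \Bigr\}.
\]

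Next I would use the defining property of the join: job $n$ leaves the system precisely when the last of its $k$ tasks has finished, so $D(n) = \max_{i\in[1,k]} D_i(n)$. Substituting the expression for $D_i(n)$ and swapping the two maxima --- which is valid because maximizing over the product index set $[1,k]\times[1,n]$ can be carried out in either order, and $A(m)$ does not depend on $i$ --- yields
\[
D(n) = \max_{i\in[1,k]} \max_{m\in[1,n]}\Bigl\{ A(m) + \sum_{\nu=m}^n Q_i(\nu) \Bigr\}
     = \max_{m\in[1,n]}\Bigl\{ A(m) + \max_{i\in[1,k]}\sum_{\nu=m}^n Q_i(\nu) \Bigr\},
\]
which is exactly $D(n) = \max_{m\in[1,n]}\{A(m)+S(m,n)\}$ with the stated $S(m,n)$. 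By Def.~\ref{def:maxplusserviceprocess} this establishes that the fork-join system is an exact $S(m,n)$ server.

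I do not expect a serious obstacle here: the core of the argument is just the commutation of two finite maxima. The points that deserve a word of care are the modeling step $D(n) = \max_i D_i(n)$, i.e., that the join introduces no delay of its own and only enforces synchronization, and the observation that each parallel server inherits the hypotheses of Lem.~\ref{lem:exactmaxplusserviceprocess} driven by the common arrival stream $A(n)$. It is also worth remarking that, since every $D_i(\cdot)$ is non-decreasing in $n$ (FIFO at each server), $D(n) = \max_i D_i(n)$ is non-decreasing in $n$ as well, so the overall fork-join system is again first-in first-out, consistent with the max-plus server picture; no assumption on the joint distribution of the $Q_i(n)$ across servers or jobs is needed for this structural result.
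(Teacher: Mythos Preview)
Your proof is correct and follows essentially the same route as the paper: apply Lem.~\ref{lem:exactmaxplusserviceprocess} to each server to get $D_i(n)=\max_{m\in[1,n]}\{A(m)+\sum_{\nu=m}^n Q_i(\nu)\}$, take $D(n)=\max_i D_i(n)$ from the join, and interchange the two finite maxima. The paper's argument differs only in the order of presentation (it states the join identity first and invokes Lem.~\ref{lem:exactmaxplusserviceprocess} last), and your extra remarks on FIFO preservation and distributional assumptions are sound but not needed for the lemma itself.
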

\begin{proof}
Since a job departs from the system once all of its tasks $i \in [1,k]$ are completed, we have for $n \ge 1$ that
\begin{equation}
D(n) = \max_{i \in [1,k]} \{ D_i(n) \} .
\label{eq:departuresforkjoin}
\end{equation}
By insertion of Def.~\ref{def:maxplusserviceprocess} for each of the servers $i \in [1,k]$, it follows that
\begin{equation*}
D(n) = \max_{i \in [1,k]} \left\{ \max_{m \in [1,n]} \{A(m) + S_i(m,n) \} \right\}  .
\end{equation*}
After reordering the maxima
\begin{equation*}
D(n) = \max_{m \in [1,n]} \left\{A(m) + \max_{i \in [1,k]} \{ S_i(m,n) \} \right\} ,
\end{equation*}
we conclude that the fork-join system is an exact $S(m,n) = \max_{i \in [1,k]} \{ S_i(m,n) \}$ server. In the last step, we invoke Lem.~\ref{lem:exactmaxplusserviceprocess} with $Q_i(n)$ for each of the servers $i \in [1,k]$.
\end{proof}
Next, we estimate the MGF of the service process $S(m,n)$ from Lem~\ref{lem:forkjoin} for $n \ge m \ge 1$ by
\begin{equation*}
\mathsf{E}\Bigl[e^{\theta S(m,n)}\Bigr] \le \sum_{i=1}^k \mathsf{E}\Bigl[e^{\theta \sum_{\nu=m}^n Q_i(\nu)}\Bigr] .
\end{equation*}
Assuming homogeneous tasks with parameters $(\sigma_{Q}(\theta),\rho_{Q}(\theta))$ for $i \in [1,k]$, it follows by insertion of Def.~\ref{def:sigmarho} that
\begin{align*}
\mathsf{E}\Bigr[e^{\theta S(m,n)}\Bigl] \le&  k e^{\theta (\sigma_{Q}(\theta) + \rho_{Q}(\theta) (n-m+1))} .
\end{align*}
This shows that the service process of the fork-join system has parameters
\begin{equation}
\sigma_S(\theta) = \sigma_{Q}(\theta) + \frac{\ln k}{\theta} ,
\label{eq:forkjoinsigma}
\end{equation}
and
\begin{equation}
\rho_S(\theta) = \rho_{Q}(\theta) .
\label{eq:forkjoinrho}
\end{equation}
Performance bounds follow as a corollary of Th.~\ref{th:gg1}.
\begin{corollary}[Fork-join system]
\label{cor:forkjoin}
Consider a fork-join system as in Lem.~\ref{lem:forkjoin}, with arrival and service parameters $(\sigma_A(-\theta),\rho_A(-\theta))$ and $(\sigma_{Q}(\theta),\rho_{Q}(\theta))$ as specified by Def.~\ref{def:sigmarho}. For $n \ge 1$, the sojourn time satisfies
\begin{equation*}
\mathsf{P}[ T(n) > \tau ] \le k \alpha e^{\theta \rho_{Q}(\theta)} e^{-\theta\tau} ,
\end{equation*}
and the waiting time of the task that starts service last
\begin{equation*}
\mathsf{P} \left[ W(n) > \tau \right] \le k \alpha e^{-\theta \tau} .
\end{equation*}
In the case of G$\mid$G arrival and service processes, the free parameter $\theta > 0$ has to satisfy $\rho_{Q}(\theta) < \rho_A(-\theta)$ and
\begin{equation*}
\alpha = \frac{e^{\theta(\sigma_A(-\theta) + \sigma_{Q}(\theta))}}{1-e^{-\theta (\rho_A(-\theta)-\rho_{Q}(\theta))}} .
\end{equation*}
In the special case of GI$\mid$GI arrival and service processes, $\theta > 0$ has to satisfy $\rho_{Q}(\theta) \le \rho_A(-\theta)$ and $\alpha=1$.
\end{corollary}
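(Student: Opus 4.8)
The plan is to derive the corollary directly from Theorem~\ref{th:gg1}, exploiting that Lemma~\ref{lem:forkjoin} already certifies the fork-join system as an exact max-plus server and that the display preceding the corollary computes its service parameters $\sigma_S(\theta)$ in~\eqref{eq:forkjoinsigma} and $\rho_S(\theta)$ in~\eqref{eq:forkjoinrho}. For the G$\mid$G case I would simply substitute these into the statement of Theorem~\ref{th:gg1}. The stability condition $\rho_S(\theta) < \rho_A(-\theta)$ becomes $\rho_Q(\theta) < \rho_A(-\theta)$ verbatim since $\rho_S(\theta) = \rho_Q(\theta)$, and the factor $e^{\theta \rho_S(\theta)}$ in the sojourn bound becomes $e^{\theta \rho_Q(\theta)}$. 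The only point that needs care is the prefactor: because $\sigma_S(\theta) = \sigma_Q(\theta) + (\ln k)/\theta$, we have $e^{\theta \sigma_S(\theta)} = k\, e^{\theta \sigma_Q(\theta)}$, so the $\alpha$ of Theorem~\ref{th:gg1} equals $k$ times the $\alpha$ written in the corollary; this is exactly the extra factor $k$ that appears in both tail bounds.

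For the GI$\mid$GI case the aggregate route is unavailable, because the fork-join service process $S(m,n) = \max_{i \in [1,k]} \sum_{\nu=m}^n Q_i(\nu)$ is a maximum of sums and therefore has no iid increments even when the $Q_i(\nu)$ are iid; hence the GI$\mid$GI branch of Theorem~\ref{th:gg1}, which yields $\alpha = 1$, cannot be invoked for $S(m,n)$ itself. Instead I would argue per server. By Lemma~\ref{lem:exactmaxplusserviceprocess} applied with the service times $Q_i(\nu)$, each parallel server $i$ is an exact max-plus server fed by the job arrivals $A(n)$; in the GI$\mid$GI setting this server has iid inter-arrival times and iid service increments, so Theorem~\ref{th:gg1} gives $\mathsf{P}[T_i(n) > \tau] \le e^{\theta \rho_Q(\theta)} e^{-\theta \tau}$ and $\mathsf{P}[W_i(n) > \tau] \le e^{-\theta \tau}$ under $\rho_Q(\theta) \le \rho_A(-\theta)$, where $W_i(n) = [D_i(n-1) - A(n)]^+$. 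From $D(n) = \max_{i \in [1,k]} D_i(n)$, established in~\eqref{eq:departuresforkjoin}, monotonicity of $x \mapsto x - A(n)$ and of $[\,\cdot\,]^+$ gives $T(n) = \max_{i} T_i(n)$ and $W(n) = \max_{i} W_i(n)$, the latter being the waiting time of the task that starts service last. A union bound over $i \in [1,k]$ then multiplies each tail by $k$, yielding the claimed bounds with $\alpha = 1$. The very same union-bound argument also reproduces the G$\mid$G case, since homogeneity of the servers makes all $k$ per-server prefactors equal, so they sum to $k\alpha$.

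I expect the only real subtlety to be bookkeeping rather than a genuine obstacle: I would make sure the two routes are consistent, that is, that the step $\mathsf{E}[e^{\theta \max_i X_i}] \le \sum_i \mathsf{E}[e^{\theta X_i}]$ used to obtain~\eqref{eq:forkjoinsigma} is precisely the union bound over servers in exponential form, so that the factor $k$ is counted once and not twice. I would also note explicitly that the free parameter $\theta$ and the stability constraint carry over unchanged from Theorem~\ref{th:gg1} because $\rho_S(\theta) = \rho_Q(\theta)$, and that homogeneity of the $k$ servers is what allows a single $\alpha$ to stand in for all per-server constants.
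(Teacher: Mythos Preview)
Your proposal is correct and matches the paper's proof essentially line for line: the G$\mid$G case is obtained by inserting~\eqref{eq:forkjoinsigma} and~\eqref{eq:forkjoinrho} into Th.~\ref{th:gg1}, and the GI$\mid$GI case is handled by applying Th.~\ref{th:gg1} to each server individually and taking a union bound over $i \in [1,k]$, exactly because the aggregate $S(m,n)$ lacks iid increments. The paper even makes the same remark you do (in a footnote) that the per-server union-bound route also yields the G$\mid$G result, the aggregate route being shown because it is the one that extends to multi-stage networks.
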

\begin{proof}
For G$\mid$G arrival and service processes, Cor.~\ref{cor:forkjoin} is obtained directly by insertion of \eqref{eq:forkjoinsigma} and \eqref{eq:forkjoinrho} into Th.~\ref{th:gg1}. We note that the waiting time of a job is defined to be that of its task that starts service last. This follows by insertion of \eqref{eq:departuresforkjoin} into the definition of waiting time $W(n) = [D(n-1) - A(n)]^+$.

As the increment process of $S(m,n)$ in Lem~\ref{lem:forkjoin} is non-trivial, we pursue a different approach\footnote{The approach applies also in the case of G$\mid$G arrival and service processes. We showed the alternative approach via \eqref{eq:forkjoinsigma} and \eqref{eq:forkjoinrho} nevertheless, as it extends to multi-stage fork-join networks, see Sec.~\ref{sec:multistage}.} to show the result for the special case of GI$\mid$GI arrival and service processes. With \eqref{eq:departuresforkjoin}, we derive the sojourn time $T(n) = D(n) - A(n)$ of job $n$ as $T(n) = \max_{i \in [1,k]} \{ D_i(n) - A(n) \}$ for $n \ge 1$. Hence, the sojourn time of job $n$ is expressed as a maximum $T(n) = \max_{i \in [1,k]} \{T_i(n) \}$ of the sojourn times $T_i(n) = D_i(n) - A(n)$ of the individual tasks $i \in [1,k]$ of job $n$. Since the individual servers satisfy Lem.~\ref{lem:exactmaxplusserviceprocess}, we can invoke Th.~\ref{th:gg1} for each of the servers $i \in [1,k]$ and use the union bound to obtain the result of Cor.~\ref{cor:forkjoin}. The waiting time $\max_{i \in [1,k]} \{ W_i(n) \}$ where $W_i(n)$ as in \eqref{eq:waitingtime} can be derived in the same way.
\end{proof}
We note that Cor.~\ref{cor:forkjoin} does not make an assumption of independence regarding the parallel servers. Indeed, independence cannot be assumed as the waiting and sojourn times of the individual servers depend on the same arrival process~\cite{baccelli:forkjoin, kemper:forkjoin}.

To investigate the scaling of fork-join systems with $k$ parallel servers, we first note that the stability condition $\rho_{Q}(\theta) < \rho_A(-\theta)$ does not depend on $k$. Hence, the maximal speed of the tail decay of the performance bounds $\theta$ is independent of $k$. Next, we equate the sojourn time bound from Cor.~\ref{cor:forkjoin} with $\varepsilon$ and solve for
\begin{equation}
\tau \le \rho_{Q}(\theta) + \frac{\ln k + \ln\alpha - \ln\varepsilon}{\theta} ,
\label{eq:forkjoinlogkgrowth}
\end{equation}
for $\theta > 0$ subject to the stability condition $\rho_{Q}(\theta) < \rho_A(-\theta)$. Eq.~\eqref{eq:forkjoinlogkgrowth} expresses a sojourn time bound that is exceeded at most with probability $\varepsilon$. It exhibits a growth with $\ln k$. The growth is larger for smaller $\theta$ corresponding to a higher utilization.

We include an example that demonstrates the quick estimation of the expected value from the sojourn time bound. By integration of the tail of the sojourn time from Cor.~\ref{cor:forkjoin} we have
\begin{align*}
\mathsf{E}[T(n)] = & \int_{0}^{\infty} \mathsf{P} [ T(n) > \tau ] \mathrm{d}\tau \\
\le & \int_{0}^{\tau^*} \mathrm{d}\tau + k \alpha e^{\theta \rho_{Q}(\theta)} \int_{\tau^*}^{\infty} e^{-\theta \tau} \mathrm{d}\tau,
\end{align*}
where we used that $\mathsf{P} [ T(n) > \tau ] \le 1$ to determine $\tau^* = \ln \bigl(k \alpha e^{\theta \rho_{Q}(\theta)}\bigr)/\theta$. It follows that the expected sojourn time
\begin{equation}
\mathsf{E} [ T(n) ] \le \rho_{Q}(\theta) +  \frac{\ln k + \ln \alpha + 1}{\theta}
\label{eq:homogeneousparallelexpectedsojourntime}
\end{equation}
is also limited by $\ln k$. The result applies for general arrival and service processes and generalizes the finding of $\ln k$ that is obtained in~\cite{baccelli:forkjoin, rizk:forkjoin} for iid service times.
\paragraph*{M$\mid$M tasks}
\begin{figure}
\centering
\subfigure[$(1-\varepsilon)$-quantile] {
\includegraphics[width=0.465\linewidth]{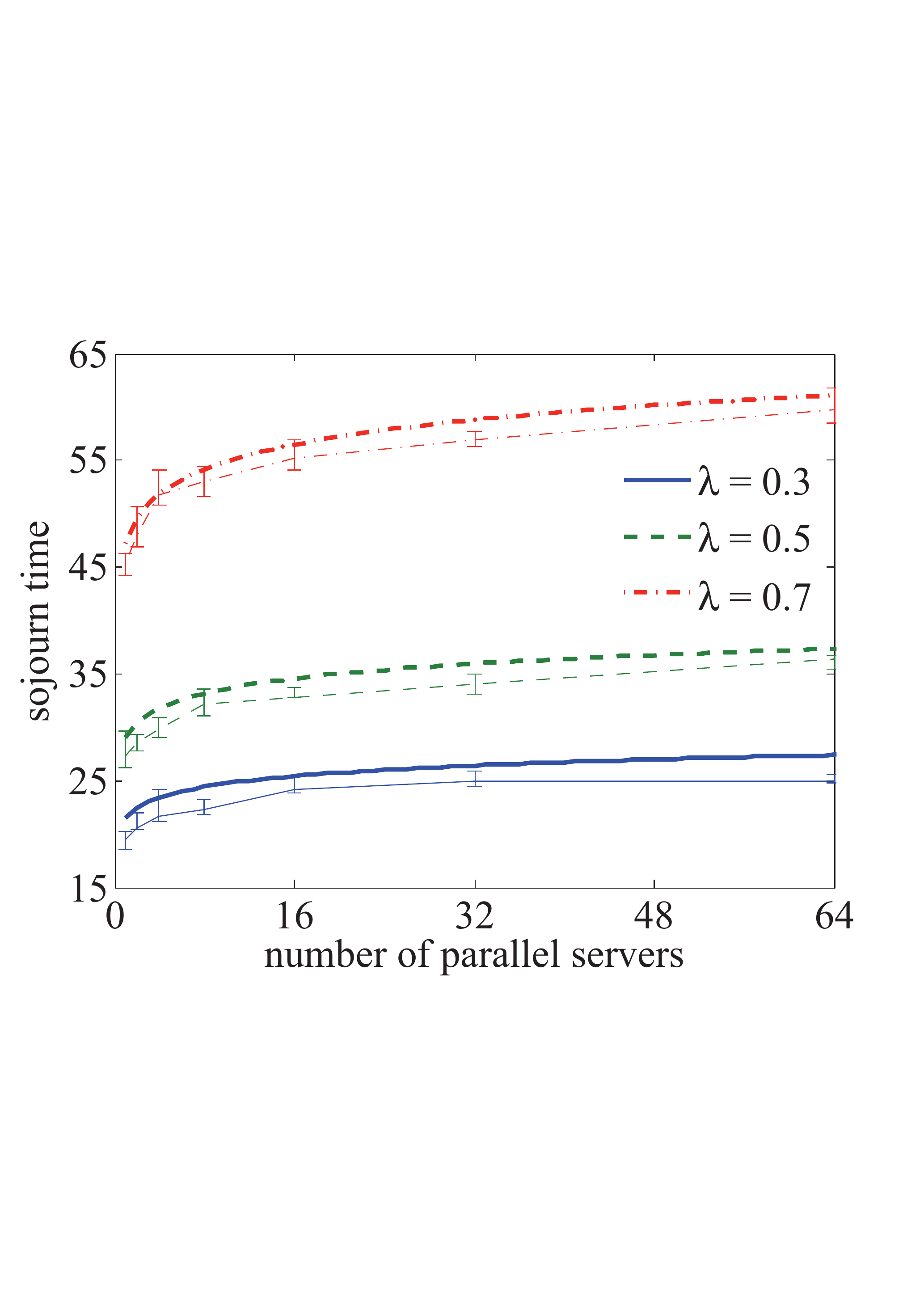}
}
\hfill
\subfigure[expected value] {
\includegraphics[width=0.465\linewidth]{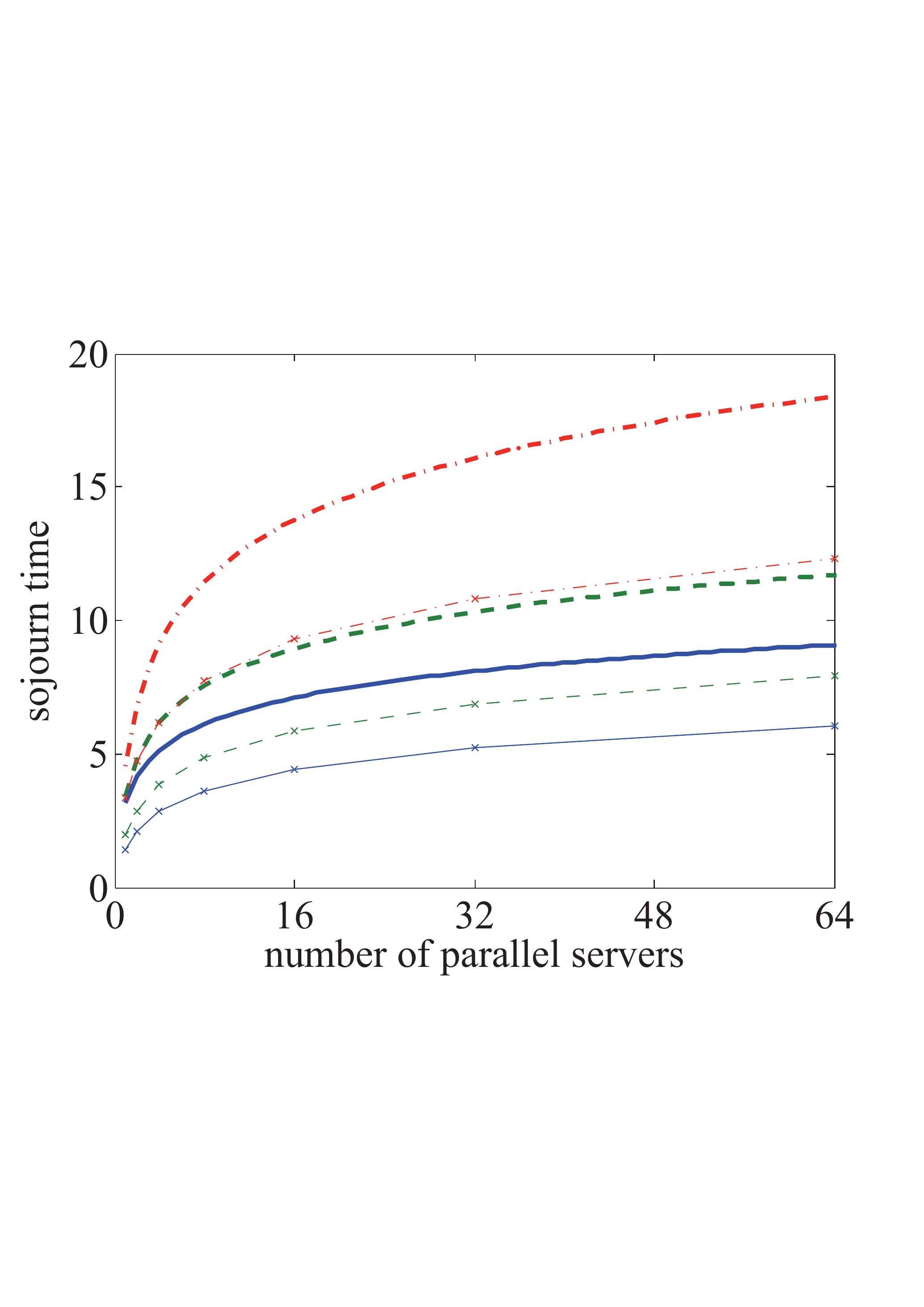}
}
\caption{Fork-join system. Analytical bounds (thick lines) and simulation results (thin lines). Sojourn times grow with $\ln k$ for $k$ servers.}
\label{fig:mm1forkjoin}
\end{figure}
In Fig.~\ref{fig:mm1forkjoin}, we consider a fork-join system with $k \ge 1$ parallel servers. Jobs have iid exponential inter-arrival times and are composed of $k$ tasks with iid exponential service times each. The parameters $\rho_A(-\theta)$ and $\rho_{Q}(\theta)$ for $i \in [1,k]$ are as specified by \eqref{eq:expoarrivalparameter} and \eqref{eq:exposerviceparameter} where we let $\mu=1$. We show bounds of the expected sojourn time and sojourn time quantiles $\tau$, where $\mathsf{P}[T(n) > \tau] \le \varepsilon$ and $\varepsilon = 10^{-6}$. The curves show the characteristic logarithmic growth with $k$. This is also confirmed in simulation results that agree well with the sojourn time bounds. The expectation of the sojourn time~\eqref{eq:homogeneousparallelexpectedsojourntime} is only a rough estimate, as anticipated.
\subsection{Split-Merge Systems}
Split-merge systems, see Fig.~\ref{fig:smqueue}, are a variant of fork-join systems where all tasks of a job have to start execution simultaneously. If a server $i$ finishes task $i$ of job $n$, it idles until all tasks $j \in [1,k]$ of that job are finished before any of the tasks of job $n+1$ if any starts.
\begin{lemma}[Split-merge system]
\label{lem:splitmerge}
Consider a split-merge system with $k$ parallel servers as in Lem.~\ref{lem:exactmaxplusserviceprocess}. Let $Q_i(n)$ denote the service time of task $i$ of job $n$ where $i \in [1,k]$ and $n \ge 1$. Define for $n \ge m \ge 1$
\begin{equation*}
S(m,n) = \sum_{\nu=m}^n \max_{i \in [1,k]} \{ Q_i(\nu) \} .
\end{equation*}
The system is an exact $S(m,n)$ server.
\end{lemma}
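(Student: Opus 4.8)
The plan is to reduce the split-merge system to an ordinary single-server work-conserving queue and then invoke Lem.~\ref{lem:exactmaxplusserviceprocess}. The key observation is that the split-merge synchronization constraint --- all $k$ tasks of a job start execution simultaneously --- forces the system to behave, at the level of whole jobs, exactly like a lossless, first-in first-out, work-conserving single-server queue whose per-job service time is the maximal task service time $L(n) := \max_{i \in [1,k]} \{ Q_i(n) \}$.

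First I would make this reduction precise. Since every task $i \in [1,k]$ of job $n$ starts at the common time $V(n)$ and departs server $i$ at $V(n) + Q_i(n)$, the join operation gives the job departure $D(n) = \max_{i \in [1,k]} \{ V(n) + Q_i(n) \} = V(n) + L(n)$. Because no task of job $n+1$ may start on any server before all tasks of job $n$ have completed, the start time of job $n+1$ obeys the same recursion~\eqref{eq:starttime} as in Lem.~\ref{lem:exactmaxplusserviceprocess}, namely $V(n+1) = \max \{ A(n+1), V(n) + L(n) \}$ with $V(1) = A(1)$, and the system is lossless and work-conserving in the sense that, viewed at the job level, it never idles while a job is waiting and no job is dropped. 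Applying Lem.~\ref{lem:exactmaxplusserviceprocess} to this equivalent single-server queue with service times $L(n)$ then yields that the system is an exact server with service process $\sum_{\nu=m}^n L(\nu) = \sum_{\nu=m}^n \max_{i \in [1,k]} \{ Q_i(\nu) \}$, which is precisely the claimed $S(m,n)$.

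The only delicate point --- the part to argue carefully rather than grind through --- is the justification of the reduction itself: one must check that the idling of an individual server $i$ that finishes its task early (waiting for the slower tasks of the same job, and then for the common start of the next job) does not break work-conservation at the job level, and that the first-in first-out order of jobs is preserved by the synchronized start. Once this modeling step is spelled out, the statement follows immediately from Lem.~\ref{lem:exactmaxplusserviceprocess}, in contrast to the fork-join case of Lem.~\ref{lem:forkjoin} where the maximum must instead be moved through the per-server max-plus representation.
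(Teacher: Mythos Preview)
Your proposal is correct and follows essentially the same approach as the paper: the paper's proof also writes the recursion $V(n) = \max\{A(n), V(n-1) + \max_{i \in [1,k]}\{Q_i(n-1)\}\}$ with $V(1)=A(1)$ and $D(n) = V(n) + \max_{i \in [1,k]}\{Q_i(n)\}$, then invokes the recursive insertion ``as in Lem.~\ref{lem:exactmaxplusserviceprocess}''. Your framing as an explicit reduction to a single-server queue with effective service time $L(n) = \max_{i}\{Q_i(n)\}$ is exactly the content of that step.
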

\begin{figure}
  \centering
  \includegraphics[width=0.77\columnwidth]{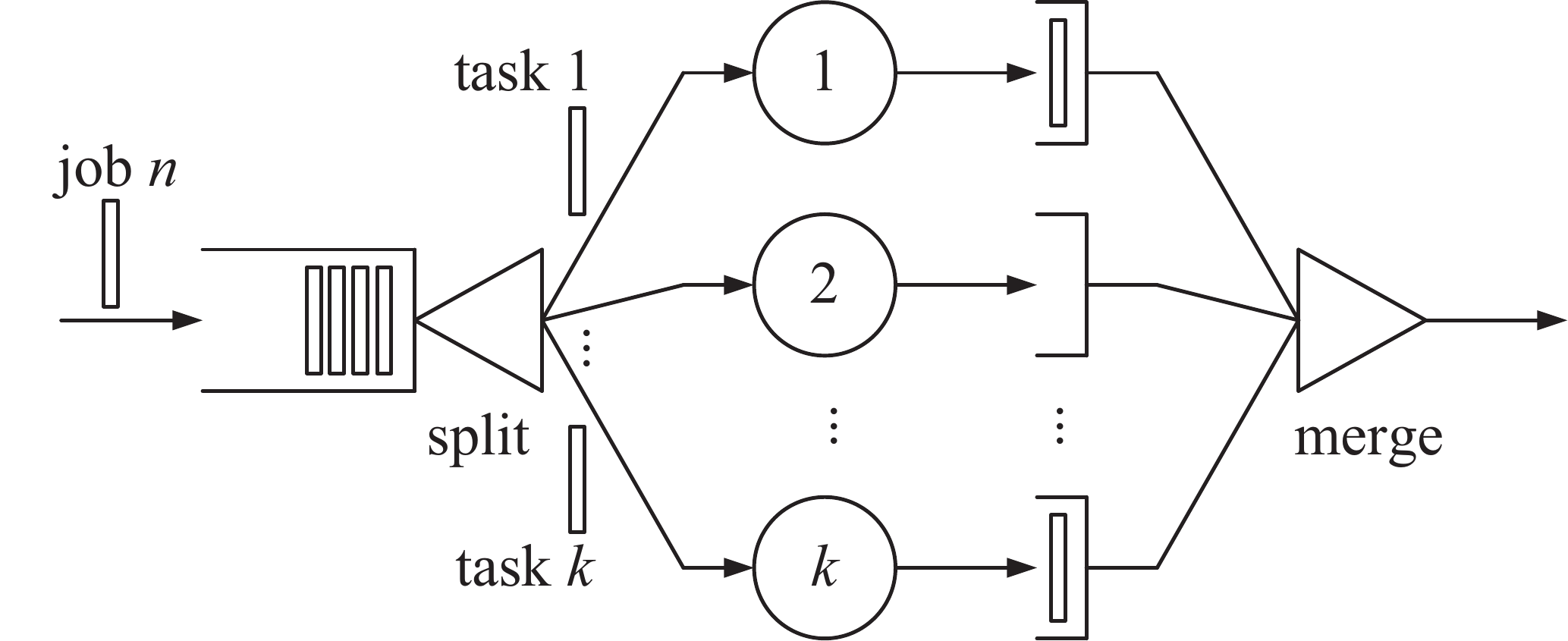}
  \caption{Split-merge system. Compared to the fork-join system, tasks have an additional synchronization constraint, i.e., the execution of the tasks of a job has to start at the same time.}
  \label{fig:smqueue}
\end{figure}
\begin{proof}
Since all tasks $i \in [1,k]$ of job $n$ start service simultaneously at $V(n)$ and only after all tasks of job $n-1$ have finished service, we have for $n \ge 2$ that
\begin{equation}
V(n) = \max \biggl\{A(n), V(n-1) + \max_{i \in [1,k]} \{Q_i(n-1)\}\biggr\},
\label{eq:starttimesplitmerge}
\end{equation}
Recursive insertion of \eqref{eq:starttimesplitmerge} with $V(1) = A(1)$ as in Lem.~\ref{lem:exactmaxplusserviceprocess}, and letting $D(n) = V(n) + \max_{i \in [1,k]} \{Q_i(n)\}$ completes the proof.
\end{proof}
Since Lem.~\ref{lem:splitmerge} proves that the split-merge system satisfies the definition of server Def.~\ref{def:maxplusserviceprocess}, the performance bounds of Th.~\ref{th:gg1} apply, using the parameters of the service process $S(m,n)$ in Lem.~\ref{lem:splitmerge}. In the case of iid service times, the service parameters of Lem.~\ref{lem:splitmerge} are derived from \eqref{eq:serviceparameter} as
\begin{equation}
\rho_S(\theta) = \frac{1}{\theta} \ln \mathsf{E}\Bigl[e^{\theta \max_{i \in [1,k]} \{Q_i(1)\}}\Bigr] .
\label{eq:rhosplitmergeiid}
\end{equation}

The general problem of split-merge systems is, however, that $\max_{i \in [1,k]} \{ Q_i(\nu) \}$ is stochastically increasing with $k$, with few exceptions such as in the case of identical task service times. The increase implies longer idle times that result in a reduced stability region. As a quick estimate of \eqref{eq:rhosplitmergeiid}
\begin{equation*}
\rho_S(\theta) \le \frac{1}{\theta} \ln \Bigl(k \mathsf{E}\Bigl[e^{\theta Q_1(1)}\Bigr] \Bigr)
\end{equation*}
shows that $\rho_S$ has at most a logarithmic growth with the number of parallel servers, resulting in a corresponding reduction of the stability region. A decrease of the stability region with $\ln k$ is also shown in~\cite{rizk:forkjoin}, where the authors advise against split-merge implementations based on an in-depth comparison with fork-join systems.
\subsection{Replication Systems}
Given $k$ parallel servers, one option to deal with stragglers is the redundant execution of $k$ replicated jobs. In this case, the service time of a job is determined as the minimum of the service times of all its replicas. Once a replica has finished service, all other replicas of the job may or may not be purged. We consider non-purging $(k,l)$ fork-join systems in~\cite{fidler:forkjoin}, which includes pure replication systems as a special case if $l=1$. Systems with replication and purging are elaborated on in~\cite{poloczek:parallelsystems}. The following lemma says that replication systems with purging satisfy Def.~\ref{def:maxplusserviceprocess}. This basic property implies that the performance bounds of Th.~\ref{th:gg1} hold.
\begin{lemma}[Replication system]
\label{lem:replication}
Consider a purging replication system with $k$ parallel servers as in Lem.~\ref{lem:exactmaxplusserviceprocess}. Let $L_i(n)$ denote the service time of replica $i$ of job $n$ where $i \in [1,k]$ and $n \ge 1$. For $n \ge m \ge 1$ define
\begin{equation*}
S(m,n) = \sum_{\nu=m}^n \min_{i \in [1,k]} \{ L_i(\nu) \} .
\end{equation*}
The system is an exact $S(m,n)$ server.
\end{lemma}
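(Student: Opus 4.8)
### Proof proposal

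The plan is to follow the same template as the proofs of Lem.~\ref{lem:forkjoin} and Lem.~\ref{lem:splitmerge}, since a purging replication system is again just a classical work-conserving queue with a modified per-job service time. The key observation is that, once we purge all other replicas as soon as the fastest one completes, job $n$ occupies the parallel servers for exactly $\min_{i \in [1,k]}\{L_i(n)\}$ units of time: all $k$ replicas start together at the job's start time $V(n)$, and the job is considered complete the instant the first replica finishes. Thus the effective service time of job $n$ is $\hat{L}(n) = \min_{i \in [1,k]}\{L_i(n)\}$, and because the system is lossless, work-conserving, and FIFO in the jobs, the start-time recursion~\eqref{eq:starttime} applies verbatim with $L(n)$ replaced by $\hat{L}(n)$.

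Concretely, I would argue as follows. First, note that the system is lossless, work-conserving and serves jobs first-in first-out, so job $n \ge 2$ starts service at $V(n) = \max\{A(n), V(n-1) + \hat{L}(n-1)\}$ with $\hat{L}(\nu) = \min_{i \in [1,k]}\{L_i(\nu)\}$, and $V(1) = A(1)$; the purging step is exactly what guarantees that the servers become free after $\hat{L}(n-1)$ rather than after the maximum replica time. This is the same recursion as~\eqref{eq:starttime}, so by recursive insertion (as in~\eqref{eq:starttimesolved}) we obtain $V(n) = \max_{m \in [1,n]}\bigl\{A(m) + \sum_{\nu=m}^{n-1}\hat{L}(\nu)\bigr\}$. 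Since the job departs when its first replica finishes, $D(n) = V(n) + \hat{L}(n)$, and substituting the expression for $V(n)$ gives $D(n) = \max_{m \in [1,n]}\bigl\{A(m) + \sum_{\nu=m}^{n}\min_{i \in [1,k]}\{L_i(\nu)\}\bigr\} = \max_{m \in [1,n]}\{A(m) + S(m,n)\}$, which is exactly the exact-server property of Def.~\ref{def:maxplusserviceprocess}. Equivalently, one can simply invoke Lem.~\ref{lem:exactmaxplusserviceprocess} directly with the per-job service time $\hat{L}(n) = \min_{i \in [1,k]}\{L_i(n)\}$, which collapses the argument to one line.

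The main thing to get right — the only real obstacle, and it is a modeling point rather than a technical one — is justifying the step $V(n) = \max\{A(n), V(n-1) + \hat{L}(n-1)\}$, i.e., that purging makes the bottleneck service time of a job equal to the \emph{minimum} over replicas. This requires that all $k$ replicas of job $n$ genuinely start at the common time $V(n)$ (so that no replica is delayed relative to the others) and that purging is immediate upon the first completion, so that job $n+1$ can seize all servers at $V(n) + \hat{L}(n)$. Under the stated "purging replication system" model with homogeneous servers this holds, and the rest is the routine recursion already carried out in Lem.~\ref{lem:exactmaxplusserviceprocess}. I would therefore keep the proof short: state the per-job service time $\hat{L}(n)$, note it is the time all servers are busy with job $n$ under purging, and apply Lem.~\ref{lem:exactmaxplusserviceprocess}.
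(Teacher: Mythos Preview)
Your proposal is correct and matches the paper's own proof essentially line for line: the paper also observes that purging forces all replicas of job $n$ to start simultaneously at $V(n)$, writes the recursion $V(n) = \max\{A(n), V(n-1) + \min_{i \in [1,k]}\{L_i(n-1)\}\}$, and then invokes the recursive insertion from Lem.~\ref{lem:exactmaxplusserviceprocess} together with $D(n) = V(n) + \min_{i \in [1,k]}\{L_i(n)\}$.
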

\begin{proof}
First, we note that all replicas of job $n \ge 1$ start service at the same time $V(n)$. This is an immediate consequence of purging all replicas of a job once one replica finishes service. It follows for $n \ge 2$ that
\begin{equation}
V(n) = \max \biggl\{A(n), V(n-1) + \min_{i \in [1,k]} \{L_i(n-1)\}\biggr\},
\label{eq:starttimereplication}
\end{equation}
Recursive insertion of \eqref{eq:starttimereplication} with $V(1) = A(1)$ as in Lem.~\ref{lem:exactmaxplusserviceprocess}, and letting $D(n) = V(n) + \min_{i \in [1,k]} \{L_i(n)\}$ completes the proof.
\end{proof}
Performance bounds follow by insertion of the service parameters of $S(m,n)$ as defined by Lem.~\ref{lem:replication} into Th.~\ref{th:gg1}. As a detailed evaluation of replication systems with purging and correlated replicas is provided by~\cite{poloczek:parallelsystems}, we only evaluate the simple case of iid replicas with exponential service times $L_i$ with parameter $\mu$. It follows that $\min_{i \in [1,k]} \{L_i(\nu)\}$ is exponential with parameter $k\mu$ so that $\rho_S(\theta)$ follows by substitution of $k\mu$ for $\mu$ in~\eqref{eq:exposerviceparameter}.
%
%
\section{Multi-Stage Fork-Join Networks}
\label{sec:multistage}
We contribute a new bound on the growth of end-to-end sojourn times for multi-stage fork-join networks, where we consider $h$ fork-join stages\footnote{While we focus on multi-stage fork-join networks only, we note that the same analysis applies to networks of split-merge or replication systems.} in tandem, each with $k$ parallel servers. We use subscript $i \in [1,k]$ to distinguish the servers of a stage and superscript $j \in [1,h]$ to denote the stages. Note that jobs depart from each fork-join stage in the order of their arrival. The following lemma reproduces a fundamental result of the network calculus~\cite{chang:performanceguarantees}.
\begin{lemma}[Multi-stage fork-join network]
\label{lem:multistage}
Consider a multi-stage network of $h$ fork-join systems as in Lem.~\ref{lem:forkjoin} in tandem. Define for $n \ge m \ge 1$
\begin{multline*}
S^{\text{net}}(m,n) = \max_{\nu^j: m \le \nu^1\le\nu^2\le\dots\nu^{h-1}\le n}  \\ \{ S^1(m,\nu^1) + S^2(\nu^1,\nu^2) + \dots + S^h(\nu^{h-1},n) \} .
\end{multline*}
The fork-join network is an exact $S^{\text{net}}(m,n)$ server.
\end{lemma}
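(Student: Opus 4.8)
The plan is to prove this by induction on the number of stages $h$, using only two ingredients: that a single fork-join stage is an exact max-plus server (Lem.~\ref{lem:forkjoin}), and that a fork-join stage preserves the arrival order of jobs, so that the departure process of stage $j$ is a well-defined, monotone arrival process for stage $j+1$ and the job label $n$ retains a consistent meaning throughout the tandem. Order preservation is quick to check from Lem.~\ref{lem:exactmaxplusserviceprocess}: each of the $k$ internal single-server queues is FIFO with nonnegative service times, so $D_i(n) \ge D_i(n-1)$ for all $i \in [1,k]$, and hence $D(n) = \max_{i \in [1,k]} \{ D_i(n) \} \ge \max_{i \in [1,k]} \{ D_i(n-1) \} = D(n-1)$. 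I would state this observation first, as it is what makes the concatenation of the stages meaningful.

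For the base case $h=1$, the assertion is exactly Lem.~\ref{lem:forkjoin}, since the maximization over intermediate indices is vacuous and $S^{\text{net}}(m,n) = S^1(m,n)$. For the inductive step, I would assume that the first $h-1$ stages, viewed as a single system with external arrivals $A(n)$ and departures $D^{h-1}(n)$, form an exact server with service process $S^{1:h-1}(m,n) = \max_{m \le \nu^1 \le \dots \le \nu^{h-2} \le n} \{ S^1(m,\nu^1) + \dots + S^{h-1}(\nu^{h-2},n) \}$, that is, $D^{h-1}(n) = \max_{m \in [1,n]} \{ A(m) + S^{1:h-1}(m,n) \}$. By order preservation, $D^{h-1}(\cdot)$ is the (monotone) arrival process seen by stage $h$, so Lem.~\ref{lem:forkjoin} applied to stage $h$ yields $D^h(n) = \max_{\nu^{h-1} \in [1,n]} \{ D^{h-1}(\nu^{h-1}) + S^h(\nu^{h-1},n) \}$. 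Substituting the inductive expression for $D^{h-1}(\nu^{h-1})$ and pulling the term $S^h(\nu^{h-1},n)$, which depends on none of $m, \nu^1, \dots, \nu^{h-2}$, inside the inner maximum, the two nested maxima merge into a single maximum over $m \le \nu^1 \le \dots \le \nu^{h-1} \le n$; this is precisely $\max_{m \in [1,n]} \{ A(m) + S^{\text{net}}(m,n) \}$, and since $D^h(\cdot)$ are the departures of the whole network, the network is an exact $S^{\text{net}}$ server.

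The only real work is bookkeeping rather than ideas: one must keep the index ranges and the boundary conventions ($A(0)=0$, with $\nu^0 = m$ and $\nu^h = n$ in the telescoping sum of service terms) consistent when the maxima are collapsed, and one must invoke order preservation at each stage so that ``job $n$'' is unambiguous. The underlying algebraic fact — that composing two exact max-plus servers gives an exact max-plus server whose service process is the max-plus convolution of the two — is essentially automatic once one notices that the outer service increment is a constant with respect to the inner maximization variable, so no strictness is lost in the composition. I would close by noting, as the paper's footnote does, that the same induction goes through verbatim for tandems of split-merge or replication stages, since Lem.~\ref{lem:splitmerge} and Lem.~\ref{lem:replication} likewise furnish exact, order-preserving stages.
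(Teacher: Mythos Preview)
Your proposal is correct and follows essentially the same approach as the paper: the paper also notes order preservation of fork-join stages (so that departures of stage $j$ serve as valid arrivals for stage $j+1$) and then recursively inserts the exact-server relation $D^j(n) = \max_{\nu} \{ D^{j-1}(\nu) + S^j(\nu,n) \}$ from the last stage backwards, which is just your induction unrolled. Your explicit verification of monotonicity of $D(n)$ from the FIFO property of the internal single-server queues is a nice touch that the paper merely asserts.
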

\begin{proof}
In a tandem of fork-join systems, the departures of stage $j$ are the arrivals of stage $j+1$, i.e., $A^{j+1}(n) = D^{j}(n)$ for $j \in [1,h-1]$. Further, since jobs depart from a fork-join system in the order of their arrival, we have for all $j \in [1,h]$ that $A^{j}(n) \ge A^{j}(m) \ge 0$ for $n \ge m \ge 1$. Next, we use that each fork-join stage is an exact server as in Def.~\ref{def:maxplusserviceprocess}. We start with $D^h(n) = \max_{\nu^{h-1} \in [1,n]} \{D^{h-1}(\nu^{h-1}) + S^{h}(\nu^{h-1},n) \}$ and recursively insert Def.~\ref{def:maxplusserviceprocess} for $j \in [1,h-1]$, to obtain
\begin{multline*}
D^h(n) = \max_{m\in[1,n]} \Bigl\{ \max_{\nu^j : m \le \nu^{1} \le \dots \le \nu^{h-1} \le n} \\ \bigl\{A^{1}(m) + S^{1} (m,\nu^{1}) + S^{2}(\nu^{1},\nu^{2}) + \dots + S^{h}(\nu^{h-1},n) \bigr\} \Bigr\},
\end{multline*}
for $n \ge 1$. This proves that $S^{\text{net}}(m,n)$ is an exact server.
\end{proof}
\begin{theorem}[Multi-stage fork-join network]
\label{th:multistage}
Consider a multi-stage fork-join network as in Lem.~\ref{lem:multistage} with arrival and service parameters $(\sigma_A(-\theta),\rho_A(-\theta))$ and $(\sigma_{Q}(\theta),\rho_{Q}(\theta))$ as specified by Def.~\ref{def:sigmarho}. Let the service times at each of the stages be independent. For $n \ge 1$, the end-to-end sojourn time satisfies
\begin{equation*}
\mathsf{P}[ T(n) > \tau ] \le k^h \alpha e^{\theta h \rho_{Q}(\theta)} e^{-\theta\tau} ,
\end{equation*}
where $\theta > 0$ has to satisfy $\rho_{Q}(\theta) < \rho_A(-\theta)$ and
\begin{equation*}
\alpha = \frac{e^{\theta(\sigma_A(-\theta) + h \sigma_{Q}(\theta))}}{\bigl(1-e^{-\theta (\rho_A(-\theta)-\rho_{Q}(\theta))}\bigr)^h} .
\end{equation*}
\end{theorem}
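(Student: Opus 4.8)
The plan is to bound the MGF of the end-to-end service process $S^{\text{net}}(m,n)$ from Lem.~\ref{lem:multistage} and then invoke Th.~\ref{th:gg1}. Recall from Lem.~\ref{lem:multistage} that $S^{\text{net}}(m,n)$ is a maximum over all admissible index vectors $(\nu^1,\dots,\nu^{h-1})$ of the sum $S^1(m,\nu^1)+S^2(\nu^1,\nu^2)+\dots+S^h(\nu^{h-1},n)$. The first step is to replace this maximum by a sum, writing $\mathsf{E}[e^{\theta S^{\text{net}}(m,n)}] \le \sum_{\nu^1\le\dots\le\nu^{h-1}} \mathsf{E}[e^{\theta(S^1(m,\nu^1)+\dots+S^h(\nu^{h-1},n))}]$, and then exploiting the assumed independence of the service processes across stages to factor each summand into $\prod_{j=1}^h \mathsf{E}[e^{\theta S^j(\cdot,\cdot)}]$. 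Each factor is controlled via \eqref{eq:forkjoinsigma}--\eqref{eq:forkjoinrho}, giving $\mathsf{E}[e^{\theta S^j(\nu^{j-1},\nu^j)}] \le k\, e^{\theta(\sigma_Q(\theta) + \rho_Q(\theta)(\nu^j-\nu^{j-1}+1))}$.

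Next I would collect terms. The factors of $k$ multiply to $k^h$; the $\sigma_Q$ terms multiply to $e^{\theta h \sigma_Q(\theta)}$; and the $\rho_Q$ terms telescope: the exponents $(\nu^1-m)+(\nu^2-\nu^1)+\dots+(n-\nu^{h-1})$ sum to $n-m$, while the "$+1$" per stage contributes $h$, so the $\rho_Q$ contribution is $e^{\theta\rho_Q(\theta)(n-m+h)}$. It remains to handle the combinatorial sum over index vectors $m\le\nu^1\le\dots\le\nu^{h-1}\le n$, whose summand no longer depends on the $\nu^j$. The number of such nondecreasing sequences is $\binom{n-m+h-1}{h-1}$, which grows polynomially in $n-m$ and hence is absorbed by geometric decay when one runs the summation in the proof of Th.~\ref{th:gg1}; alternatively, and more cleanly for matching the stated $\alpha$, one bounds the sum by iterating $h$ geometric series. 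Concretely, writing $d_j=\nu^j-\nu^{j-1}$ and summing each $d_j\ge 0$ freely (with an extra $k\, e^{\theta\sigma_Q(\theta)}\, e^{\theta\rho_Q(\theta)}$ factored out per stage), one gets $\sum_{d_j\ge 0} e^{\theta\rho_Q(\theta)\sum d_j}$ bounded against the arrival-side decay $e^{-\theta\rho_A(-\theta)\sum d_j}$ inside the Th.~\ref{th:gg1} argument, producing one factor $(1-e^{-\theta(\rho_A(-\theta)-\rho_Q(\theta))})^{-1}$ per stage, i.e. the $h$-th power appearing in $\alpha$. This identifies $S^{\text{net}}$ as having effective parameters $\sigma_S = h\sigma_Q + h\ln k/\theta$ (the $\ln k$ per stage plus the extra geometric-sum mass) and $\rho_S = \rho_Q$, and plugging these into Th.~\ref{th:gg1} yields the claimed sojourn-time bound with the stated $k^h$, $e^{\theta h \rho_Q(\theta)}$, and $\alpha$, under the stage-independent stability condition $\rho_Q(\theta)<\rho_A(-\theta)$.

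The main obstacle is the bookkeeping for the nested summation over the $h-1$ intermediate indices: one must verify that pulling the per-stage geometric sums out does not double-count the "$+1$" offsets or the arrival decay, and that the single arrival-side geometric factor from Th.~\ref{th:gg1} combines correctly with the $h-1$ extra factors generated here to give exactly the $h$-th power in $\alpha$ rather than $h-1$ or $h+1$. A careful way to organize this is to first prove directly (by induction on $h$, reusing the single-stage estimate \eqref{eq:forkjoinsigma}--\eqref{eq:forkjoinrho} and the convolution-of-MGFs step) that $S^{\text{net}}$ is $(\sigma_S,\rho_S)$-upper constrained with $\rho_S=\rho_Q$ and an appropriate $\sigma_S$, and only then apply Th.~\ref{th:gg1} once; this isolates all the multi-stage combinatorics into the induction and keeps the final invocation of Th.~\ref{th:gg1} routine.
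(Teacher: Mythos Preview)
Your main line of argument is correct and coincides with the paper's proof: bound the max in $S^{\text{net}}$ by the sum, factor using independence of stages, insert the per-stage fork-join bound \eqref{eq:forkjoinsigma}--\eqref{eq:forkjoinrho}, telescope to get $e^{\theta\rho_Q(\theta)(n-m+h)}$, recognize the number of admissible index vectors as $\binom{n-m+h-1}{h-1}$, and then combine this with the arrival decay inside the sojourn-time sum to produce $(1-e^{-\theta(\rho_A(-\theta)-\rho_Q(\theta))})^{-h}$ via the negative-binomial identity (equivalently, as you say, $h$ iterated geometric series). That is exactly what the paper does.

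One point where your write-up wobbles: you cannot literally ``plug into Th.~\ref{th:gg1}'' nor prove by induction that $S^{\text{net}}$ is $(\sigma_S,\rho_Q)$-upper constrained in the sense of Def.~\ref{def:sigmarho}. The binomial coefficient $\binom{n-m+h-1}{h-1}$ depends on $n-m$ and cannot be absorbed into a constant $\sigma_S$ while keeping $\rho_S=\rho_Q$; any attempt to do so forces $\rho_S>\rho_Q$ and loses the sharp stability condition. The paper therefore does \emph{not} invoke Th.~\ref{th:gg1} as a black box but rather reruns its Chernoff-plus-geometric-sum argument with the binomial coefficient carried along, and only then collapses it via the negative-binomial identity. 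You clearly have this in mind when you write ``absorbed by geometric decay when one runs the summation in the proof of Th.~\ref{th:gg1}''; just be aware that this is the only route, and your proposed ``cleaner'' induction-then-apply-Th.~\ref{th:gg1} organization in the last paragraph does not go through as stated.
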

The proof can be found in the appendix.  It uses an established method of the stochastic network calculus~\cite{fidler:momentcalculus}; see also the tutorial~\cite{fidler:netcalcguide} or the textbook~\cite{jiang:basicstochasticcalculus}.

To see the growth of $\tau$ with $h$ and $k$, we equate the sojourn time bound in Th.~\ref{th:multistage} with $\varepsilon$ and solve for
\begin{multline*}
\tau = \sigma_A(-\theta) + h (\sigma_{Q}(\theta) + \rho_{Q}(\theta)) \\ + \frac{1}{\theta} \left(h \ln k - h \ln \left(1-e^{-\theta (\rho_A(-\theta) - \rho_{Q}(\theta))}\right) - \ln \varepsilon \right)
\end{multline*}
that grows in $\mathcal{O}(h \ln k)$. The result compares to a growth in $\mathcal{O}(h \ln (hk))$ obtained previously in~\cite{fidler:forkjoin}. The improvement is achieved by taking advantage of the statistical independence of the stages, whereas~\cite{fidler:forkjoin} does not make this assumption.
%
%
\paragraph*{M$\mid$M tasks}
Fig.~\ref{fig:multistage} shows sojourn time bounds for a multi-stage fork-join network with up to $h=64$ stages each with $k \in \{1,2,4,8,16\}$ parallel servers. The inter-arrival and service times are exponential with parameters $\lambda=0.5$ and $\mu=1$, respectively, and $\varepsilon = 10^{-6}$ as in Fig.~\ref{fig:mm1forkjoin}. The analytical results in Fig.~\ref{fig:multistageanalysis} exhibit the same characteristic trends as the simulation results in Fig.~\ref{fig:multistagesimulation}, albeit with less precision due to the inequalities that are involved for each stage. The end-to-end sojourn times show a linear growth with $h$ and a logarithmic growth with $k$, observable by the equidistantly spaced lines for $k \in \{1,2,4,8,16\}$.
\begin{figure}
\centering
\subfigure[analysis] {
\includegraphics[width=0.465\linewidth]{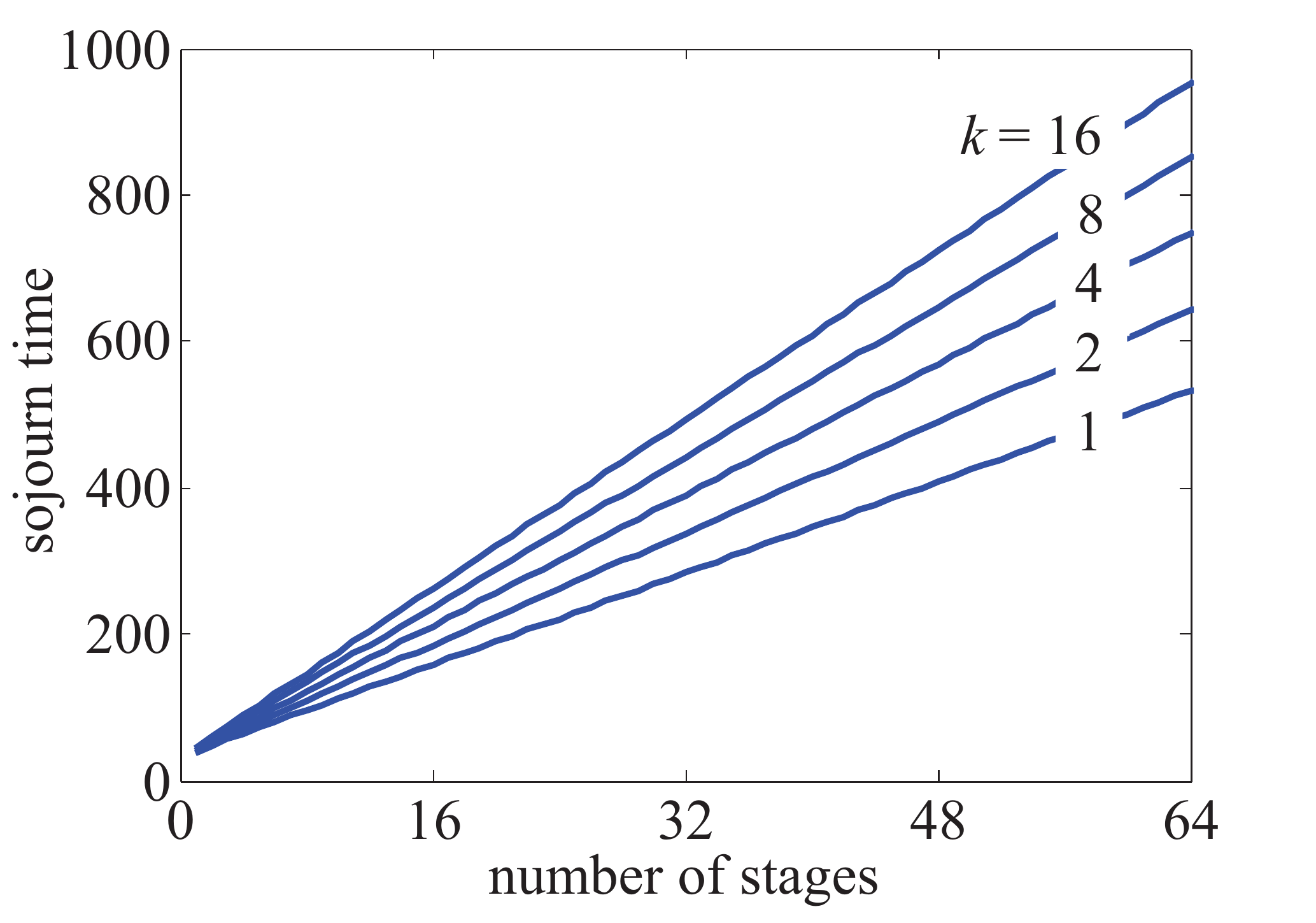}
\label{fig:multistageanalysis}
}
\hfill
\subfigure[simulation] {
\includegraphics[width=0.465\linewidth]{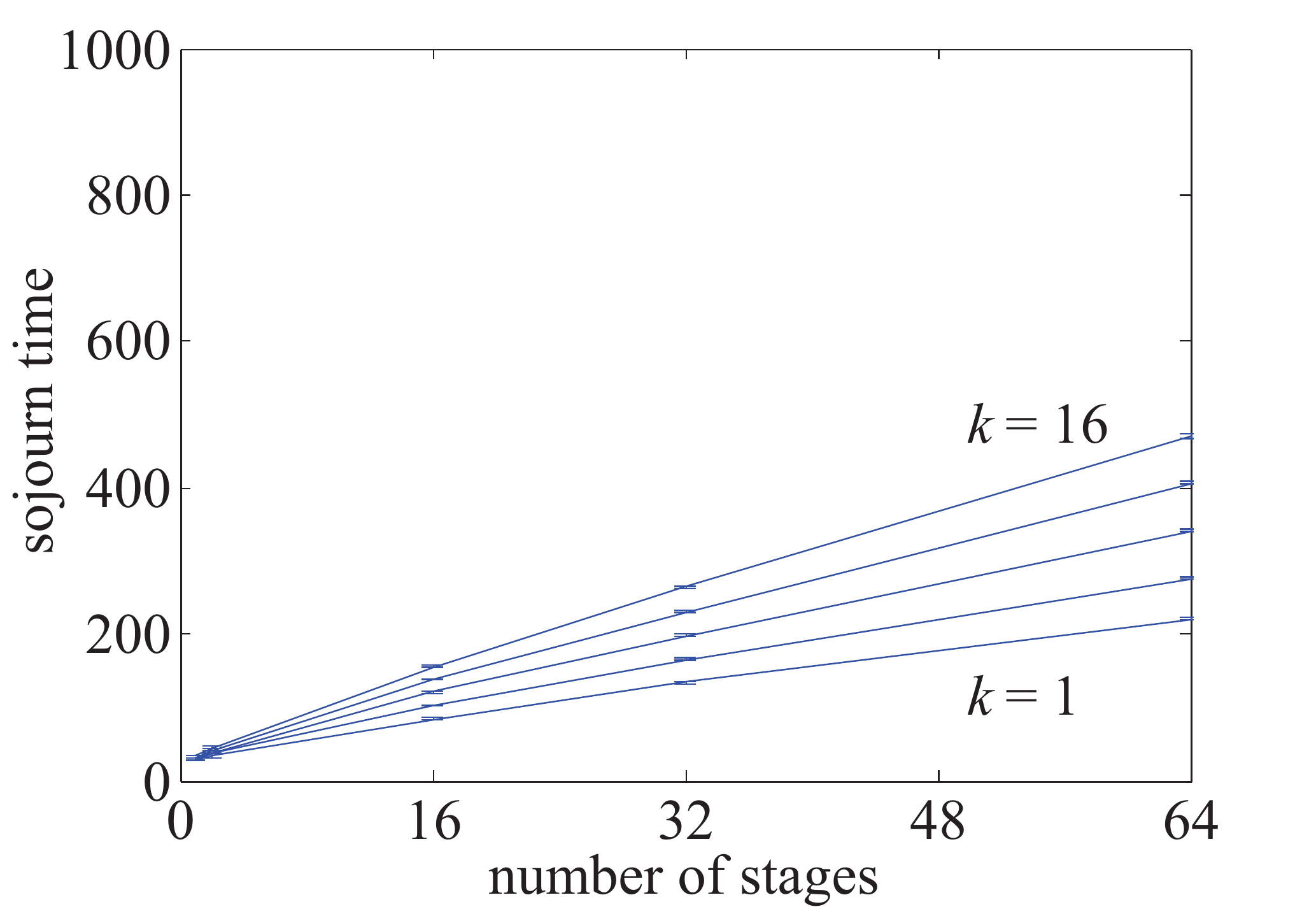}
\label{fig:multistagesimulation}
}
\caption{Multi-stage fork-join network. Sojourn time bounds grow in $\mathcal{O}(h \ln k)$ for $h$ stages each with $k$ parallel servers.}
\label{fig:multistage}
\end{figure}

The simulation results that we depict in Fig.~\ref{fig:multistagesimulation} are obtained for tasks that have an independent service time at each stage. We also conducted simulations of a multi-stage fork-join network where the service times of tasks are identical at each stage, i.e., not independent. While we omit showing the results, it is interesting to note that the end-to-end sojourn times observed in these simulations grow faster than linearly with $h$ as predicted by $\mathcal{O}(h \ln (hk))$ in~\cite{fidler:forkjoin}.
%
%
\section{Multi-Server Systems with Thinning}
\label{sec:thinning}
In this section, we compare the performance of fork-join systems to that of traditional multi-server systems. An example of a traditional multi-server system with $k$ servers is depicted in Fig.~\ref{fig:trqueue}. The difference to fork-join systems is that jobs are not divided into tasks that are served in parallel but instead each job is assigned in its entirety to one of the servers. As a consequence, the external arrival process $A(n)$ is divided into $k$ ``{\it thinned}'' processes $A_i(m)$. The departures $D_i(m)$ of each server $i \in [1,k]$ may optionally be resequenced in the original order of $A(n)$ to form the departure process $D(n)$.
%
%
\subsection{Thinning}
First we introduce some notation. While the external arrival process $A(n)$ specifies the arrival time of job $n$ before thinning, the processes $A_i(m)$ denote the arrival time of the $m$th job of the thinned process at server $i \in [1,k]$. The corresponding service time is $L_i(m)$.

In the case of random thinning, each job is assigned to one of the $k$ servers according to iid discrete (not necessarily uniform) random variables with support $[1,k]$. From the iid property, the mapping of each job to a certain server $i \in [1,k]$ is an independent Bernoulli trial with parameter $p_i$ where $\sum_{i=1}^k p_i = 1$. Let $X_i(m)$ denote the number of the job that becomes the $m$th job that is assigned to server $i$. It follows that $X_i(m)$ is a sum of $m$ iid geometric random variables with parameter $p_i$; i.e., $X_i(m)$ is negative binomial. The arrival process at server $i$ is
\begin{equation}
A_i(m) = A(X_i(m)) ,
\label{eq:randomthinning}
\end{equation}
for $m \ge 1$. Conversely, given jobs $1,2,\dots,n$ of the external arrival process, let $Y_i(n)$ denote the number of jobs assigned to server $i$. It follows that $Y_i(n)$ is binomial with parameter $p_i$. Further, it holds that $X_i(Y_i(n)) \le n$ for $n \ge 1$.

In the case of deterministic thinning a round robin assignment of the jobs of an arrival process $A(n)$ to $k$ servers results in the processes $A_i(m)$ as in \eqref{eq:randomthinning} where
\begin{equation}
X_i(m) = k(m-1)+i ,
\label{eq:splitarrivals}
\end{equation}
for $m \ge 1$ and $i \in [1,k]$. Given jobs $1,2,\dots,n$, the number of jobs that are assigned to server $i$ is
\begin{equation}
Y_i(n) = \left\lceil \frac{n-i+1}{k} \right\rceil ,
\label{eq:mergeddepartures}
\end{equation}
for $n \ge 1$ and $i \in [1,k]$. To see this, note that job $n$ of the external arrival process becomes the $m=\lceil n/k\rceil$th job of server $j = (n-1) \! \mod k + 1$. Hence, $Y_i(n) = m$ for $i \le j$ and $Y_i(n) = m-1$ for $i > j$. The same can be verified for \eqref{eq:mergeddepartures}.
\begin{figure}
  \centering
  \includegraphics[width=0.98\columnwidth]{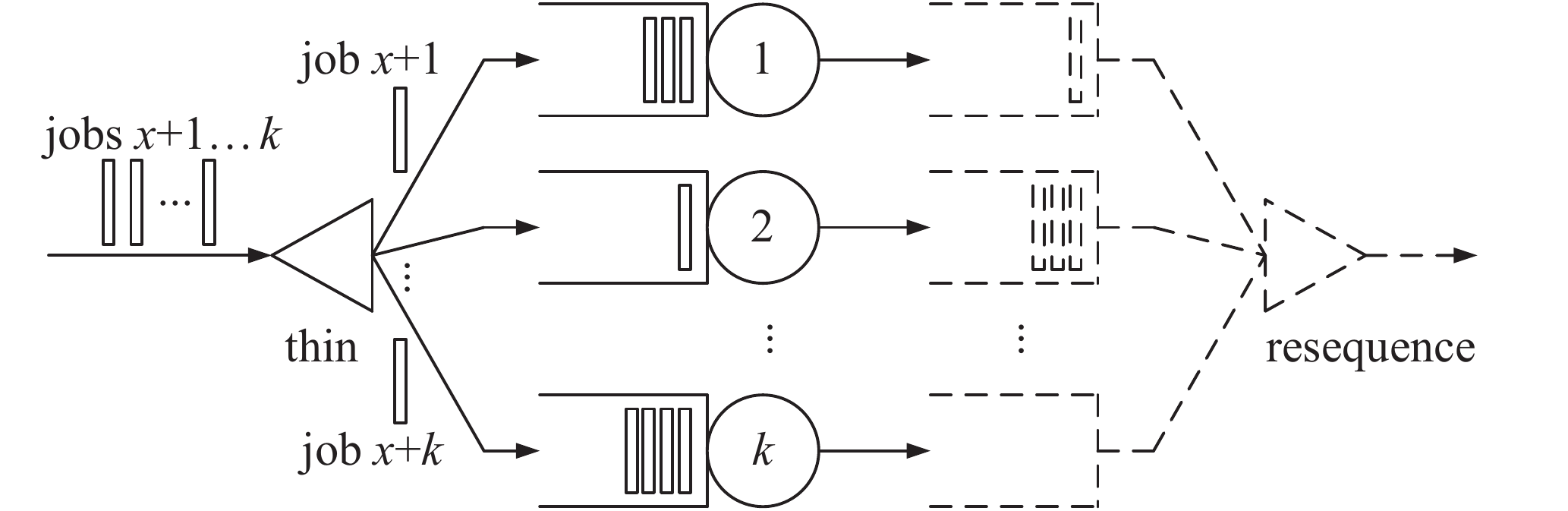}
  \caption{Multi-server system with thinning. Compared to a fork-join system, jobs are not divided into tasks. Instead, entire jobs are assigned to the parallel servers, e.g., deterministically in round robin order, resulting in a thinning of the arrival process. The multi-server system does not maintain the order of jobs unless an additional resequencing step is added (dashed lines).}
  \label{fig:trqueue}
\end{figure}
\begin{corollary}[Thinning]
\label{cor:thinning}
Given arrivals with iid inter-arrival times and parameter $\rho_A(-\theta)$ for $\theta > 0$ as in \eqref{eq:arrivalparameter}. In the case of random thinning with probabilities $p_i$ for $i \in [1,k]$, the thinned arrival processes have parameter
\begin{equation*}
\rho_{A_i}(-\theta) = -\frac{1}{\theta} \ln \biggl( \frac{p_i e^{-\theta\rho_A(-\theta)}}{1-(1-p_i) e^{-\theta\rho_A(-\theta)}} \biggr) ,
\end{equation*}
where $\theta > 0$ so that $e^{-\theta\rho_A(-\theta)} < 1/(1-p_i)$.

In the case of deterministic thinning, for $\theta > 0$ the thinned arrival processes have parameter
\begin{equation*}
\rho_{A_i}(-\theta) = k \rho_A (-\theta) .
\end{equation*}
\end{corollary}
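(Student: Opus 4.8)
The plan is to compute the moment generating function $\mathsf{E}[e^{-\theta A_i(m,n)}]$ of the increments of each thinned arrival process in closed form and then read off $\rho_{A_i}(-\theta)$ by matching it against the affine bound of Def.~\ref{def:sigmarho} in its GI specialization, where $\sigma_{A_i}(-\theta)=0$ so that the bound holds with equality. Write $z = e^{-\theta\rho_A(-\theta)}$ for brevity; since the original process is GI with minimal parameters, $\mathsf{E}[e^{-\theta A(1,2)}] = z$ and hence $\mathsf{E}[e^{-\theta A(p,q)}] = z^{q-p}$ for $q \ge p$. The common idea for both cases is that, by \eqref{eq:randomthinning}, $A_i(n) - A_i(m) = A(X_i(n)) - A(X_i(m))$ is a sum of $X_i(n) - X_i(m)$ of the original iid inter-arrival times, and the thinning indices $X_i$ depend only on the assignment decisions, which are independent of the arrival process.

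For deterministic (round-robin) thinning the argument is immediate. By \eqref{eq:splitarrivals}, $X_i(n) - X_i(m) = k(n-m)$, so $A_i(m,n) = A(X_i(m), X_i(n))$ is a sum of exactly $k(n-m)$ iid inter-arrival times, whence $\mathsf{E}[e^{-\theta A_i(m,n)}] = z^{k(n-m)} = e^{-\theta k \rho_A(-\theta)(n-m)}$; comparing with Def.~\ref{def:sigmarho} yields $\sigma_{A_i}(-\theta)=0$ and $\rho_{A_i}(-\theta) = k\rho_A(-\theta)$.

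For random thinning, $X_i(n) - X_i(m)$ is a sum of $n-m$ iid geometric random variables with parameter $p_i$ on $\{1,2,\dots\}$, independent of the inter-arrival times it indexes. Conditioning on its value $g$ and using that the inter-arrival times are iid gives $\mathsf{E}[e^{-\theta A_i(m,n)} \mid X_i(n)-X_i(m) = g] = z^{g}$, hence $\mathsf{E}[e^{-\theta A_i(m,n)}] = \mathsf{E}[z^{X_i(n)-X_i(m)}]$, the probability generating function of a negative binomial evaluated at $z$. Since the generating function of a single geometric$(p_i)$ variable is $p_i z / (1-(1-p_i)z)$, convergent precisely when $(1-p_i)z < 1$, the sum of $n-m$ of them gives $\mathsf{E}[e^{-\theta A_i(m,n)}] = \big( p_i z / (1-(1-p_i)z) \big)^{n-m}$. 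Matching against $e^{-\theta\rho_{A_i}(-\theta)(n-m)}$ and substituting back $z = e^{-\theta\rho_A(-\theta)}$ produces the stated formula for $\rho_{A_i}(-\theta)$, and the convergence requirement $(1-p_i)z < 1$ is exactly the stated range $e^{-\theta\rho_A(-\theta)} < 1/(1-p_i)$.

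I expect the only real subtlety to be the conditioning step in the random case: one must spell out that, given the assignment decisions --- equivalently, given $X_i(m)$ and the gap $X_i(n)-X_i(m)$ --- the quantity $A(X_i(n))-A(X_i(m)) = \sum_{\nu=X_i(m)}^{X_i(n)-1} A(\nu,\nu+1)$ is distributed as a sum of that many iid copies of $A(1,2)$, which uses both that the thinning is carried out independently of $A$ and that the inter-arrival times are identically distributed so the particular index window is irrelevant. Once this independence is in place, everything reduces to a routine generating-function computation.
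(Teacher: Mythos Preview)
Your proposal is correct and follows essentially the same approach as the paper: both express the thinned inter-arrival time as a (random) sum of the original iid inter-arrival times indexed by the geometric gap $X_i(n)-X_i(m)$, and both reduce the MGF computation to evaluating the probability generating function of a geometric (respectively, constant~$k$) random variable at $z=\mathsf{M}_{A(1,2)}(-\theta)=e^{-\theta\rho_A(-\theta)}$. The only cosmetic difference is that the paper computes the MGF of a single increment $A_i(\nu,\nu+1)$ and then invokes~\eqref{eq:arrivalparameter}, whereas you compute $\mathsf{E}[e^{-\theta A_i(m,n)}]$ for general $m,n$ and match against Def.~\ref{def:sigmarho} directly; since the thinned increments are again iid, the two routes are equivalent.
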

\begin{proof}
The thinned arrivals are expressed by \eqref{eq:randomthinning} as a doubly random process that has increments $A_i(\nu,\nu+1) = A(X_i(\nu),X_i(\nu+1))$ for $\nu \ge 1$.
Considering iid inter-arrival times, the MGF of the thinned process is
\begin{equation}
\mathsf{M}_{A_i(\nu,\nu+1)}(-\theta) = \mathsf{E} \left[(\mathsf{M}_{A(1,2)}(-\theta))^{X_i(1)} \right],
\label{eq:thinningmgf}
\end{equation}
for $\nu \ge 1$. After some reordering, it follows that
\begin{equation}
\mathsf{M}_{A_i(\nu,\nu+1)}(-\theta) = \mathsf{M}_{X_i(1)}(\ln \mathsf{M}_{A(1,2)}(-\theta)) .
\label{eq:doublystochasticarrivalmgf}
\end{equation}
Since $X_i(1)$ is a geometric random variable with MGF $\mathsf{M}_{X_i(1)}(\theta) = p_i e^{\theta}/ (1-(1-p_i)e^{\theta})$ for $\theta < -\ln(1-p_i)$, we obtain by insertion of \eqref{eq:doublystochasticarrivalmgf} into \eqref{eq:arrivalparameter} that
\begin{equation}
\rho_{A_i}(-\theta) = -\frac{1}{\theta} \ln \left( \frac{p_i \mathsf{M}_{A(1,2)}(-\theta)}{1-(1-p_i) \mathsf{M}_{A(1,2)}(-\theta) }\right) ,
\label{eq:splitinterarrivaltimes}
\end{equation}
where $\theta > 0$ so that $\mathsf{M}_{A(1,2)}(-\theta) < 1/(1-p_i)$.

In the case of deterministic thinning, \eqref{eq:thinningmgf} simplifies to $\mathsf{M}_{A_i(\nu,\nu+1)}(-\theta) = (\mathsf{M}_{A(1,2)}(-\theta))^{k}$ so that \eqref{eq:arrivalparameter} evaluates to $\rho_{A_i}(-\theta) = k \rho_A(-\theta)$.
\end{proof}

Since each of the servers of the multi-server system serves its thinned arrivals independently, statistical performance bounds follow straightforwardly by insertion of the arrival parameters from Cor.~\ref{cor:thinning} into Th.~\ref{th:gg1}. The main effect of thinning is captured in the stability condition of Th.~\ref{th:gg1}, which becomes $\rho_{L}(\theta) < \rho_{A_i}(-\theta)$, where $\rho_{A_i}(-\theta)$ increases with $k$. Given fixed $\rho_{L}(\theta)$, the increase of $\rho_{A_i}(-\theta)$ permits larger $\theta$ that result in a faster tail decay. Further, the modularity of this approach implies that the servers may themselves be fork-join systems that can be analyzed by insertion of Cor.~\ref{cor:thinning} into Cor.~\ref{cor:forkjoin}. The results enable a comparison of the performance of fork-join systems with multi-server systems.

First, we investigate how performance bounds for the multi-server system with deterministic thinning grow with $k$. To achieve comparability with the fork-join results, the multi-server system serves jobs that are composed of $k$ tasks. Jobs are, however, served in their entirety by one of the servers. Given iid task service times with parameter $\rho_{Q}(\theta)$ for $i \in [1,k]$ as defined by \eqref{eq:serviceparameter}, the job service times have parameter $\rho_L(\theta) = k \rho_{Q}(\theta)$. By insertion of $\rho_L(\theta)$ and $\rho_{A_i}(-\theta)$ from Cor.~\ref{cor:thinning} into Th.~\ref{th:gg1}, we obtain the stability condition $k \rho_{Q}(\theta) < k \rho_A(-\theta)$. Hence, the maximal $\theta$ that achieves the stability condition is independent of $k$. As a consequence, the waiting time bound from Th.~\ref{th:gg1} and the speed of the tail decay of the sojourn time bound do not depend on $k$. Regarding the sojourn time, we equate the bound from Th.~\ref{th:gg1} with $\varepsilon$ and solve for
\begin{equation}
\tau \le k \rho_{Q}(\theta) + \frac{\ln \alpha - \ln \varepsilon}{\theta} ,
\label{eq:multiserverlingrowth}
\end{equation}
for $\theta > 0$ under the stability condition $\rho_{Q}(\theta) < \rho_A(-\theta)$. Eq.~\eqref{eq:multiserverlingrowth} shows a linear growth with $k$. The result compares to the logarithmic growth established by \eqref{eq:forkjoinlogkgrowth} for the fork-join system. If $k$ is large, the sojourn time of a job at the multi-server system is dominated by its service time that depends linearly on $k$. The fork-join system avoids this effect, as the tasks of the jobs are served by $k$ servers in parallel.

We also evaluate a hybrid system where the arrivals are divided into $a$ thinned processes that are served by $a$ fork-join sub-systems each with $b$ servers. This type of system was also studied as partial mapping in~\cite{rizk:forkjoin}. The overall system comprises $k = ab$ servers. Given jobs that consist of $k$ tasks, each server of the selected fork-join system has to serve $k/b = a$ tasks per job. Following the same steps as above, we obtain by insertion of Cor.~\ref{cor:thinning} into Cor.~\ref{cor:forkjoin}:
\begin{equation}
\tau \le a \rho_{Q}(\theta) + \frac{\ln b + \ln \alpha - \ln \varepsilon}{\theta} ,
\label{eq:combinedsystemgrowth}
\end{equation}
for $\theta > 0$ under the stability condition $a \rho_{Q}(\theta) < a \rho_A(-\theta)$. As suggested by \eqref{eq:combinedsystemgrowth}, configurations where $a \sim \ln k$ achieve a logarithmic scaling with $k$.
\paragraph*{M$\mid$$\text{E}_\text{k}$ jobs}
For numerical evaluation, we consider arrivals with iid exponential inter-arrival times and jobs with iid Erlang-$k$ service times with parameters $\lambda$ and $\mu$, respectively. Each job consists of $k$ tasks with iid exponential service times with parameter $\mu$. We consider three system configurations: i) a multi-server system with thinning, ii) a fork-join system, and iii) a hybrid system, all with $k$ servers.

For the multi-server system with thinning (i), performance bounds are derived from Th.~\ref{th:gg1} using the parameters of the thinned arrival processes as in Cor.~\ref{cor:thinning}. Deterministic thinning results in processes $A_i(m)$ where the inter-arrival times are a sum of $k$ exponential random variables; that is, Erlang-$k$ distributed. It follows by insertion of $\rho_A(-\theta)$ from \eqref{eq:expoarrivalparameter} into Cor.~\ref{cor:thinning} that
\begin{equation}
\rho_{A_i}(-\theta) = -\frac{k}{\theta} \ln \left(\frac{\lambda}{\lambda + \theta}\right) ,
\label{eq:expodetthinning}
\end{equation}
for $\theta > 0$. In the case of random thinning we have by insertion of $\rho_A(-\theta)$ from \eqref{eq:expoarrivalparameter} into Cor.~\ref{cor:thinning} with $p_i = 1/k$ that
\begin{equation}
\rho_{A_i}(-\theta) = - \frac{1}{\theta} \ln \left(\frac{\lambda}{\lambda + k \theta}\right) ,
\label{eq:exporandthinning}
\end{equation}
for $\theta > 0$. In this case, the inter-arrival times of the thinned processes are exponentially distributed with parameter $\lambda/k$. Lastly, the Erlang-$k$ service times of the jobs have parameter
\begin{equation}
\rho_{L}(\theta) = \frac{k}{\theta} \ln \left(\frac{\mu}{\mu - \theta}\right) ,
\label{eq:erlangservice}
\end{equation}
for $\theta \in (0,\mu)$. For deterministic thinning, the maximal $\theta$ that satisfies the stability condition $\rho_L(\theta) \le \rho_{A_i}(-\theta)$ is $\theta = \mu-\lambda$.

Fig.~\ref{fig:thinningcomparison} contrasts the tail decay of deterministic and random thinning for $\mu=1$, $\lambda=0.5$, and $k \in \{4,8,12\}$. The speed of the tail decay does not depend on $k$ for deterministic thinning. In contrast, for random thinning the tail decay becomes slower with increasing $k$. Deterministic thinning generally outperforms random thinning. In the following comparison we include only deterministic thinning.
\begin{figure}
  \centering
  \includegraphics[width=0.75\columnwidth]{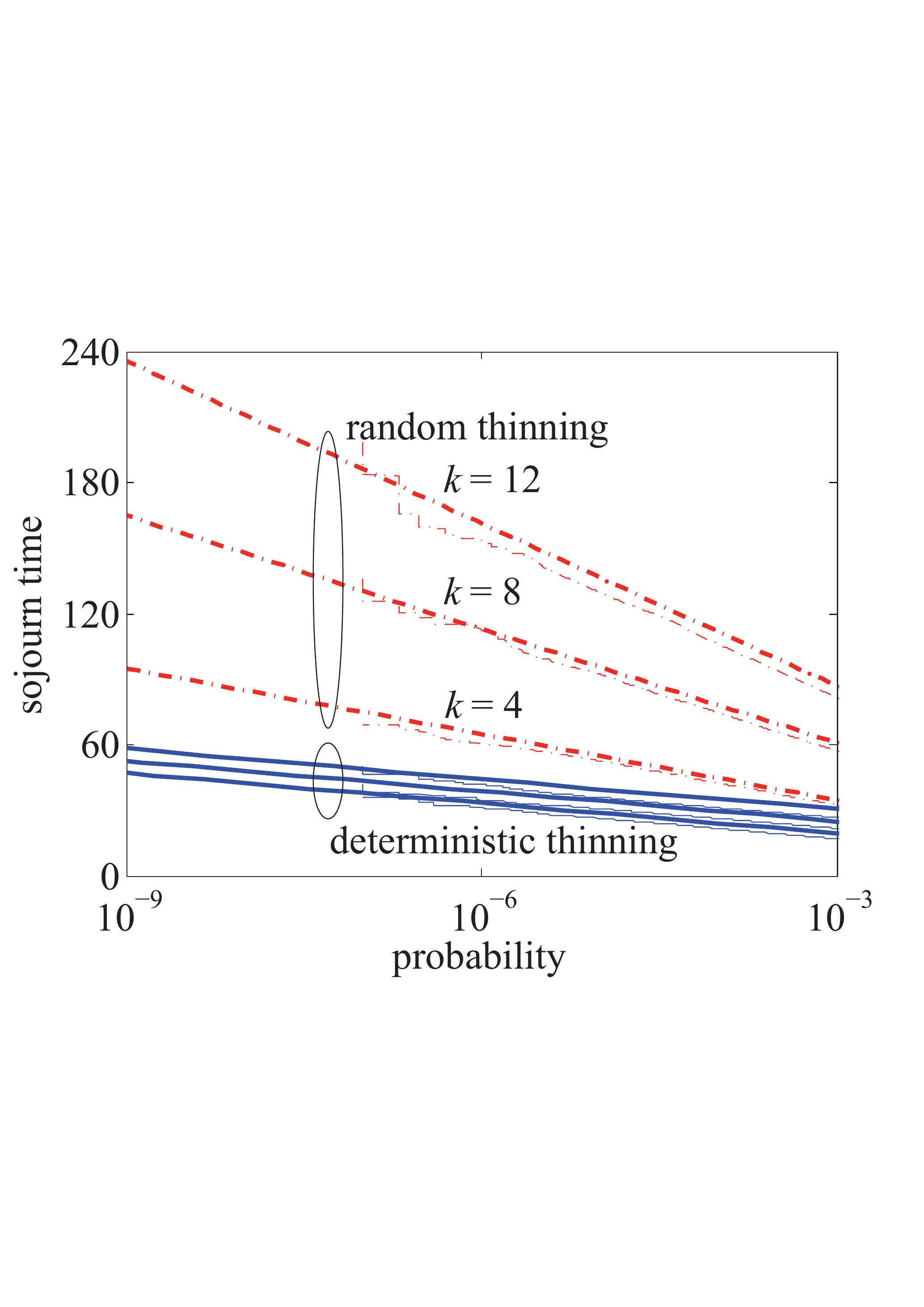}
  \caption{Multi-server system with thinning. Analytical bounds (thick lines) and simulation results (thin lines down to $\varepsilon = 10^{-7}$). Deterministic thinning outperforms random thinning.}
  \label{fig:thinningcomparison}
\end{figure}

The fork-join system (ii) is the same as already evaluated in Fig.~\ref{fig:mm1forkjoin}. For the hybrid system (iii), the thinned arrival process has Erlang-$a$ inter-arrival times and since each job has $k$ tasks that are divided among $b$ servers, each server has to serve $k/b = a$ exponential tasks, resulting in sum in an Erlang-$a$ service time. We choose $a = \log_2 k$ so that $b=k/\log_2 k$. Performance bounds for this system are derived from Cor.~\ref{cor:forkjoin} by insertion of the parameters from Cor.~\ref{cor:thinning}.

\begin{figure*}
  \centering
  \subfigure[thinning vs. fork-join]{
  \includegraphics[width=0.6\columnwidth]{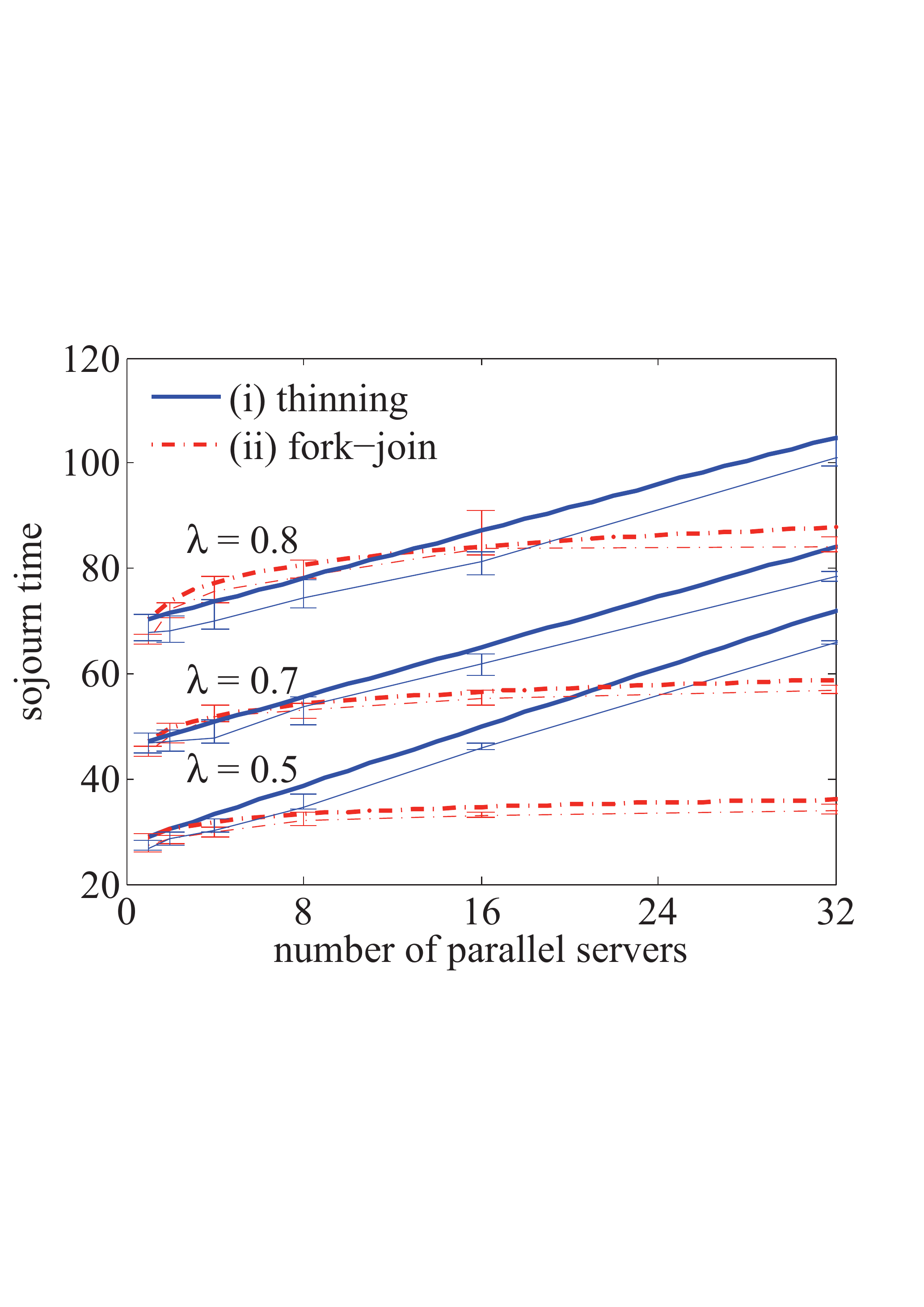}
  \label{fig:thinningforkjoincomparison}
  }
  \hfill
  \subfigure[fork-join vs. hybrid]{
  \includegraphics[width=0.6\columnwidth]{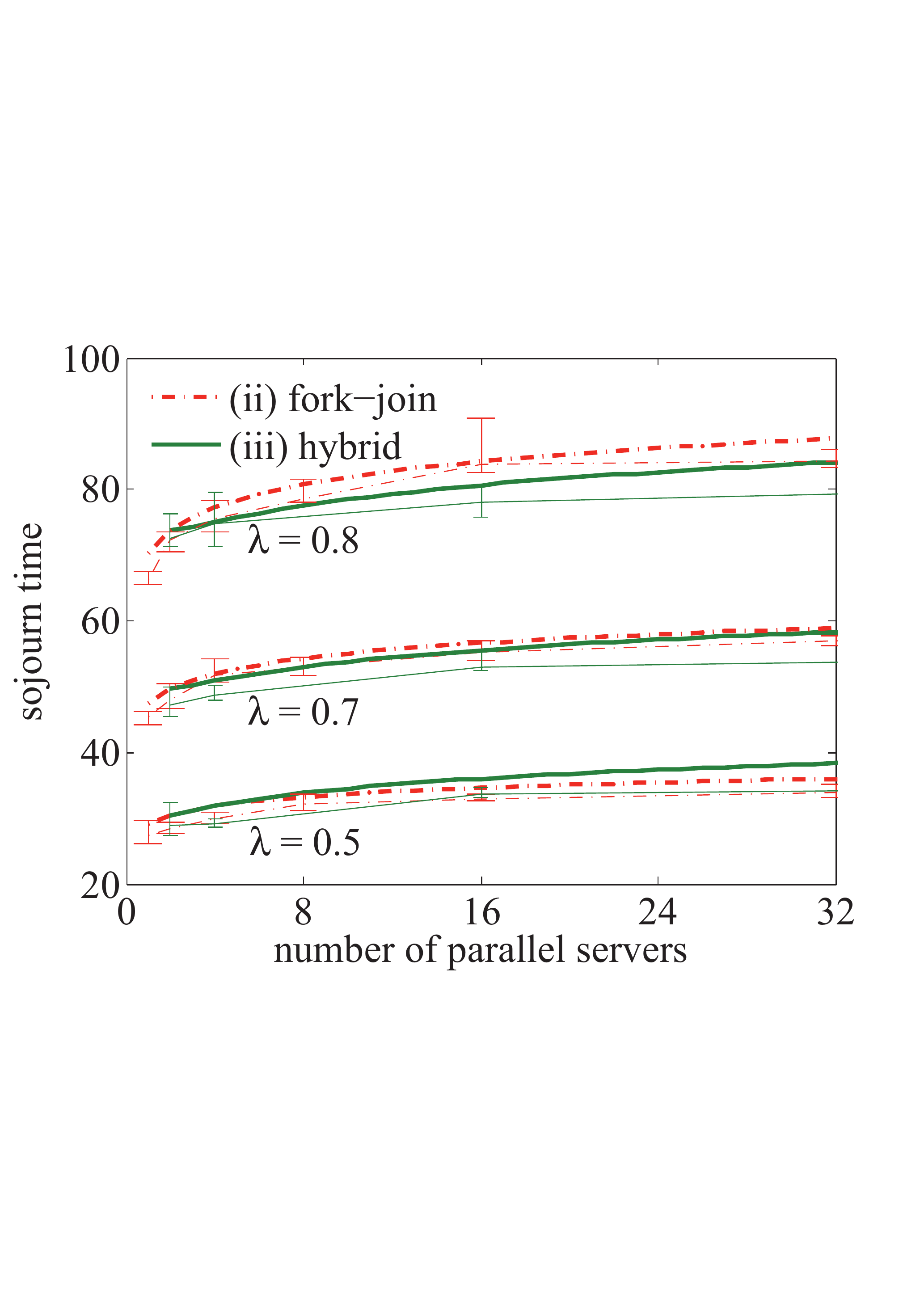}
  \label{fig:forkjoinhybridcomparison}
  }
  \hfill
  \subfigure[thinning with resequencing]{
  \includegraphics[width=0.6\columnwidth]{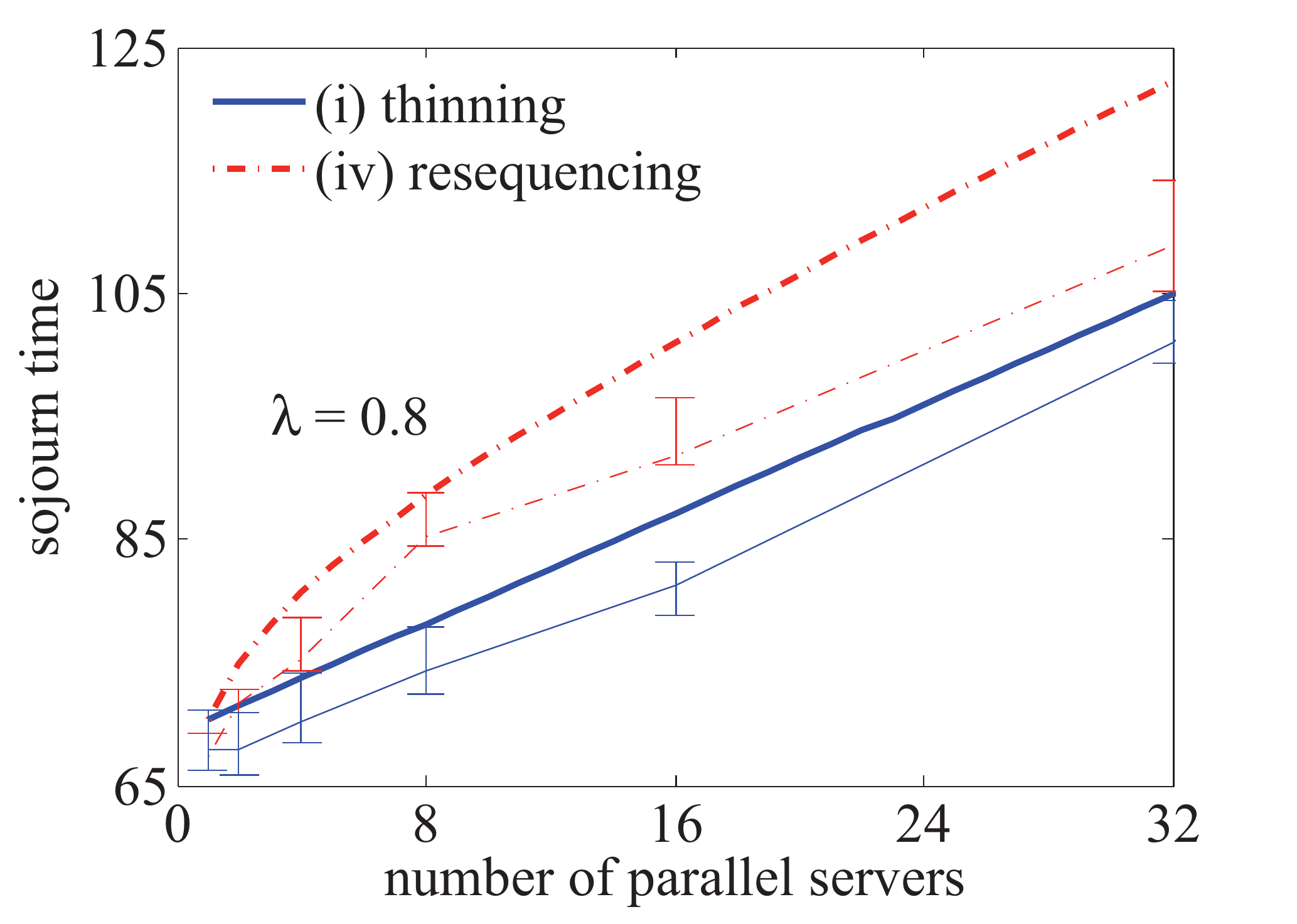}
  \label{fig:thinningresequencingcomparison}
  }
  \caption{Comparison of different multi-server configurations. Analytical bounds (thick lines) and simulation results (thin lines). Sojourn time bounds grow linearly with $k$ in the case of multi-server systems with thinning (i) and with $\ln k$ for fork-join systems (ii), each with $k$ servers. The hybrid system (iii) is optimized to achieve a growth with $\ln k$, as well. Resequencing (iv) adds another delay that grows with $\ln k$ to the multi-server system with thinning (i).}
  \label{fig:multiservercomparison}
\end{figure*}
Fig.~\ref{fig:thinningforkjoincomparison} and Fig.~\ref{fig:forkjoinhybridcomparison} compare the performance of the three systems for $\mu=1$, $\lambda \in \{0.5,0.7,0.8\}$, and $\varepsilon = 10^{-6}$. Clearly, the sojourn time bounds of the multi-server system with deterministic thinning grow linearly with $k$, as established by \eqref{eq:multiserverlingrowth}. For large $k$ the sojourn time of a job is dominated by its service time, so that the curves that are depicted for different arrival rates $\lambda$ converge eventually. The fork-join system mitigates the impact of large jobs by serving the tasks in parallel. It achieves a scaling in $\ln k$, see \eqref{eq:forkjoinlogkgrowth}, that is due to the synchronization constraint of the join operation. The fork-join system mostly outperforms the multi-server system, except in the case of large $\lambda$ and small $k$. The reason is that in a fork-join system the occurrence of a large task blocks all subsequent tasks of that server so the respective jobs cannot complete the join operation. In contrast, in a multi-server system there is no synchronization constraint, so subsequent jobs that are served by other servers can finish service earlier. A similar argument applies for the hybrid system that achieves a logarithmic scaling with $k$, see \eqref{eq:combinedsystemgrowth}, and outperforms the fork-join system if $\lambda$ is large.
%
%
\subsection{Resequencing}
Unlike fork-join systems, multi-server systems do not guarantee that jobs depart in their order of arrival. An optional resequencing step is depicted in Fig.~\ref{fig:trqueue}. As the resequencing step does not affect the waiting time of a job, we state the following corollary only for the sojourn time.
\begin{corollary}[Thinning and resequencing]
\label{cor:thinningresequencing}
Consider a multi-server system with $k$ parallel servers as in Lem.~\ref{lem:exactmaxplusserviceprocess}, thinning, and resequencing. The thinned arrivals have parameter $\rho_{A_i}(-\theta)$ as specified by Cor.~\ref{cor:thinning} and the jobs have service parameters $(\sigma_{L}(\theta),\rho_{L}(\theta))$ as specified by Def.~\ref{def:sigmarho}. For $n \ge 1$, the sojourn time satisfies
\begin{equation*}
\mathsf{P}[ T(n) > \tau ] \le k \alpha e^{\theta \rho_{L}(\theta)} e^{-\theta\tau} .
\end{equation*}
In the case of GI$\mid$G arrival and service processes, the free parameter $\theta > 0$ has to satisfy $\rho_{L}(\theta) < \rho_{A_i}(-\theta)$ and
\begin{equation*}
\alpha = \frac{e^{\theta\sigma_{L}(\theta)}}{1-e^{-\theta (\rho_{A_i}(-\theta)-\rho_{L}(\theta))}} .
\end{equation*}
In the case of GI$\mid$GI arrival and service processes, $\theta > 0$ has to satisfy $\rho_{L}(\theta) \le \rho_{A_i}(-\theta)$ and $\alpha=1$.
\end{corollary}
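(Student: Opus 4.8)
The plan is to reduce the resequenced $k$‑server system to $k$ ordinary single‑server queues, one per branch, and then read off the bound from Th.~\ref{th:gg1} together with a union bound over the branches.

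First I would pin down the departure time of the resequenced system. Because the resequencing buffer may release job $n$ only after all of jobs $1,\dots,n$ have left their assigned servers, and because each server serves first‑in first‑out, the last of the jobs routed to server $i$ among the first $n$ is also the last of them to depart server $i$; with $Y_i(n)$ the number of such jobs (the constant~\eqref{eq:mergeddepartures} for deterministic thinning) this is job $Y_i(n)$ in the local indexing of branch $i$. Writing $A_i$ for the thinned arrivals~\eqref{eq:randomthinning}, $L_i$ for the service times of the jobs at branch $i$, $S_i(m,\nu)=\sum_{\mu=m}^{\nu}L_i(\mu)$, and $D_i$ for the isolated departure process of branch $i$, the fact that each branch is an exact max‑plus server (Lem.~\ref{lem:exactmaxplusserviceprocess} applied per branch) gives
\begin{equation*}
D(n)=\max_{i\in[1,k]}D_i(Y_i(n))=\max_{i\in[1,k]}\max_{m\in[1,Y_i(n)]}\bigl\{A_i(m)+S_i(m,Y_i(n))\bigr\}.
\end{equation*}
Subtracting $A(n)$ and using $A(n)\ge A_i(Y_i(n))$ — the $Y_i(n)$th job of branch $i$ is one of the first $n$ external jobs — yields the reduction
\begin{equation*}
T(n)=D(n)-A(n)\le\max_{i\in[1,k]}\bigl\{D_i(Y_i(n))-A_i(Y_i(n))\bigr\}=\max_{i\in[1,k]}T_i(Y_i(n)),
\end{equation*}
i.e.\ the resequenced sojourn time of job $n$ is dominated by the largest over the branches of the isolated sojourn time~\eqref{eq:sojourntime} of the last job routed to that branch.

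Next I would invoke the single‑queue bound branch by branch. Branch $i$ is an exact max‑plus server fed by the thinned process $A_i$, whose inter‑arrival times are i.i.d.\ with $\sigma_{A_i}(-\theta)=0$ and $\rho_{A_i}(-\theta)$ as in Cor.~\ref{cor:thinning}, and whose service process has parameters $(\sigma_L(\theta),\rho_L(\theta))$ in the sense of Def.~\ref{def:sigmarho}; hence Th.~\ref{th:gg1} applies to branch $i$ and gives, uniformly in the local job index $r$,
\begin{equation*}
\mathsf{P}[T_i(r)>\tau]\le\alpha\,e^{\theta\rho_L(\theta)}e^{-\theta\tau},
\end{equation*}
with $\theta>0$ subject to $\rho_L(\theta)<\rho_{A_i}(-\theta)$ and $\alpha=e^{\theta\sigma_L(\theta)}/\bigl(1-e^{-\theta(\rho_{A_i}(-\theta)-\rho_L(\theta))}\bigr)$ for GI$\mid$G processes, and $\rho_L(\theta)\le\rho_{A_i}(-\theta)$, $\alpha=1$ for GI$\mid$GI processes — exactly the quantities in the statement. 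Combined with the reduction and the union bound over the $k$ branches this is $\mathsf{P}[T(n)>\tau]\le\sum_{i=1}^{k}\mathsf{P}[T_i(Y_i(n))>\tau]\le k\alpha\,e^{\theta\rho_L(\theta)}e^{-\theta\tau}$. I would also note that no separate resequencing term appears: resequencing does not change when any job enters service, so the waiting time is unaffected, and the sojourn‑time expression above already absorbs the resequencing delay — hence the corollary is stated only for the sojourn time.

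The step needing the most care is the last inequality when the thinning is random rather than deterministic. For deterministic thinning $Y_i(n)$ is the fixed value~\eqref{eq:mergeddepartures} (with branch indices~\eqref{eq:splitarrivals}), and $\mathsf{P}[T_i(Y_i(n))>\tau]\le\alpha e^{\theta\rho_L(\theta)}e^{-\theta\tau}$ is immediate from the uniform‑in‑$r$ bound. For random thinning, however, $Y_i(n)$ is a random index coupled to the sojourn‑time process of branch $i$ through the common routing variables — one cannot simply average the uniform bound over the law of $Y_i(n)$, nor condition on the routing (that would degrade the per‑branch arrival rate from $\rho_{A_i}(-\theta)$ back to $\rho_A(-\theta)$ and defeat the purpose of Cor.~\ref{cor:thinning}). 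I would address this by passing to the stationary isolated sojourn time $T_i^{\ast}$ of branch $i$, dominating the branch‑$i$ term by $T_i^{\ast}$ through the standard monotone coupling of the work‑conserving recursion started empty, and transferring the bound of Th.~\ref{th:gg1} (uniform in $r$, hence valid for $T_i^{\ast}$); this removes the dependence on the random index. The remaining items — that $\sigma_{A_i}(-\theta)=0$, that $S_i(m,\nu)\le S_i(m,Y_i(n))$ for $\nu\le Y_i(n)$ so that the server inequality of Def.~\ref{def:maxplusserviceprocess}, not only its exact form, already suffices, and that the stability condition of Th.~\ref{th:gg1} specializes to $\rho_L(\theta)<\rho_{A_i}(-\theta)$ (respectively $\le$ in the GI$\mid$GI case) — are routine bookkeeping.
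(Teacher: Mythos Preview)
Your proposal is correct and follows essentially the same route as the paper: express the resequenced departure as $D(n)=\max_{i}D_i(Y_i(n))$, use $A(n)\ge A_i(Y_i(n))$ to dominate $T(n)$ by the per-branch sojourn times, and then apply Th.~\ref{th:gg1} branchwise together with a union bound. Your discussion of the random-index issue for random thinning is in fact more careful than the paper, which simply writes ``estimate $\mathsf{P}[T_i(n)>\tau]$ using Th.~\ref{th:gg1}'' without commenting on the coupling between $Y_i(n)$ and the branch process.
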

\begin{proof}
Given the departure processes of each of the servers $D_i(n)$ for $i \in [1,k]$, the combined in-sequence departure process for $n \ge 0$ is
\begin{equation}
D(n) = \max_{i \in [1,k]} \{D_i(Y_i(n))\} ,
\label{eq:randomresequencing}
\end{equation}
where $D_i(0)=0$ by convention. Note that in evaluating \eqref{eq:randomresequencing} one only has to verify the departure of job $Y_i(n)$ from each server $i \in [1,k]$, since the departure of job $Y_i(n)$ from server $i$ implies the departure of all jobs $\nu \in [1,Y_i(n)]$ of the same server. I.e., $D_i(\nu) \le D_i(Y_i(n))$ for $\nu \in [1,Y_i(n)]$, since each server implements first-in first-out order.

Next, we use Lem.~\ref{lem:exactmaxplusserviceprocess} for each server $i \in [1,k]$ to obtain $D_i(Y_i(n)) = \max_{m \in [1,Y_i(n)]} \{A_i(m) + S_i(m,Y_i(n)) \}$. By insertion into \eqref{eq:randomresequencing} we have
\begin{equation}
D(n) = \max_{i \in [1,k]} \left\{ \sup_{m \in [1,Y_i(n)]} \{ A_i(m) + S_i(m,Y_i(n)) \} \right\} ,
\label{eq:resequenceddepartures}
\end{equation}
for $n \ge 1$, where $\sup\{\emptyset\} = 0$. We estimate the sojourn time $T(n) = D(n) - A(n) \le D(n) - A_i(Y_i(n))$ for $n \ge 1$, where we used that $A(n) \ge A_i(Y_i(n))$ for $i \in [1,k]$. By insertion of \eqref{eq:resequenceddepartures}, it follows for $n \ge 1$ that
\begin{equation*}
T(n) \le \max_{i \in [1,k]} \left\{ \sup_{m \in [1,Y_i(n)]} \{S_i(m,Y_i(n)) - A_i(m,Y_i(n)) \} \right\} .
\end{equation*}
By the union bound $\mathsf{P}[T(n) > \tau] \le \sum_{i=1}^k \mathsf{P}[T_i(n) > \tau]$, where $T_i(n) \le \sup_{m \in [1,Y_i(n)]} \{S_i(m,Y_i(n)) - A_i(m,Y_i(n)) \}$. Finally, we estimate $\mathsf{P}[T_i(n) > \tau]$ using Th.~\ref{th:gg1}.
\end{proof}
To evaluate the effect of $k$, we consider the case where the service time grows with $k$, expressed as $\rho_L(\theta) = k \rho_{Q}(\theta)$, as before. We investigate deterministic thinning and equate the sojourn time bound from Cor.~\ref{cor:thinningresequencing} with $\varepsilon$ to solve for
\begin{equation}
\tau \le k \rho_{Q}(\theta) + \frac{\ln k + \ln \alpha - \ln \varepsilon}{\theta}
\label{eq:multiserverreseqgrowth}
\end{equation}
for $\theta > 0$ under the stability condition $k\rho_{Q}(\theta) < k\rho_A(-\theta)$. Eq.~\eqref{eq:multiserverreseqgrowth} shows two effects: a linear growth with $k$ that is due to the increase of the job service time, as also observed for the multi-server system in \eqref{eq:multiserverlingrowth}, and a logarithmic term that is due to resequencing.
%
%
\paragraph*{M$\mid$$\text{E}_\text{k}$ jobs}
Fig.~\ref{fig:thinningresequencingcomparison} shows a numerical comparison of multi-server systems with deterministic thinning, with resequencing (iv) and without resequencing (i). We use the same parameters as above. The results clearly show the additional logarithmic delay due to resequencing. Compared to the fork-join system, resequencing consumes the advantage that multi-server systems with thinning showed for large $\lambda$ in Fig.~\ref{fig:thinningforkjoincomparison}.
%
%
\section{Non-Idling Systems}
\label{sec:nonidling}
A major drawback of the basic fork-join system, as well as the multi-server system with thinning, is that servers may idle while tasks (or jobs, respectively) are queued at other servers. This is due to the early and static assignment of tasks to servers at their time of arrival, that necessitates an individual queue at each server. One advantage of the early assignment is that data can be prefetched at the respective server during the waiting time. If a late assignment of tasks to servers is possible, an implementation with a single queue avoids idling; that is, whenever a server becomes idle, the system assigns the task (or job) at the head of the queue to that server. Compared to multi-queue systems, where each server has an individual queue, single-queue systems use additional feedback information from the servers to perform the late assignment of tasks. We assume that feedback and task assignment are instantaneous. Otherwise, e.g., in a distributed implementation, each task incurs an overhead time in addition to its service time. A further difficulty of single-queue systems is that jobs can depart in an order that differs from the order of their arrival, i.e., $D(n) \ngeqslant D(n-1)$. This implies also that the waiting time of job $n$ cannot simply be determined from $D(n-1)$ as in \eqref{eq:waitingtime} and Th.~\ref{th:gg1}, nor from $D(n-k)$. In the following subsections, we first provide performance bounds for single-queue multi-server systems, i.e., G$\mid$M$\mid$$k$ queues. The bounding method that we derive here, extends to single-queue fork-join systems that we present next.
%
%
\subsection{Single-Queue Multi-Server Systems}
We consider single-queue multi-server systems without a resequencing constraint. An example is shown in Fig.~\ref{fig:nimsqueue}. First, we prove that the system satisfies the definition of max-plus server and based thereon we derive performance bounds.
\begin{figure}
  \centering
  \includegraphics[width=0.67\columnwidth]{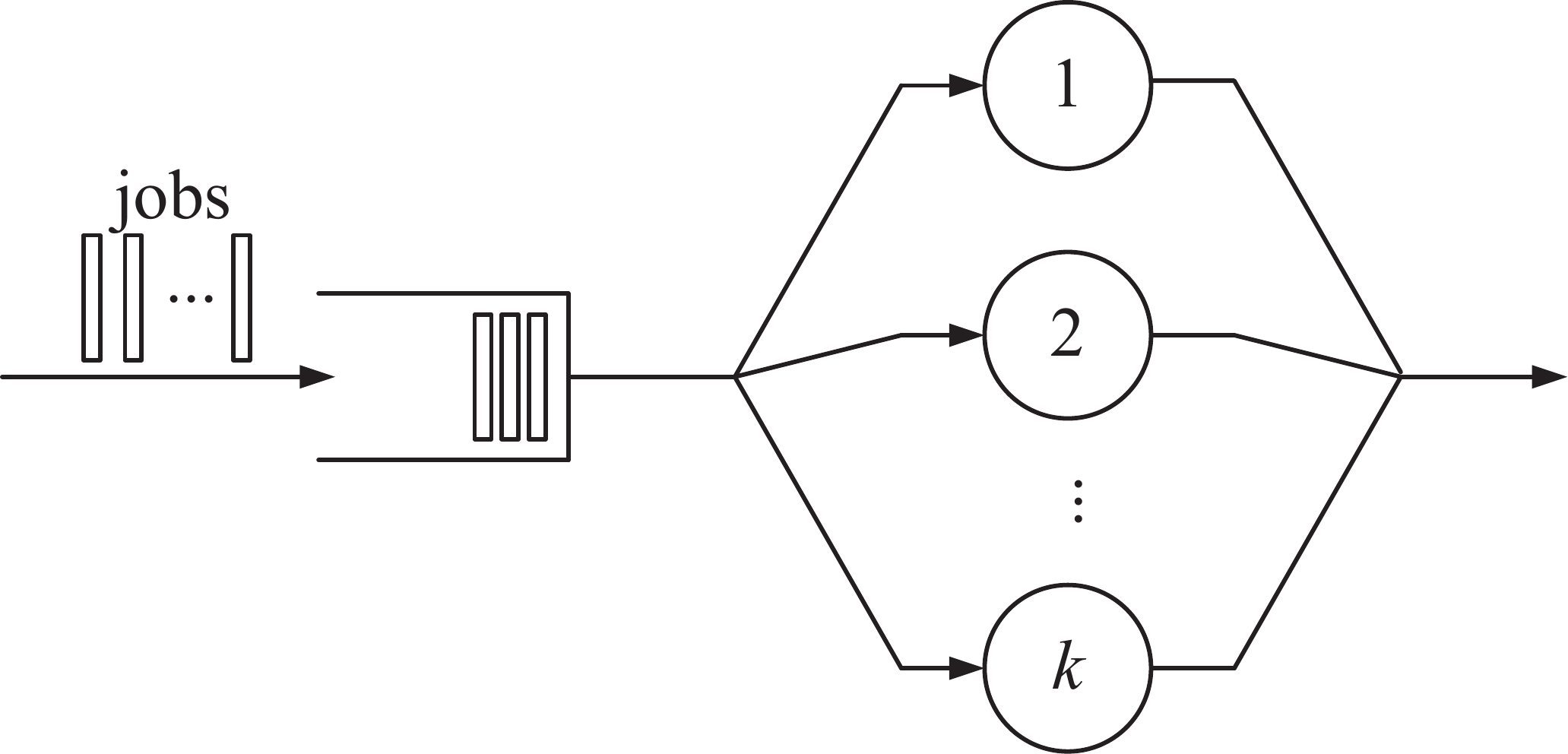}
  \caption{Single-queue multi-server system. The system is non-idling: once a server finishes a job, the head of the line job is assigned to that server.}
  \label{fig:nimsqueue}
\end{figure}
\begin{lemma}[Single-queue multi-server system]
\label{lem:nonidlingmultiserver}
Consider a single-queue multi-server system with $k$ parallel servers as in Lem.~\ref{lem:exactmaxplusserviceprocess}. Let $L(n)$ denote the service time of job $n$ for $n \ge 1$. Given that all servers are busy after job $n$ starts service at $V(n)$, define $Z(n)$ to be the time until the next server becomes idle. Otherwise let $Z(n) = 0$. Define for $n \ge m \ge 1$
\begin{equation*}
S(m,n) = L(n) + \sum_{\nu=m}^{n-1} Z(\nu).
\end{equation*}

i) The system is an exact $S(m,n)$ server.

ii) Given that the jobs have iid exponential service times with parameter $\mu$. The non-zero elements of $Z(n)$ are iid exponential random variables with parameter $k\mu$.

iii) Replace the zero elements of $Z(n)$ by iid exponential random variables with parameter $k\mu$ and compute $S(m,n)$ as above. The system is an $S(m,n)$ server.
\end{lemma}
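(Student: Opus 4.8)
The plan is to handle the three parts in order, since (ii) and (iii) build on the structural identity established in (i). For part (i), I would mimic the recursion in Lem.~\ref{lem:exactmaxplusserviceprocess}. The key observation is that in a non-idling single-queue system with $k$ servers, job $n$ starts service as soon as some server becomes free. If not all servers were busy when job $n-1$ entered service, then the queue was empty and a server was waiting, so $V(n) = \max\{A(n), \text{something}\}$ with the ``something'' being the time a server frees up; the definition $Z(n-1)=0$ in that case encodes exactly that job $n$ can start immediately after $n-1$ in the sense that no extra waiting is forced by server occupancy. More carefully, I would argue that $V(n) = \max\{A(n),\, V(n-1) + Z(n-1)\}$: if all servers are busy at $V(n-1)$, then $Z(n-1)$ is precisely the time until the next server frees, and job $n$ (if present) grabs it; if some server is idle at $V(n-1)$, then $Z(n-1)=0$ and $V(n) = \max\{A(n), V(n-1)\} = A(n)$ since the queue must have been empty. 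Recursive insertion then gives $V(n) = \max_{m\in[1,n]}\{A(m) + \sum_{\nu=m}^{n-1} Z(\nu)\}$, and $D(n) = V(n) + L(n)$ yields the claimed $S(m,n)$ with equality, i.e.\ an exact server.

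For part (ii), I would condition on the event that all $k$ servers are busy immediately after $V(n)$. By the memorylessness of the exponential service times with parameter $\mu$, the residual service times of the $k$ busy servers at time $V(n)$ are iid exponential($\mu$), so the time until the first of them completes is the minimum of $k$ iid exponential($\mu$) random variables, which is exponential($k\mu$); this is $Z(n)$ on that event. Independence across different $n$ follows because, again by memorylessness, the residual-service configuration regenerates at each such epoch and the fresh service times drawn are independent of the past. I would note that conditioning on the server-occupancy pattern is compatible because that pattern is determined by the arrival process and the service completions up to $V(n)$, which are independent of the future service increments.

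For part (iii), the point is purely stochastic-dominance bookkeeping: replacing the zero entries of $Z(n)$ by (independent) exponential($k\mu$) variables only increases each $Z(\nu)$, hence increases $S(m,n) = L(n) + \sum_{\nu=m}^{n-1} Z(\nu)$ pathwise. Since the original system is an exact $S(m,n)$ server (part (i)), the inequality $D(n) \le \max_{m\in[1,n]}\{A(m) + S(m,n)\}$ with the inflated $S(m,n)$ holds, so the modified process still satisfies Def.~\ref{def:maxplusserviceprocess} — it is an $S(m,n)$ server (no longer exact). The payoff, which I would flag but not belabor here, is that after this replacement $Z(\nu)$ consists of iid exponential($k\mu$) increments that are moreover independent of $L(n)$, so the service process has a clean $(\sigma_S,\rho_S)$ characterization via Def.~\ref{def:sigmarho} and Th.~\ref{th:gg1} applies directly.

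The main obstacle is part (i), specifically justifying the recursion $V(n) = \max\{A(n), V(n-1)+Z(n-1)\}$ rigorously. One has to be careful that ``the next server becomes idle'' is measured relative to $V(n-1)$ and correctly captures the head-of-line discipline when multiple jobs are queued, and that the case split on whether all servers are busy is exhaustive and consistent with the convention $Z=0$. A subtlety is that jobs may depart out of order here, so FIFO-style arguments do not transfer verbatim; the claim must rest only on the order in which jobs \emph{enter} service, which is FIFO by assumption, not on departure order. Parts (ii) and (iii) are then comparatively routine given (i).
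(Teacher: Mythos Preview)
Your proposal is correct and follows essentially the same approach as the paper: the recursion $V(n)=\max\{A(n),V(n-1)+Z(n-1)\}$ with recursive insertion for part (i), memorylessness to identify the nonzero $Z(n)$ as $\min$ of $k$ exponentials for part (ii), and pathwise monotonicity after replacing zeros by nonnegative variables for part (iii). Your additional remarks on independence across $n$ and on FIFO \emph{entry} (as opposed to departure) order are apt elaborations that the paper leaves implicit.
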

\begin{proof}
Using the definition of $Z(n)$, it holds for $n \ge 2$ that
\begin{equation}
V(n) = \max \{ A(n), V(n-1) + Z(n-1) \} .
\label{eq:starttimemultiserver}
\end{equation}
Further, $V(1) = A(1)$ and since $Z(n) = 0$ for $n \in [1,k-1]$ we have $V(n) = A(n)$ also for $n \in [2,k]$. By recursive insertion of~\eqref{eq:starttimemultiserver} we obtain for $n \ge 1$ that
\begin{equation}
V(n) = \max_{m \in [1,n]} \left\{ A(m) + \sum_{\nu=m}^{n-1} Z(\nu) \right\} .
\label{eq:starttimemultiserversolved}
\end{equation}
Since $D(n) = V(n) + L(n)$ we have with~\eqref{eq:starttimemultiserversolved} that $D(n) = \max_{m \in [1,n]} \{A(m) + S(m,n)\}$, which proves the first part.

Next, we consider iid exponential service times with parameter $\mu$ and investigate the distribution of $Z(n)$ for $n \ge 1$. Given all servers are busy after the start of service of job $n$ at $V(n)$. This implies that there are another $k-1$ jobs with indices smaller $n$ that are already in service at $V(n)$. Due to the memorylessness of the exponential distribution, the residual service time of each of these jobs as well as the service time of job $n$ are iid exponential random variables with parameter $\mu$. Since the minimum of $k$ iid exponential random variables with parameter $\mu$ is an exponential random variable with parameter $k\mu$, it follows that the time until the next server becomes idle is exponentially distributed with parameter $k \mu$.

For the last part, we use that exponential random variables are non-negative. If we replace all $Z(n)$ that are zero by iid exponential random variables with parameter $k\mu$, \eqref{eq:starttimemultiserversolved} becomes $V(n) \le \max_{m \in [1,n]} \{ A(m) + \sum_{\nu=m}^{n-1} Z(\nu) \}$ for $n \ge 1$, and consequently $D(n) \le \max_{m \in [1,n]} \{A(m) + S(m,n)\}$.
\end{proof}

\begin{figure*}
  \centering
  \subfigure[bound vs. exact result]{
  \includegraphics[width=0.6\columnwidth]{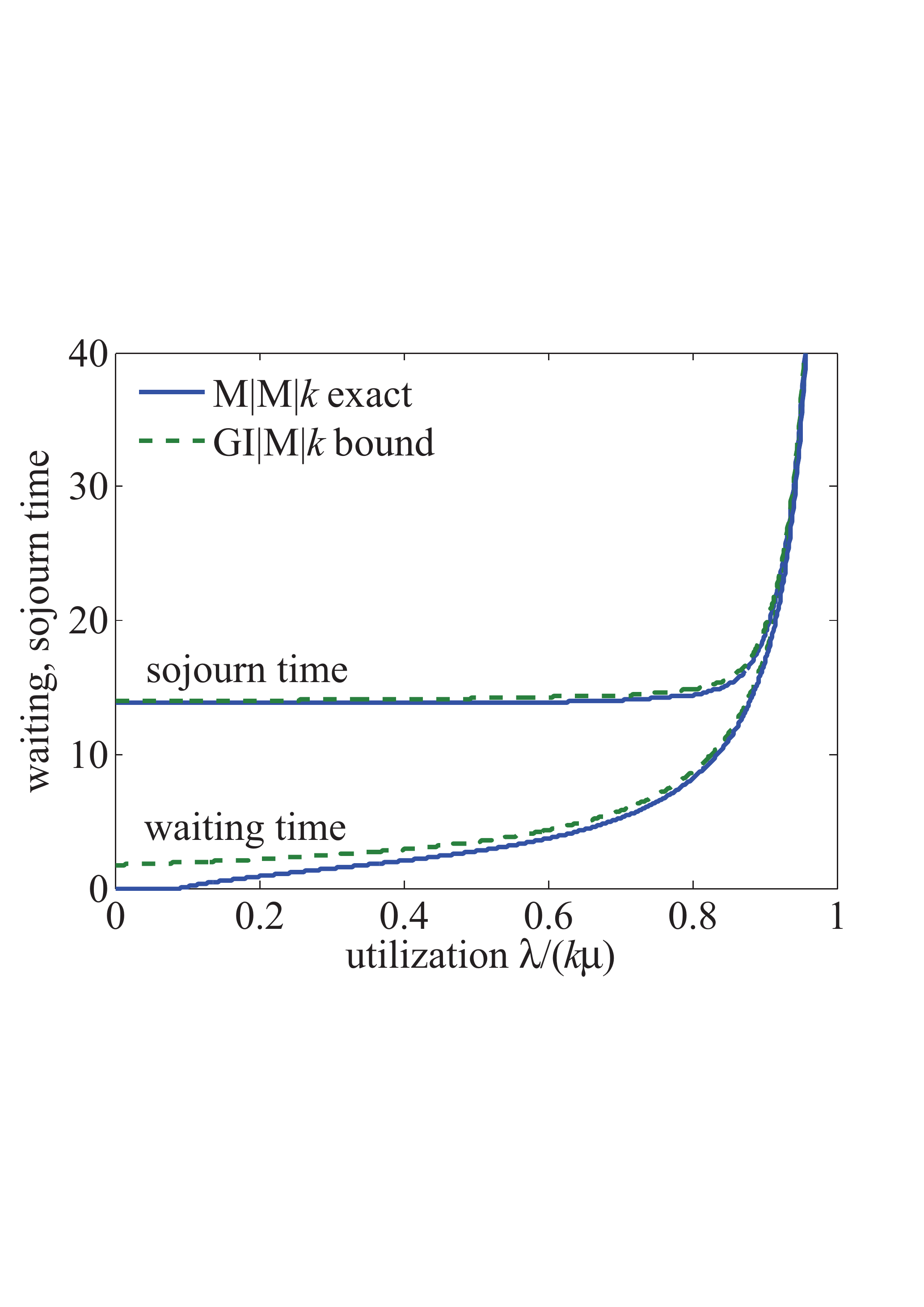}
  \label{fig:nonidlingmultiserverexact}
  }
  \hfill
  \subfigure[tail decay]{
  \includegraphics[width=0.6\columnwidth]{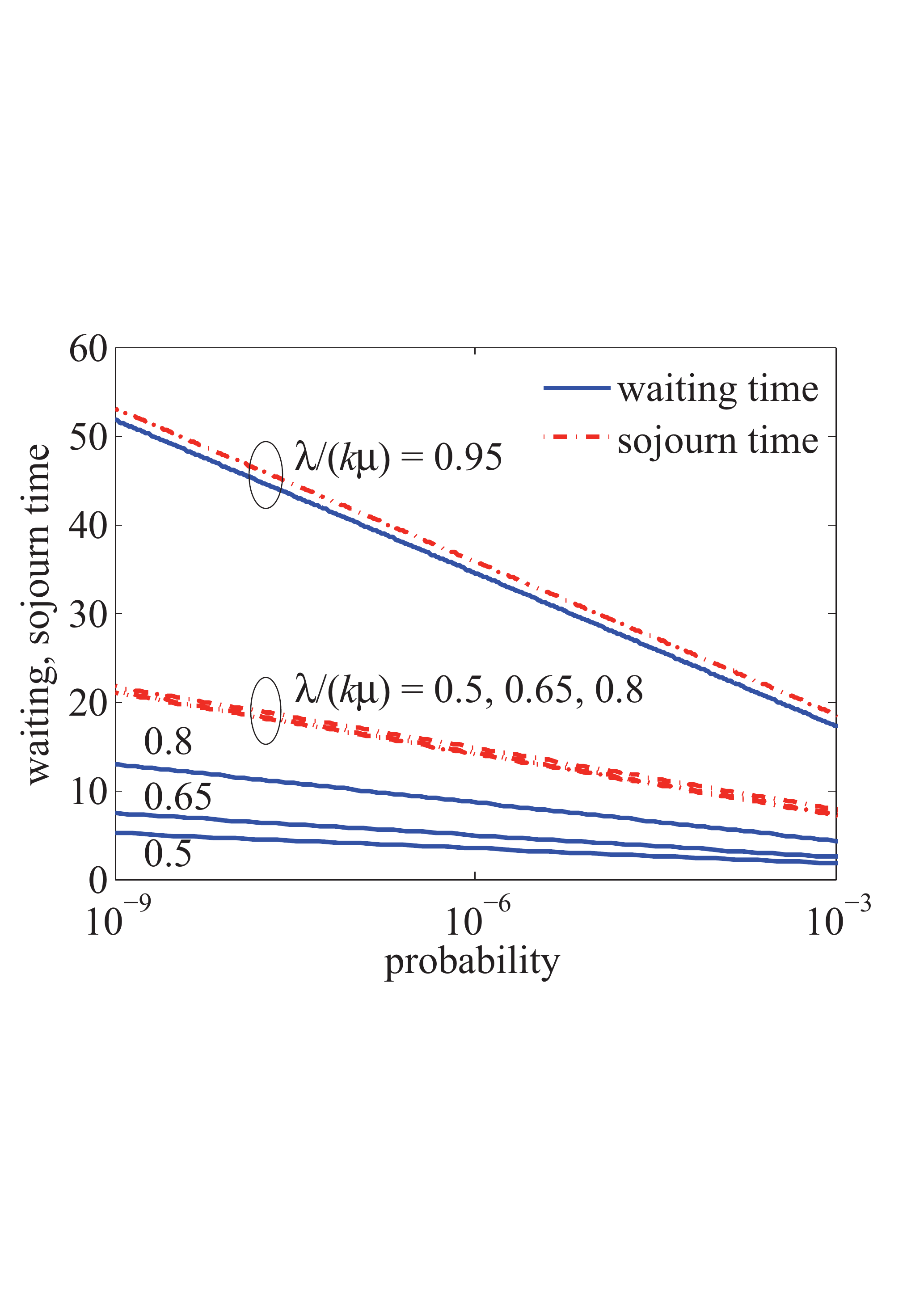}
  \label{fig:nonidlingmultiservertail}
  }
  \hfill
  \subfigure[single- vs. multi-queue system]{
  \includegraphics[width=0.6\columnwidth]{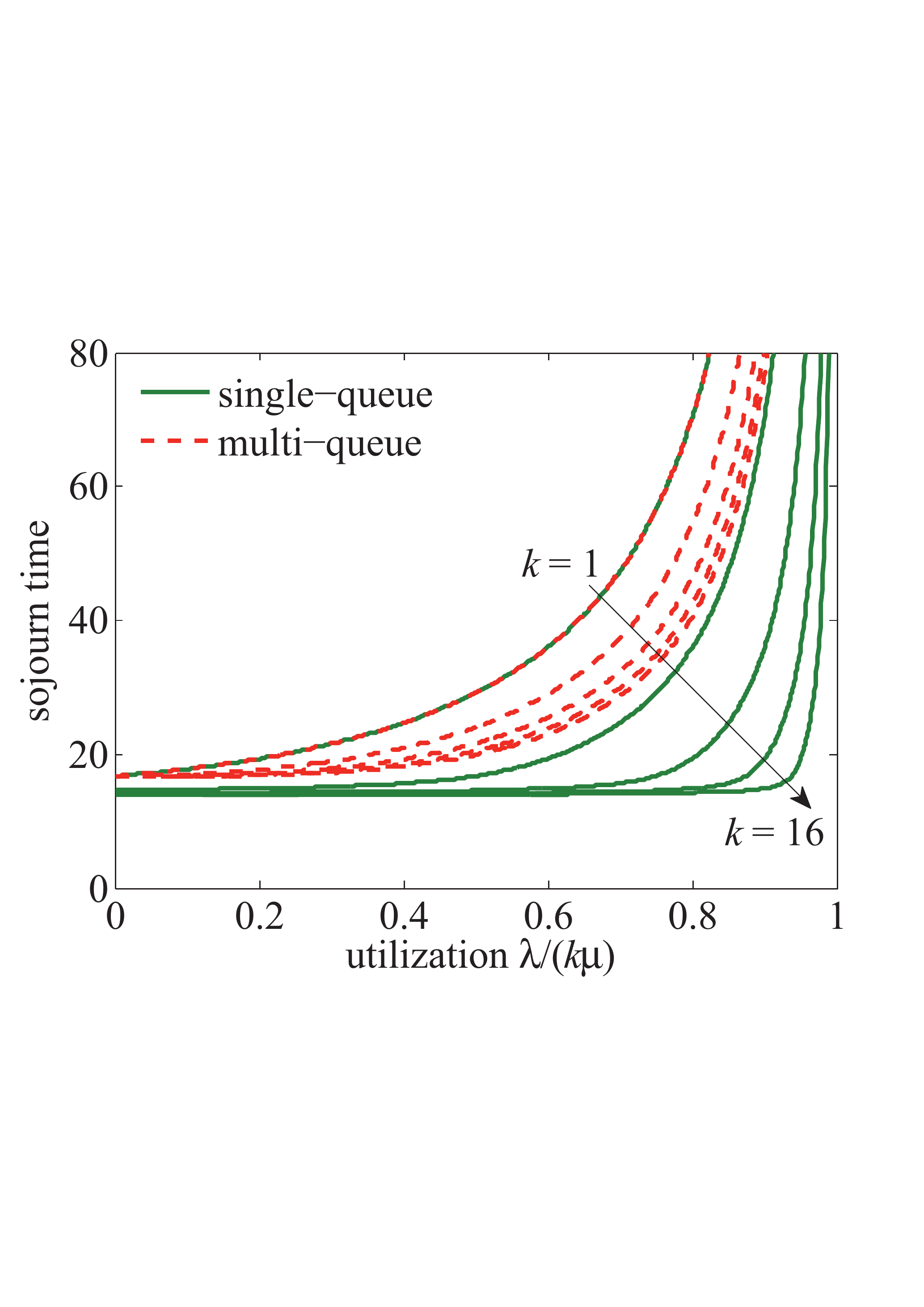}
  \label{fig:nonidlingmultiserverk}
  }
  \caption{Single-queue multi-server system. (a) The bounds agree closely with the exact result. (b) For high utilization the sojourn time is dominated by the waiting time, otherwise by the service time. (c) Comparison of single-queue and multi-queue systems with thinning for $k=\{1,2,4,8,16\}$. The sojourn time decreases with $k$. The improvement is, however, significantly larger in the case of the single-queue system that can sustain a utilization close to one if $k$ is large.}
  \label{fig:nonidlingmultiservercomparison}
\end{figure*}

Considering iid exponential service times $L(n)$ with parameter $\mu$, we have $\rho_L(\theta)$ for $\theta < \mu$ as in \eqref{eq:exposerviceparameter}. With Lem.~\ref{lem:nonidlingmultiserver} (iii), $Z(n)$ is composed of iid exponential random variables with parameter $k\mu$. Invoking \eqref{eq:exposerviceparameter} with parameter $k\mu$ gives
\begin{equation}
\rho_Z(\theta) = \frac{1}{\theta} \ln \left( \frac{k\mu}{k\mu-\theta} \right) ,
\label{eq:rhowaitingnonidling}
\end{equation}
for $\theta < k\mu$. With~\eqref{eq:exposerviceparameter} and~\eqref{eq:rhowaitingnonidling} the MGF of $S(m,n)$ in Lem.~\ref{lem:nonidlingmultiserver} (iii) is $\mathsf{E}\bigl[e^{\theta S(m,n)}\bigr] = e^{\theta (\rho_L(\theta) + \rho_Z(\theta) (n-m))}$ for $\theta < \mu$. Hence, $S(m,n)$ satisfies Def.~\ref{def:sigmarho} with parameters $\sigma_S(\theta) = \rho_L(\theta) - \rho_Z(\theta)$ and $\rho_S(\theta) = \rho_Z(\theta)$.

As verified by Lem.~\ref{lem:nonidlingmultiserver} (iii), the single-queue multi-server system is an $S(m,n)$ server, where $S(m,n)$ is composed of independent increments. Hence, a sojourn time bound can be derived from Th.~\ref{th:gg1} using the parameters $(\sigma_S(\theta),\rho_S(\theta))$ defined above. However, the waiting time bound of Th.~\ref{th:gg1} uses a definition of waiting time $W(n) = [D(n-1) - A(n)]^+$ that does not apply to the single-queue multi-server system, where the departure of job $n-1$ does not generally mark the start of the service of job $n$. The following theorem first formulates the waiting time for single-queue multi-server systems and in a second step derives a sojourn time bound from the waiting time. The derivation avoids the technical limitation $\theta < \mu$ that applies if \eqref{eq:exposerviceparameter} is inserted into Th.~\ref{th:gg1} and thus enables tighter bounds.
\begin{theorem}[Single-queue multi-server system]
\label{th:nonidlingmultiserver}
Consider a single-queue multi-server system as in Lem.~\ref{lem:nonidlingmultiserver}, with arrival parameters $(\sigma_A(-\theta),\rho_A(-\theta))$ as specified by Def.~\ref{def:sigmarho}, iid exponential job service times with parameter $\mu$, and parameter $\rho_Z(\theta)$ given by \eqref{eq:rhowaitingnonidling}. For $n \ge 1$, the sojourn time satisfies
\begin{equation*}
\mathsf{P}[T(n) > \tau] \le \alpha \frac{\mu}{\mu-\theta} \left( e^{-\theta\tau} - e^{-\mu\tau} \right) + e^{-\mu\tau} ,
\end{equation*}
and the waiting time
\begin{equation*}
\mathsf{P}[W(n) > \tau] \le \alpha e^{-\theta\tau} .
\end{equation*}
In the case of G$\mid$M arrival and service processes, the free parameter $0 < \theta < k\mu, \theta \neq \mu$ has to satisfy $\rho_{Z}(\theta) < \rho_A(-\theta)$ and
\begin{equation*}
\alpha = \frac{e^{\theta \sigma_A(-\theta)}}{1-e^{-\theta (\rho_A(-\theta) - \rho_Z(\theta))}} .
\end{equation*}
In the special case of GI$\mid$M arrival and service processes, $0 < \theta < k\mu, \theta \neq \mu$ has to satisfy $\rho_{Z}(\theta) \le \rho_A(-\theta)$ and $\alpha=1$.
\end{theorem}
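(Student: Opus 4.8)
The plan is to derive the waiting time bound first, exactly as in the G$\mid$G proof of Theorem~\ref{th:gg1} but with the service process $S(m,n)$ of Lemma~\ref{lem:nonidlingmultiserver}(iii), and then to obtain the sojourn time bound by a conditioning argument on the last job's service time. First I would observe that, by Lemma~\ref{lem:nonidlingmultiserver}(iii), the system is an $S(m,n)$ server with $S(m,n) = L(n) + \sum_{\nu=m}^{n-1} Z(\nu)$ where the $Z(\nu)$ are iid exponential with parameter $k\mu$ and independent of $L(n)$. Since all servers are busy exactly when job $n$ does not start immediately, the correct notion of waiting time here is $W(n) = [\,V(n) - A(n)\,]^+ = \bigl[\max_{m\in[1,n-1]}\{A(m) + \sum_{\nu=m}^{n-1} Z(\nu)\} - A(n)\bigr]^+$ from \eqref{eq:starttimemultiserversolved}, which does \emph{not} involve $L(n)$. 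I would then apply the Chernoff bound and, for the GI$\mid$M case, Doob's maximal inequality to the exponential supermartingale built from the increments $Z(\nu) - A(\nu,\nu+1)$, or for the G$\mid$M case the union-bound-over-$m$ with a geometric sum, exactly mirroring the two cases of Theorem~\ref{th:gg1}. Because $Z$ has parameter $k\mu$ and $\rho_Z(\theta)$ from \eqref{eq:rhowaitingnonidling} is finite for all $\theta < k\mu$, the stability condition becomes $\rho_Z(\theta) < \rho_A(-\theta)$ (resp.\ $\le$), and the constant $\alpha = e^{\theta\sigma_A(-\theta)}/(1-e^{-\theta(\rho_A(-\theta)-\rho_Z(\theta))})$ (resp.\ $\alpha = 1$) drops out, with $\theta$ ranging over $(0,k\mu)$; this yields $\mathsf{P}[W(n) > \tau] \le \alpha e^{-\theta\tau}$.

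For the sojourn time I would write $T(n) = W(n) + L(n)$, where $L(n)$ is exponential with parameter $\mu$ and is independent of $W(n)$ — since $W(n)$ depends only on the arrival process and on the residual-service-time increments $Z(\nu)$ for $\nu < n$, whereas $L(n)$ is the fresh service time of job $n$ and is memoryless. Then $\mathsf{P}[T(n) > \tau] = \mathsf{P}[W(n) + L(n) > \tau]$, which I would bound by conditioning on $L(n) = x$ and integrating the waiting-time tail $\min\{1, \alpha e^{-\theta(\tau-x)}\}$ against the exponential density $\mu e^{-\mu x}$, splitting the range of $x$ at $\tau$ (for $x \ge \tau$ the waiting-time tail bound is the trivial bound $1$). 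The integral $\int_0^\tau \alpha e^{-\theta(\tau-x)} \mu e^{-\mu x}\,\mathrm{d}x$ evaluates, using $\theta \neq \mu$, to $\alpha \frac{\mu}{\mu-\theta}(e^{-\theta\tau} - e^{-\mu\tau})$, and the tail contribution $\int_\tau^\infty \mu e^{-\mu x}\,\mathrm{d}x = e^{-\mu\tau}$, giving precisely the stated sojourn time bound. Equivalently one can run the Chernoff step on $W(n) + L(n)$ directly using $\mathsf{M}_{L(n)}(\theta) = \mu/(\mu-\theta)$ and the waiting-time MGF bound, but the conditioning formulation makes the $e^{-\mu\tau}$ term and the restriction $\theta < \mu$ being unnecessary most transparent.

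The main obstacle is justifying the independence of $W(n)$ and $L(n)$ and, more delicately, the validity of treating the $Z(\nu)$ as iid exponential random variables that are independent across $\nu$ and of the fresh service times — this is exactly what Lemma~\ref{lem:nonidlingmultiserver}(ii)--(iii) is engineered to supply, via the memorylessness of the exponential distribution applied to the $k-1$ residual service times plus job $n$'s own service time at each epoch $V(n)$. I would lean on that lemma rather than re-deriving it, but I would need to state carefully that $W(n)$ is a function of $\{A(m)\}_{m\le n}$ and $\{Z(\nu)\}_{\nu<n}$ only, so that replacing the zero $Z$-elements by independent exponentials (as in part~(iii)) preserves the bound and leaves $L(n)$ independent of everything that enters $W(n)$. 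A secondary point worth a sentence is why the usual $W(n) = [D(n-1)-A(n)]^+$ expression fails here (out-of-order departures) so that the martingale/union-bound argument must be run on $V(n)$ directly; the discussion preceding the theorem already flags this, so I would simply cite it.
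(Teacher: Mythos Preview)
Your proposal is correct and matches the paper's approach. The paper derives the waiting-time bound from $W(n)=V(n)-A(n)$ via Lemma~\ref{lem:nonidlingmultiserver}(iii) and the proof machinery of Theorem~\ref{th:gg1}, then obtains the sojourn-time bound from $T(n)=W(n)+L(n)$ by the CDF convolution $F_T(\tau)=\int_0^\tau F_W(\tau-x)f_L(x)\,dx$, which is exactly your conditioning-on-$L(n)$ computation written in complementary form; your explicit discussion of why $L(n)$ is independent of $W(n)$ is in fact more careful than the paper's, which simply convolves.
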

\begin{proof}
With Lem.~\ref{lem:nonidlingmultiserver} (iii) and \eqref{eq:starttimemultiserversolved}, the waiting time $W(n) = V(n) - A(n)$ for $n \ge 1$ is estimated as
\begin{equation*}
W(n) \le \max_{m \in [1,n]} \left\{ \sum_{\nu=m}^{n-1} Z(\nu) - A(m,n) \right\} ,
\end{equation*}
where $Z(n)$ for $n \ge 1$ are iid exponential random variables with parameter $k\mu$. The derivation of the statistical waiting time bound closely follows the proof of Th.~\ref{th:gg1} in the appendix and is omitted.

To derive a sojourn time bound we use that $D(n) = V(n) + L(n)$ so that $T(n) = D(n) - A(n)$ can be expressed as $T(n) = W(n) + L(n)$, where we substituted $W(n) = V(n) - A(n)$. We use the waiting time bound from Th.~\ref{th:nonidlingmultiserver} to estimate the waiting time CDF\footnote{We note that for $\alpha > 1$, a tighter bound can be derived, using that $F_W(\tau) = 0$ for $\tau < -\ln (1/\alpha)/\theta$. We omit the details for notational brevity.} as $F_W(\tau) = \mathsf{P}[W(n) \le \tau] \ge 1-\alpha e^{-\theta \tau}$. By convolution with the exponential job service time PDF $f_L(\tau) = \mu e^{-\mu \tau}$ for $\tau \ge 0$ we obtain the CDF of the sojourn time as
\begin{equation*}
F_T(\tau) = \int_{0}^{\tau} F_W(\tau-x) f_L(x) dx
\end{equation*}
that evaluates for $\theta \neq \mu$ to
\begin{equation*}
F_T(\tau) \ge 1 - e^{-\mu \tau} - \alpha \frac{\mu}{\mu - \theta} (e^{-\theta \tau} - e^{-\mu \tau}) ,
\end{equation*}
which completes the proof.
\end{proof}
%
%
\paragraph*{M$\mid$M$\mid$$k$ Queue}
For numerical evaluation, we use jobs with exponential inter-arrival times with parameter $\lambda$ and exponential service times with parameter $\mu=1$. In this case, the system becomes the well-known M$\mid$M$\mid$$k$ queue for which there is an exact solution. We note that Th.~\ref{th:nonidlingmultiserver} is not limited to exponential arrivals. Fig.~\ref{fig:nonidlingmultiserverexact} compares the waiting time and the sojourn time bounds with the exact results for $k=8$, $\varepsilon=10^{-6}$, and different utilizations defined as $\lambda/(k\mu)$. The exact waiting time distribution of the M$\mid$M$\mid$$k$ queue is $\mathsf{P}[W(n) > \tau] = P_k e^{- (k\mu-\lambda)\tau}$ where $P_k$ is the probability of waiting, i.e., the probability that $k$ or more jobs are in the system~\cite{gross:queueingtheory}. The bounds from Th.~\ref{th:nonidlingmultiserver} are obtained by insertion of $\rho_A(-\theta)$ from \eqref{eq:expoarrivalparameter}. From the stability condition $\rho_Z(\theta) \le \rho_A(-\theta)$ we find $\theta \le k\mu-\lambda$. By the choice of maximal $\theta$, the waiting time bound $\mathsf{P}[W(n) > \tau] \le e^{- (k\mu-\lambda)\tau}$ exhibits the same exponential speed of decay and differs by the prefactor $P_k$ that approaches one for high utilization. Fig.~\ref{fig:nonidlingmultiserverexact} confirms the good agreement of the waiting time bound and shows visible differences only in the case of low utilization.

Regarding the sojourn time bound in Fig.~\ref{fig:nonidlingmultiserverexact}, we generally have good agreement. Here, two effects can be distinguished. These are expressed by the two parts of the sojourn time bound in Th.~\ref{th:nonidlingmultiserver} that decay as $e^{-\theta\tau}$ and $e^{-\mu\tau}$, respectively. In the case of high utilization, the sojourn time is dominated by the waiting time that decays with $e^{-\theta\tau}$ where $\theta = k\mu-\lambda$. Otherwise, if the utilization satisfies $\lambda/(k\mu) < 1-1/k$ (that is $0.875$ for $k=8$ here), it follows that $\theta > \mu$ so that the waiting time decays quickly and the sojourn time is mostly due to the service time of the job itself that decays slower with $e^{-\mu\tau}$. Fig.~\ref{fig:nonidlingmultiservertail} details the effect, again for $k=8$. For utilizations below $0.875$, the waiting time decays faster than the service time so that the sojourn time changes only marginally if the utilization is increased from $0.5$ to $0.8$. In contrast, once the utilization exceeds $0.875$, the waiting time dominates.

Fig~\ref{fig:nonidlingmultiserverk} evaluates the sojourn time of the single-queue multi-server system for $k = \{1,2,4,8,16\}$ with respect to the multi-queue system with thinning. For comparability, we use the same technique to derive the sojourn time from the waiting time as in Th.~\ref{th:nonidlingmultiserver} for both systems. While for $k=1$ the systems are identical, the single-queue multi-server system outperforms the multi-queue system for larger $k$. Given a target sojourn time bound, the single-queue system can sustain a significantly higher utilization.
\paragraph*{G$\mid$D$\mid$$k$ Queue}
While single-queue multi-server systems show a significant advantage if the service times are exponential, we note that this is not generally the case. An example are deterministic service times, i.e., $L(n)=L$ for $n \ge 1$. In this case, the single-queue multi-server system is governed by $V(n) = \max\{A(n) , V(n-k)+L\}$ for $n > k$ and $V(n) = A(n)$ for $n \in [1,k]$. This is an immediate consequence of the deterministic service times, which ensure that all jobs finish service in the order of their arrival. Considering the multi-queue multi-server system with deterministic thinning, job $n$ is assigned to server $i = (n-1) \mod k + 1$ following job $n-k$. Hence job $n$ starts service at $V(n) = \max\{A(n) , V(n-k)+L\}$ for $n > k$ and $V(n) = A(n)$ for $n \in [1,k]$. This shows that the single-queue und the multi-queue multi-server system perform identical in case of deterministic service times.
%
%
\subsection{Single-Queue Fork-Join Systems}
\label{sec:forkjoinnonidling}
In a single-queue fork-join system, jobs are composed of $k$ tasks that are stored in a single-queue. Once any of the $k$ parallel servers becomes idle, it fetches the next task from the head of the queue. An example is shown in Fig.~\ref{fig:sparkqueue}. As tasks may finish service out-of-sequence, the join operation uses a buffer with random access to complete a job immediately once all of its tasks are finished, regardless of the order of arrival. This implies that $D(n) \ngeqslant D(n-1)$. Our analysis of single-queue fork-join systems follows the same essential steps as in the case of the single-queue multi-server systems with the additional resynchronization constraint of the join operation.
\begin{figure}
  \centering
  \includegraphics[width=0.95\columnwidth]{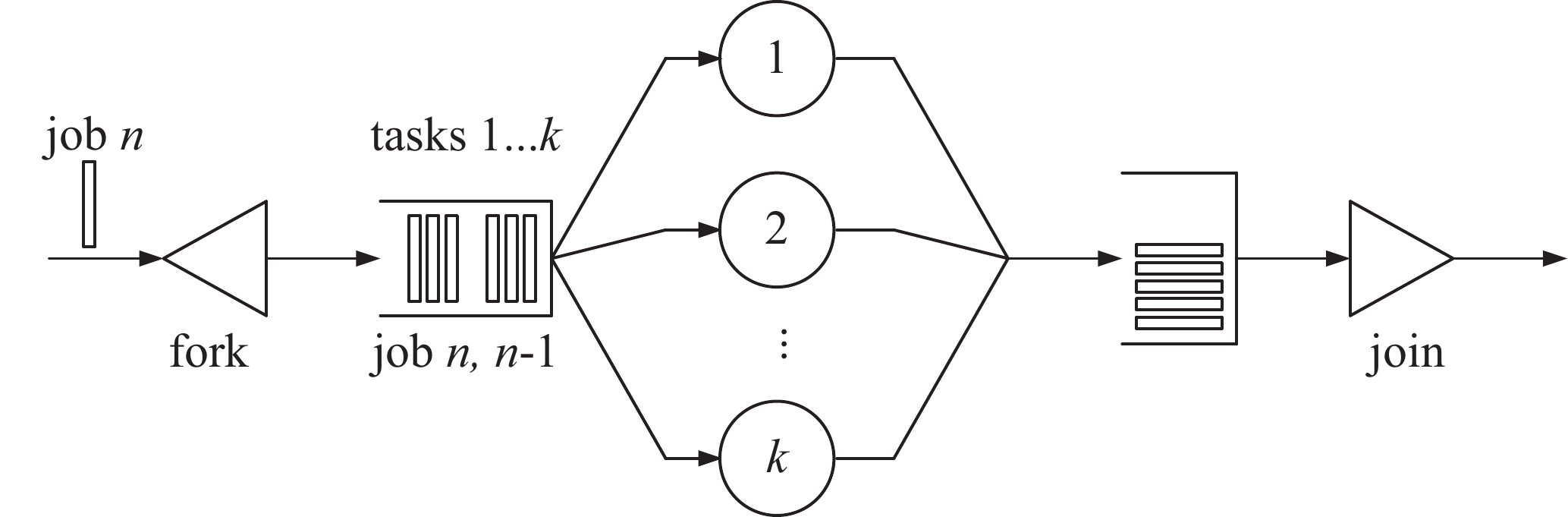}
  \caption{Single-queue fork-join system. The system is non-idling: once a server finishes a task, the head of the line task is assigned to that server. The join operation uses a random access buffer to complete jobs (possibly out of sequence) once all tasks are finished.}
  \label{fig:sparkqueue}
\end{figure}
\begin{lemma}[Single-queue fork-join system]
\label{lem:nonidlingforkjoin}
Consider a single-queue fork-join system with $k$ parallel servers as in Lem.~\ref{lem:exactmaxplusserviceprocess}. Let $Q_i(n)$ denote the service time of task $i$ of job $n$ for $n \ge 1$ and $i \in [1,k]$. Given that all servers are busy after task $i$ of job $n$ starts service at $V_i(n)$, define $Z_i(n)$ to be the time until the next server becomes idle. Otherwise let $Z_i(n) = 0$. Define for $n \ge m \ge 1$
\begin{equation*}
S(m,n) = \max_{i \in [1,k]} \Biggl\{ Q_i(n) + \sum_{j=1}^{i-1} Z_j(n) + \sum_{\nu = m}^{n-1}\sum_{j=1}^{k} Z_j(\nu) \Biggr\} .
\end{equation*}

i) The system is an exact $S(m,n)$ server.

ii) Given that the tasks have iid exponential service times with parameter $\mu$. The non-zero elements of $Z_i(n)$ are iid exponential random variables with parameter $k\mu$.

iii) Replace the zero elements of $Z_i(n)$ by iid exponential random variables with parameter $k\mu$ and compute $S(m,n)$ as above. The system is an $S(m,n)$ server.
\end{lemma}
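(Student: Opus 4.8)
The plan is to mirror the structure of the single-queue multi-server proof (Lem.~\ref{lem:nonidlingmultiserver}), adding the fork-join max-over-tasks bookkeeping from Lem.~\ref{lem:forkjoin}. For part (i), I would first set up the recursion for the start times of the individual tasks. Label the tasks in a global service order: task $1$ of job $n$ is the task that starts first among the $k$ tasks of job $n$, task $2$ starts next, and so on (this ordering is what makes the inner sum $\sum_{j=1}^{i-1} Z_j(n)$ meaningful). Because the system is non-idling and single-queue, once task $i$ of job $n$ starts service at $V_i(n)$, the next task to start — which is either task $i+1$ of job $n$ (if $i<k$) or task $1$ of job $n+1$ (if $i=k$) — begins at $\max\{A(n), V_i(n)+Z_i(n)\}$ when $i<k$, and at $\max\{A(n+1), V_k(n)+Z_k(n)\}$ when $i=k$. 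Here $Z_i(n)$ is exactly the time from $V_i(n)$ until some server frees up, i.e.\ the time until the next start event, and it is zero precisely when a server was already idle (so the next task starts on arrival). Chaining these recursions from $V_1(1)=A(1)$ gives, for the start time of the last-starting task of job $n$,
\begin{equation*}
V_k(n) = \max_{m \in [1,n]} \Biggl\{ A(m) + \sum_{\nu=m}^{n-1}\sum_{j=1}^{k} Z_j(\nu) \Biggr\},
\end{equation*}
and more generally $V_i(n) = \max_{m\in[1,n]}\{A(m) + \sum_{j=1}^{i-1} Z_j(n) + \sum_{\nu=m}^{n-1}\sum_{j=1}^k Z_j(\nu)\}$, by the same recursive-insertion argument as in \eqref{eq:starttimemultiserversolved} and \eqref{eq:starttimesolved}. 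Since a job departs when its last task finishes, $D(n) = \max_{i\in[1,k]}\{V_i(n) + Q_i(n)\}$; substituting the expression for $V_i(n)$ and pulling the $\max$ over $m$ outside the $\max$ over $i$ (as in the proof of Lem.~\ref{lem:forkjoin}) yields $D(n) = \max_{m\in[1,n]}\{A(m) + S(m,n)\}$ with $S(m,n)$ as stated, proving (i).

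For part (ii), I would reuse the memorylessness argument from Lem.~\ref{lem:nonidlingmultiserver}(ii) verbatim: conditioned on the event that all $k$ servers are busy right after task $i$ of job $n$ starts at $V_i(n)$, the residual service times of the $k$ in-service tasks are, by the memoryless property of the exponential, iid $\mathrm{Exp}(\mu)$; their minimum — which is the time until the next server idles, i.e.\ $Z_i(n)$ on this event — is $\mathrm{Exp}(k\mu)$. Independence of the non-zero $Z_i(n)$ across the relevant indices follows because each such minimum is determined by a fresh set of exponential service times (again using memorylessness to ``restart the clock'' at each start event). Part (iii) is then immediate, exactly as in Lem.~\ref{lem:nonidlingmultiserver}(iii): replacing the zero entries of $Z_i(n)$ by independent $\mathrm{Exp}(k\mu)$ variables only increases $S(m,n)$ (all summands are nonnegative), so $D(n) \le \max_{m\in[1,n]}\{A(m)+S(m,n)\}$ and the system is an $S(m,n)$ server in the sense of Def.~\ref{def:maxplusserviceprocess}.

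The main obstacle I anticipate is part (i): getting the start-time recursion and the task ordering exactly right so that the inner sum $\sum_{j=1}^{i-1} Z_j(n)$ cleanly captures the staggered start times of the $k$ tasks of job $n$ within a single ``round,'' while the double sum $\sum_{\nu=m}^{n-1}\sum_{j=1}^k Z_j(\nu)$ accounts for all earlier rounds. One has to be careful that the ``next server becomes idle'' event defining $Z_i(n)$ refers to the state right after the $i$-th task start, and that when a server is already idle at that moment the corresponding $Z_i(n)=0$ correctly forces the next task to start at its own arrival time rather than earlier — the $\max\{A(\cdot),\cdot\}$ in the recursion handles this, but it must be threaded consistently through the recursive insertion. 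The join step itself is the easy part, handled exactly as in Lem.~\ref{lem:forkjoin} by reordering the two maxima; and parts (ii)--(iii) are essentially a restatement of the corresponding parts of Lem.~\ref{lem:nonidlingmultiserver}.
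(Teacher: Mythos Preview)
Your proposal is correct and follows essentially the same route as the paper: set up the two-part recursion $V_i(n)=\max\{A(n),V_{i-1}(n)+Z_{i-1}(n)\}$ within a job and $V_1(n)=\max\{A(n),V_k(n-1)+Z_k(n-1)\}$ across jobs, unroll to obtain~\eqref{eq:starttimeforkjoinsolved}, then take $D(n)=\max_i\{V_i(n)+Q_i(n)\}$ and reorder the maxima; parts (ii)--(iii) are, as you say, the memorylessness and monotonicity arguments from Lem.~\ref{lem:nonidlingmultiserver} carried over verbatim. One small slip: your displayed formula for $V_k(n)$ is missing the inner sum $\sum_{j=1}^{k-1}Z_j(n)$ --- what you wrote is actually $V_1(n)$ --- but your ``more generally'' formula for $V_i(n)$ is correct and is what the paper obtains; the paper also notes explicitly that nonnegativity of $Z_i(n)$ is what collapses the intermediate $\max\{A(n),\cdot\}$ terms inside a job, which you should make explicit in the recursive insertion.
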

\begin{proof}
Using the definition of $Z_i(n)$, it holds for $n \ge 1$ and $i \in [2,k]$ that
\begin{equation}
V_i(n) = \max \{ A(n), V_{i-1}(n) + Z_{i-1}(n) \} ,
\label{eq:starttimeforkjoinpart1}
\end{equation}
and for $n \ge 2$ and $i=1$
\begin{equation}
V_1(n) = \max \{ A(n), V_{k}(n-1) + Z_{k}(n-1) \} .
\label{eq:starttimeforkjoinpart2}
\end{equation}
Further, $V_1(1) = A(1)$ and since $Z_i(1) = 0$ for \mbox{$i \in [1,k-1]$} we have $V_i(1) = A(1)$ also for $i \in [2,k]$. By recursive insertion of~\eqref{eq:starttimeforkjoinpart1} and~\eqref{eq:starttimeforkjoinpart2} we obtain for $n \ge 1$ and $i \in [1,k]$ that
%
%
\begin{equation}
V_i(n) = \max_{m \in [1,n]} \Biggl\{A(m) + \sum_{j=1}^{i-1} Z_j(n) + \sum_{\nu = m}^{n-1}\sum_{j=1}^{k} Z_j(\nu)\Biggr\} .
\label{eq:starttimeforkjoinsolved}
\end{equation}
Above we used that $Z_i(n)$ is non-negative to reduce the number of terms that are evaluated by the maximum operator. Given $V_i(n)$, task $i$ of job $n$ finishes service after another $Q_i(n)$ units of time at $D_i(n) = V_i(n) + Q_i(n)$. Finally, job $n$ is completed once all of its tasks have finished service at $D(n) = \max_{i \in [1,k]} \{ V_i(n) + Q_i(n) \}$. Inserting~\eqref{eq:starttimeforkjoinsolved} and reordering the maxima proves the first part.

The proof of the remaining parts is a notational extension of the proof of Lem.~\ref{lem:nonidlingmultiserver} that considers tasks instead of jobs.
\end{proof}

\begin{figure*}
  \centering
  \subfigure[$k=\{1,2,4,8,16\}$, $\varepsilon=10^{-6}$]{
  \includegraphics[width=0.6\columnwidth]{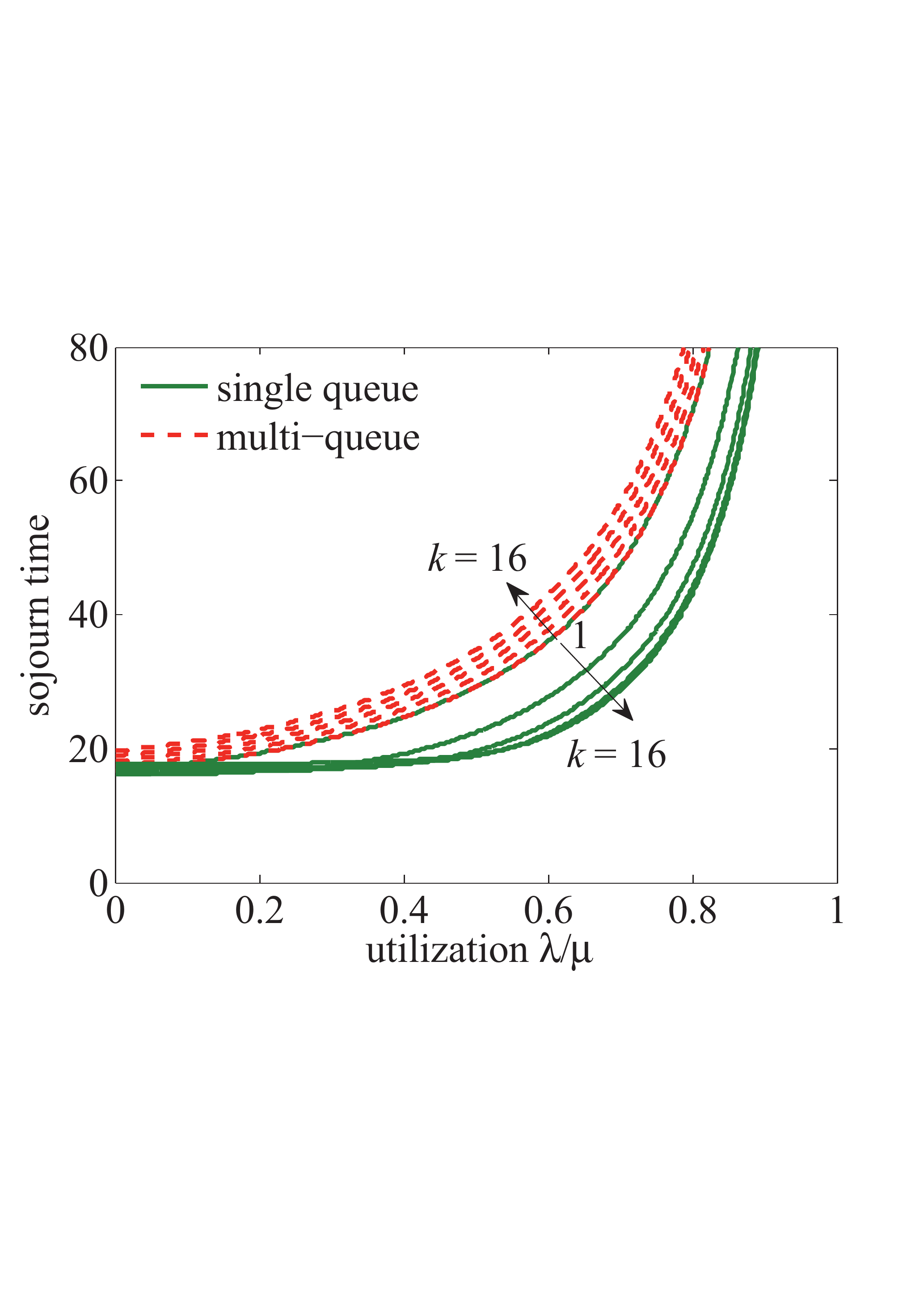}
  \label{fig:nonidlingforkjoink}
  }
  \hfill
  \subfigure[$\lambda=\{0.3,0.7\}$, $\varepsilon=10^{-6}$]{
  \includegraphics[width=0.6\columnwidth]{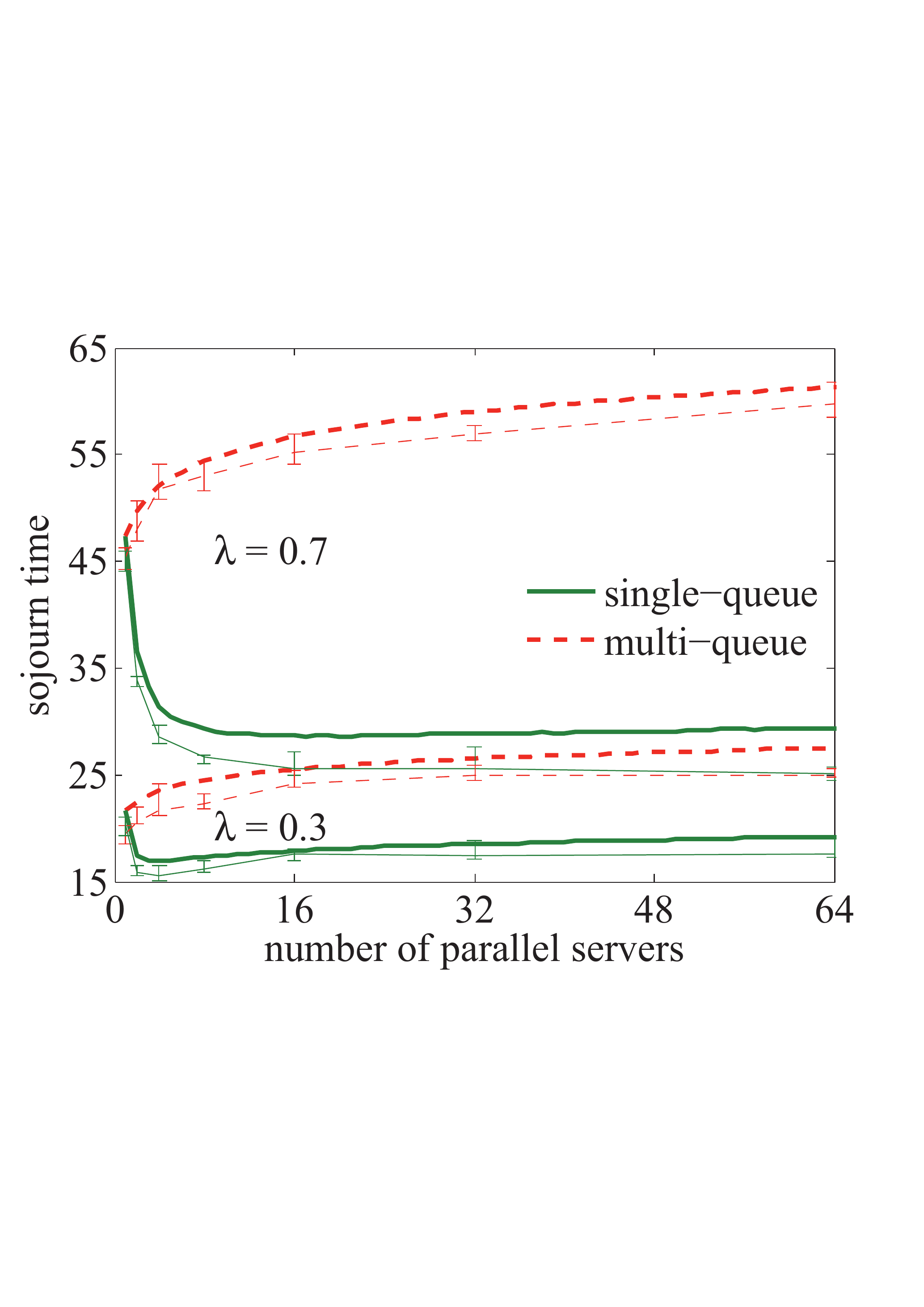}
  \label{fig:nonidlingforkjoink2}
  }
  \hfill
  \subfigure[Spark, $\lambda = 0.7$, $\varepsilon = 10^{-3}$]{
  \includegraphics[width=0.6\columnwidth]{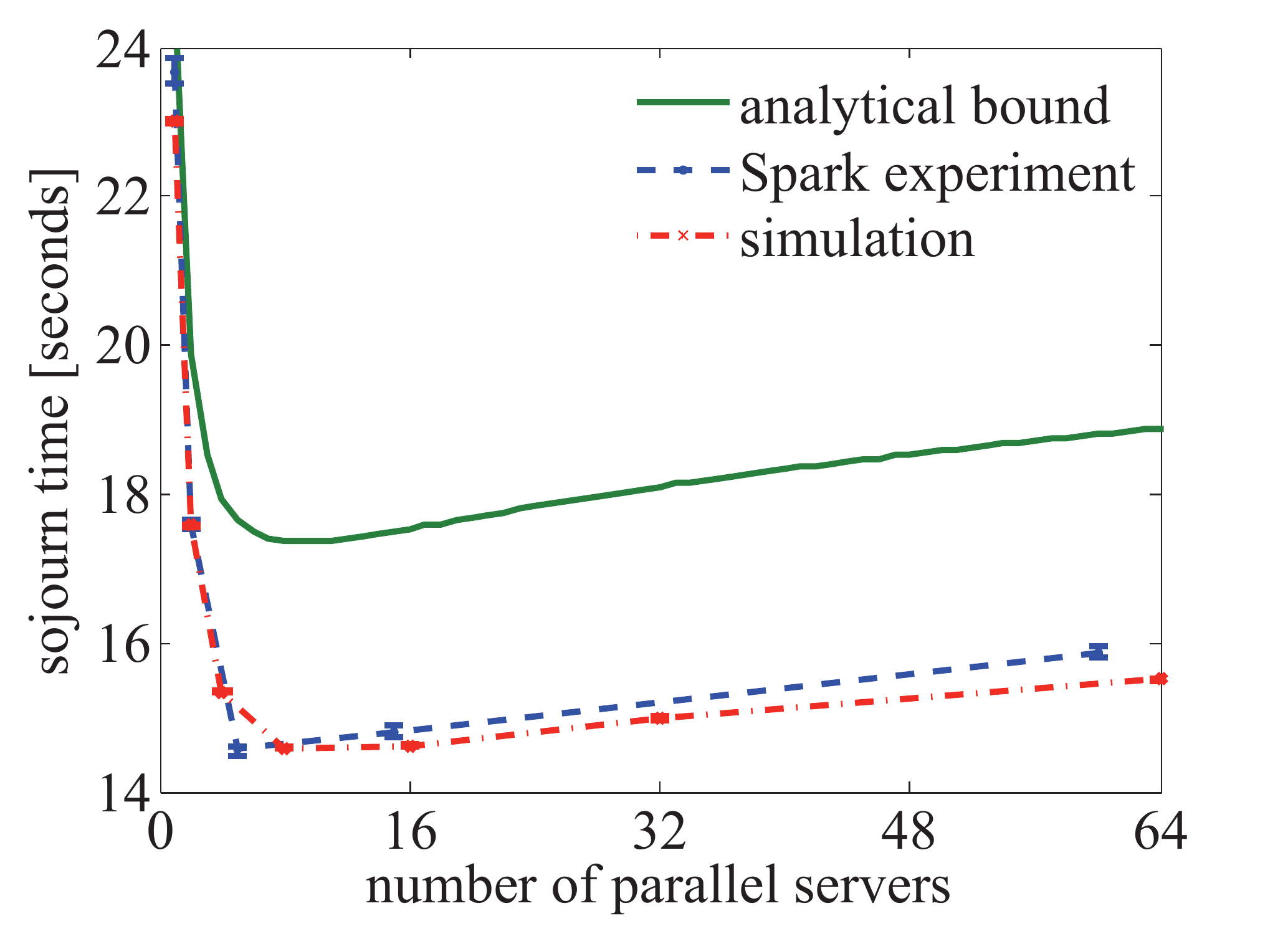}
  \label{fig:spark}
  }
  \caption{(a) and (b) Comparison of single-queue and multi-queue fork-join systems. Analytical bounds (thick lines) and simulation results (thin lines). While the sojourn time of the multi-queue system grows with $\ln k$, the single-queue system achieves a significant improvement due to load-balancing if $k$ is increased, particularly in the case of high utilization. The additional gain diminishes for large $k$ and is eventually consumed by the synchronization constraint of the join operation (parameter $\beta$ in Th.~\ref{th:nonidlingforkjoin}). (c) Spark experiment. The bound predicts the trend of the sojourn time with $k$ of the Spark system.}
  \label{fig:nonidlingforkjoincomparison}
\end{figure*}

Considering exponential service times $Q_i(n)$ with parameter $\mu$, we have $\rho_Q(\theta)$ as in \eqref{eq:exposerviceparameter} for $\theta < \mu$. With Lem.~\ref{lem:nonidlingforkjoin} (iii), $Z_i(n)$ is composed of iid exponential random variables with parameter $k\mu$ that are characterized by $\rho_Z(\theta)$ given by \eqref{eq:rhowaitingnonidling} for $\theta < k\mu$. The MGF of $S(m,n)$ in Lem.~\ref{lem:nonidlingforkjoin} (iii) for $\theta < \mu$ is
\begin{align*}
\mathsf{E}\Bigl[e^{\theta S(m,n)}\Bigr] \le & \sum_{i=1}^k e^{\theta (\rho_{Q}(\theta) + \rho_{Z}(\theta) ((n-m)k+i-1))} \\
& = \beta e^{\theta (\rho_{Q}(\theta) + k \rho_{Z}(\theta) (n-m))} ,
\end{align*}
where
\begin{equation}
\beta = \frac{e^{\theta k \rho_{Z}(\theta)}-1}{e^{\theta \rho_{Z}(\theta)}-1} .
\label{eq:betanonidlingforkjoin}
\end{equation}
Hence, $S(m,n)$ satisfies Def.~\ref{def:sigmarho} with parameters $\sigma_S(\theta) = \rho_{Q}(\theta) - k\rho_{Z}(\theta) + \ln \beta/\theta$ and $\rho_S(\theta) = k\rho_{Z}(\theta)$.

\begin{theorem}[Single-queue fork-join system]
\label{th:nonidlingforkjoin}
Consider a single-queue fork-join system as in Lem.~\ref{lem:nonidlingforkjoin}, with arrival parameters $(\sigma_A(-\theta),\rho_A(-\theta))$ as specified by Def.~\ref{def:sigmarho}, iid exponential task service times with parameter $\mu$, parameter $\rho_{Z}(\theta)$ as specified by \eqref{eq:rhowaitingnonidling}, and $\beta$ as given in \eqref{eq:betanonidlingforkjoin}. For $n \ge 1$, the sojourn time satisfies
\begin{equation*}
\mathsf{P}[T(n) > \tau] \le \alpha \beta \frac{\mu}{\mu-\theta} \left(e^{-\theta \tau}-e^{-\mu \tau}\right) + e^{-\mu \tau},
\end{equation*}
and the waiting time of task $i$
\begin{equation*}
\mathsf{P}[W_i(n) > \tau] \le \alpha e^{\theta (i-1)\rho_{Z}(\theta)} e^{-\theta\tau} .
\end{equation*}
In the case of G$\mid$M arrival and service processes, the free parameter $0 < \theta < k\mu, \theta \neq \mu$ has to satisfy $k\rho_{Z}(\theta) < \rho_A(-\theta)$ and
\begin{equation*}
\alpha = \frac{e^{\theta \sigma_A(-\theta)}}{1-e^{-\theta (\rho_A(-\theta) - k\rho_{Z}(\theta))}} .
\end{equation*}
In the special case of GI$\mid$M arrival and service processes, $0 < \theta < k\mu, \theta \neq \mu$ has to satisfy $k\rho_{Z}(\theta) \le \rho_A(-\theta)$ and $\alpha=1$.
\end{theorem}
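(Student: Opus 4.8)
The plan is to mirror the proof of Theorem~\ref{th:nonidlingmultiserver}, adding the join constraint. I start from Lemma~\ref{lem:nonidlingforkjoin}(iii), where all $Z_j(\nu)$ are iid exponential with rate $k\mu$, and from the solved recursion~\eqref{eq:starttimeforkjoinsolved}, which gives $V_i(n)=V_1(n)+\sum_{j=1}^{i-1}Z_j(n)$ and hence, with $W_i(n)=V_i(n)-A(n)$,
\begin{equation*}
W_i(n) \le \sum_{j=1}^{i-1} Z_j(n) + \max_{m \in [1,n]}\biggl\{ \sum_{\nu=m}^{n-1}\sum_{j=1}^k Z_j(\nu) - A(m,n) \biggr\} .
\end{equation*}
The offset $\sum_{j=1}^{i-1}Z_j(n)$ is independent of everything inside the maximum (which involves only $Z_j(\nu)$ for $\nu\le n-1$ and the arrivals), and $\sum_{j=1}^k Z_j(\nu)$ has MGF $e^{\theta k\rho_Z(\theta)}$. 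I would therefore apply verbatim the argument in the proof of Theorem~\ref{th:gg1} (Chernoff plus a union bound over $m$ in the G$\mid$M case, Doob's martingale inequality in the GI$\mid$M case) to the max term, which yields exactly $\alpha e^{-\theta\tau}$ with the stated $\alpha$ and the stability condition $k\rho_Z(\theta)<\rho_A(-\theta)$ (non-strict in the GI$\mid$M case with $\alpha=1$), and pick up the extra factor $\mathsf{E}[e^{\theta\sum_{j=1}^{i-1}Z_j(n)}]=e^{\theta(i-1)\rho_Z(\theta)}$ from the independent offset via the Chernoff step. This gives the waiting time bound for task $i$, and in particular $\mathsf{P}[W_1(n)>\tau]\le\alpha e^{-\theta\tau}$ and $\mathsf{E}[e^{\theta W_1(n)}]\le\alpha$.

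For the sojourn time, I use $D(n)=\max_{i\in[1,k]}\{V_i(n)+Q_i(n)\}$ and $V_i(n)=V_1(n)+\sum_{j<i}Z_j(n)$ to write $T(n)=W_1(n)+M(n)$ with $M(n):=\max_{i\in[1,k]}\{Q_i(n)+\sum_{j=1}^{i-1}Z_j(n)\}$, where $M(n)$ is independent of $W_1(n)$ because it is a function of job-$n$ service quantities only. For $j<i$ the task-$i$ service $Q_i(n)$ is independent of $Z_1(n),\dots,Z_{i-1}(n)$ (task $i$ has not started when those earlier servers become idle), so the $i=1$ term contributes exactly $\mathsf{P}[Q_1(n)>x]=e^{-\mu x}$, and a union bound over $i\ge 2$ with a Chernoff estimate gives, for $0<\theta<\mu$, $\mathsf{P}[M(n)>x]\le e^{-\mu x}+(\beta-1)\frac{\mu}{\mu-\theta}e^{-\theta x}$ with $\beta$ as in~\eqref{eq:betanonidlingforkjoin}. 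I then convolve the tail of $M(n)$ with the waiting-time CDF bound $\mathsf{P}[W_1(n)\le y]\ge 1-\alpha e^{-\theta y}$, exactly as in the proof of Theorem~\ref{th:nonidlingmultiserver} (including the $\theta\neq\mu$ evaluation of the resulting integral and, for $\theta>\mu$, the observation that $M(n)$'s tail is governed by the rate-$\mu$ service): the $e^{-\mu x}$ part of $M(n)$ produces the bare $e^{-\mu\tau}$ term, while the $e^{-\theta x}$ part together with the $e^{-\theta y}$ slack of $W_1(n)$ produces the $\alpha\beta\frac{\mu}{\mu-\theta}(e^{-\theta\tau}-e^{-\mu\tau})$ term. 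The GI$\mid$M specialization follows by setting $\alpha=1$ throughout.

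The main obstacle is this final bookkeeping. A crude union bound over the $k$ tasks applied at the level of $\mathsf{P}[W_i(n)+Q_i(n)>\tau]$ would leave a factor $k$ on the $e^{-\mu\tau}$ term; the point is to peel off the $i=1$ term of $M(n)$ \emph{before} convolving, so that the ``job's own residual service'' contributes only a single $e^{-\mu\tau}$ and the entire cost of the join is absorbed into the factor $\beta$ that multiplies the slowly decaying $e^{-\theta\tau}$ piece. Getting the constants to land exactly as stated — in particular that the $e^{-\mu\tau}$ coefficient is $1-\alpha\beta\mu/(\mu-\theta)$ rather than a looser expression — requires keeping the boundary terms from the Stieltjes integration by parts (the $\mathsf{P}[W_1(n)>\tau]$ contributions cancel) and using $\mathsf{E}[e^{\theta W_1(n)}]\le\alpha$ from the waiting-time derivation; everything else is routine calculation parallel to Theorem~\ref{th:nonidlingmultiserver}.
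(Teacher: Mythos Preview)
Your waiting-time argument is essentially the paper's: express $W_i(n)$ via Lemma~\ref{lem:nonidlingforkjoin}(iii) and~\eqref{eq:starttimeforkjoinsolved} and rerun the proof of Theorem~\ref{th:gg1}. The only cosmetic difference is that you split off the offset $\sum_{j<i}Z_j(n)$ and multiply by its MGF at the end, whereas the paper applies the Theorem~\ref{th:gg1} machinery to the full $W_i(n)$ expression directly; in the GI$\mid$M case the offset then surfaces as $\mathsf{E}[U(1)]=e^{\theta(i-1)\rho_Z(\theta)}$ in Doob's inequality rather than through a separate Chernoff step (your ``Chernoff on the offset'' needs $\mathsf{E}[e^{\theta W_1}]\le 1$, which Doob does not give).

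For the sojourn time the routes diverge. The paper does \emph{not} form $T(n)=W_1(n)+M(n)$. It works task by task: for each $i$ it convolves the waiting-time bound $\mathsf{P}[W_i(n)>\tau]\le\alpha e^{\theta(i-1)\rho_Z(\theta)}e^{-\theta\tau}$ with the exponential service PDF exactly as in the proof of Theorem~\ref{th:nonidlingmultiserver}, and then applies the union bound over $i\in[1,k]$; the sum $\sum_{i=1}^k e^{\theta(i-1)\rho_Z(\theta)}=\beta$ is where $\beta$ enters. That is the entire argument, and your $M(n)$ decomposition is not used.

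You are right that this per-task route, followed literally, leaves $k\,e^{-\mu\tau}$ rather than the printed $e^{-\mu\tau}$; the paper's sketch does not address that. But your proposed remedy does not clearly land on the stated constant either: convolving $W_1(n)$ with the $e^{-\mu x}$ part of $M(n)$'s tail (i.e., with $Q_1(n)$) reproduces the Theorem~\ref{th:nonidlingmultiserver} result $e^{-\mu\tau}+\alpha\frac{\mu}{\mu-\theta}(e^{-\theta\tau}-e^{-\mu\tau})$, not a bare $e^{-\mu\tau}$, and handling the $i\ge 2$ terms by Chernoff yields at best $\alpha(\beta-1)\frac{\mu}{\mu-\theta}e^{-\theta\tau}$. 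The pieces therefore sum to $e^{-\mu\tau}+\alpha\frac{\mu}{\mu-\theta}\bigl(\beta e^{-\theta\tau}-e^{-\mu\tau}\bigr)$, not to $e^{-\mu\tau}+\alpha\beta\frac{\mu}{\mu-\theta}(e^{-\theta\tau}-e^{-\mu\tau})$. If your goal is to reproduce the paper's proof, the simple per-task convolution plus union bound is what is done; the $M(n)$ machinery is additional and does not resolve the coefficient discrepancy you identified.
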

We note that the waiting time of the task of job $n$ that starts service last is simply the waiting time of task $k$ of job $n$ as all other tasks start service before, i.e., no maximum as in the multi-queue fork-join system is needed.
\begin{proof}
Similar to the proof of Th.~\ref{th:nonidlingmultiserver}, we start with the waiting time that is expressed as $W_i(n) = V_i(n) - A(n)$ for task $i \in [1,k]$ of job $n \ge 1$. With Lem.~\ref{lem:nonidlingforkjoin} (iii) and \eqref{eq:starttimeforkjoinsolved} we have for $n \ge 1$ that
\begin{equation*}
W_i(n) \le \max_{m \in [1,n]} \Biggl\{\sum_{j=1}^{i-1} Z_j(n) + \sum_{\nu = m}^{n-1}\sum_{j=1}^{k} Z_j(\nu) - A(m,n) \Biggr\} ,
\end{equation*}
where $Z_i(n)$ for $n \ge 1$ and $i \in [1,k]$ are iid exponential random variables with parameter $k\mu$. The derivation of the statistical waiting time bound closely follows the proof of Th.~\ref{th:gg1} in the appendix and is therefore omitted.

A sojourn time bound for each task $i \in [1,k]$ of job $n$ follows from its waiting time bound by convolution with the exponential task service time PDF as in the proof of Th.~\ref{th:nonidlingmultiserver}. Finally, estimating the maximum sojourn time of all tasks $i \in [1,k]$ of job $n$ by use of the union bound leads to parameter $\beta$ defined by~\eqref{eq:betanonidlingforkjoin}.
\end{proof}
\paragraph*{M$\mid$M tasks}
We compare the single-queue fork-join system with the multi-queue system from Sec.~\ref{sec:forkjoinidling}. Jobs have iid exponential inter-arrival times and are composed of $k$ tasks with iid exponential service times. The parameters $\rho_A(-\theta)$ and $\rho_{Q}(\theta)$ are as specified by \eqref{eq:expoarrivalparameter} and \eqref{eq:exposerviceparameter}, respectively, where we let $\mu=1$. For comparability, we use the same technique to derive the sojourn time from the waiting time as in Th.~\ref{th:nonidlingforkjoin} for both systems. In Fig.~\ref{fig:nonidlingforkjoink}, we fix $\varepsilon = 10^{-6}$ and show the impact of the utilization $\lambda/\mu$ for different $k \in \{1,2,4,8,16\}$. For $k=1$ the single-queue and the multi-queue system are identical and the sojourn time bounds from Cor.~\ref{cor:forkjoin} and Th.~\ref{th:nonidlingforkjoin} agree. For increasing $k$, the sojourn time bound of the multi-queue fork-join system shows logarithmic growth with $k$; i.e., the lines are equally spaced. This effect is due to the synchronization constraint of the join operation. In contrast, the sojourn time bounds of the single-queue fork-join system improve with $k$ with decreasing gain. Here, two opposing effects are superimposed: 1.) a gain due to load balancing is achieved by the single-queue system if $k$ is increased, most visible for medium to high utilization; 2.) the synchronization constraint of the join operation, similar to the case of the multi-queue fork-join system. Fig.~\ref{fig:nonidlingforkjoink2} depicts the effects for fixed $\lambda = 0.3$ and $\lambda = 0.7$, respectively, $\varepsilon = 10^{-6}$, and varying $k$. For small $\lambda$ the gain of the single-queue system that is achieved by load-balancing is small. For intuition, if all servers of the two systems are idle at the time of a job arrival, the single-queue and the multi-queue system perform identically and the sojourn time is determined by the task with the maximal service time. For large $\lambda$ the gain of load-balancing becomes more significant as $k$ is increased, but for large $k$ the synchronization constraint of the join operation eventually consumes the gain.

The default scheduler of Apache Spark is a prominent implementation of a single-queue system. Fig.~\ref{fig:spark} shows results for $\lambda = 0.7$, $\mu=1$, and $\varepsilon = 10^{-3}$ from experiments on a live Spark cluster. The units are in seconds. For further details on the experiment see the appendix. Our simulation results match the Spark measurements almost perfectly. Similarly, the sojourn time bound from Th.~\ref{th:nonidlingforkjoin} clearly shows the trend that is observed for Spark if $k$ is increased. First the gain due to load-balancing dominates, and that is later consumed by the synchronization constraint.
%
%
\section{Conclusions}
\label{sec:conclusions}
We formulated a general model of fork-join systems in max-plus system theory and derived performance bounds for fork-join networks with $h$ independent fork-join stages each with $k$ parallel G$\mid$G$\mid$1 servers. The bounds were shown to scale in $\mathcal{O}(h \ln k)$ and compare to the previous result $\mathcal{O}(h \ln (hk))$ that does not take advantage of independence. We performed a detailed comparison of essential configurations of multi-server systems. We included an analysis of single-queue multi-server as well as single-queue fork-join systems that are non-idling as opposed to corresponding multi-queue implementations. We found that the single-queue systems achieve a fundamental performance gain that is due to load-balancing and possible overtaking of jobs. Since jobs can depart out of sequence, the multi-stage analysis of single-queue systems is more difficult and remains as an open research question. We included reference results, mostly from simulation, as well as measurements obtained from Spark experiments which show that the analytical bounds closely predict the actual performance of systems.
%
%
\section*{Appendix A: Simulation and Experiments}
\label{sec:appendixa}
\paragraph*{Forkulator}\texttt{Forkulator} is an event-driven simulator written in Java\footnote{Software available at \url{https://github.com/brentondwalker/forkulator}}. A user can choose from single-queue, multi-queue, multi-stage, and $(k,l)$ fork-join systems, as well as multi-server systems with thinning. The arrival and service processes can also be specified as constant rate, exponential, Erlang, normal, or Weibull, and optionally regulated through a leaky bucket. The simulator samples jobs at a user-configurable interval and records the sojourn, waiting, and service times.

To ensure that our samples are close to iid, in our experiments we sample every 100th job. We chose that interval based on an empirical analysis of the autocorrelation of sojourn times in trial simulations.  The confidence intervals plotted on all of the simulation results are $68.2\%$ and are computed for the quantile statistics using the method described in~\cite[Sec. 2.2.2]{boudec-performance-evaluation}.  In all cases we ran at least $10^9$ iterations, giving about $10^7$ samples, which is enough to estimate the $1-10^{-6}$ quantile and its confidence interval.

\paragraph*{Spark}Spark~\cite{spark-usenix} is a popular data processing engine that implements the map-reduce model. It is part of the Apache Hadoop ecosystem. Within a Spark program one can execute map and reduce style parallelized operations, and with some APIs one can control the degree of parallelism used.  This allows us to effectively create jobs containing a controllable number of tasks, and set these tasks to execute for controllable lengths of time.  Therefore we can submit a Spark program that is allocated $k$ cores and executes jobs with $k^{\prime}$ tasks, and we can draw the execution times of these tasks from any distribution with non-negative support.  The default task scheduler in Spark puts the tasks in each job in a first-in first-out queue and distributes them to executors as they become available.  This mode of operation corresponds to the non-idling, single-queue fork-join model in Sec.~\ref{sec:forkjoinnonidling}.

We have set up a stand-alone Spark cluster on four 24-core servers.  By running each Spark slave in a Docker container and limiting each slave to a single core, we are able to run at least 15 single-core workers on each node and effectively emulate a 60-node cluster. We configure the host so that each container has its own IP address. This is not a perfect emulation of a 60-node cluster, since the Docker containers on each node share a network stack, but the jobs in our experiments produce very little network traffic and have a trivial reduce stage, so for our purposes this inaccuracy is inconsequential.

Our \texttt{spark-arrivals} program\footnote{Software available at \url{https://github.com/brentondwalker/spark-arrivals}} produces jobs with inter-arrival times drawn from an exponential distribution.  For each job it spawns a new thread which submits a Spark job, parallelized with $k$ tasks (``slices'' in Spark terminology). Running each job in a separate thread is necessary because \texttt{parallelize()} is blocking.  Without multi-threading, each job would only start after the previous job departed, making the system more like a single-queue split-merge system.

The execution time of each task is drawn from either an exponential or Erlang-$k$ distribution.  The tasks generate random numbers and check how long they have been executing. Therefore the time the tasks spend in their execution loops has the desired distribution. However, there is some overhead associated with executing the tasks, changing the service time distribution somewhat. The main components of this are {\it task deserialization time} and {\it scheduler delay}. Task deserialization time is the time needed to distribute the task and associated data to the executor. {\it Scheduler delay} is not as well documented, but appears to be how Spark accounts for any other overhead that is not execution time or deserialization time. These two components combined tended to be about 6ms. In all our experiments the tasks had a mean service time of 1 second, so this overhead was negligible. Since our reduce stage was trivial, the shuffle and result serialization components of the task overhead were always effectively zero. For our Spark experiments we ran $5 \cdot 10^{8}$ iterations and report the $1-10^{-3}$ quantile.
%
%
\section*{Appendix B: Proofs}
\label{sec:appendixb}
\begin{proof}[Proof of Th.~\ref{th:gg1}]
We only show the proof of the sojourn time, as the proof of the waiting time follows similarly.
\paragraph*{G$\mid$G$\mid$1 servers}
We obtain from \eqref{eq:sojourntime} for $\theta > 0$ that
\begin{equation*}
\mathsf{E}\Bigl[e^{\theta T(n)}\Bigr] \le \sum_{\nu=1}^n \mathsf{E}\Bigl[e^{\theta S(\nu,n)}\Bigr] \mathsf{E}\Bigl[e^{-\theta A(\nu,n)}\Bigr] ,
\end{equation*}
where we estimated the maximum by the sum of its arguments and used the statistical independence of arrivals and service. By insertion of the $(\sigma,\rho)$-constraints from Def.~\ref{def:sigmarho} we have
\begin{equation*}
\!\mathsf{E}\Bigl[e^{\theta T(n)}\!\Bigr] \! \le \! e^{\theta (\sigma_A(-\theta) + \sigma_S(\theta) + \rho_S(\theta))} \! \sum_{\nu=1}^n \! e^{-\theta (\rho_A(-\theta) - \rho_S(\theta)) (n-\nu)}.
\end{equation*}
Next, we estimate
\begin{align*}
\sum_{\nu=1}^n e^{-\theta (\rho_A(-\theta)-\rho_S(\theta)) (n-\nu)} & \le \sum_{\nu=0}^{\infty} \bigl(e^{-\theta (\rho_A(-\theta) - \rho_S(\theta))}\bigr)^{\nu} \\
& = \frac{1}{1-e^{-\theta (\rho_A(-\theta) - \rho_S(\theta))}},
\end{align*}
where we used the geometric sum for $\rho_S(\theta) < \rho_A(-\theta)$. By use of Chernoff's bound $\mathsf{P}[X \ge x] \le e^{-\theta x} \mathsf{E}\bigl[e^{\theta X}\bigr]$ we obtain
\begin{equation*}
\mathsf{P}[T(n) \ge \tau] \le \frac{e^{\theta (\sigma_A(-\theta) + \sigma_S(\theta))}}{1-e^{-\theta (\rho_A(-\theta) - \rho_S(\theta))}} e^{\theta \rho_S(\theta)} e^{-\theta\tau}.
\end{equation*}
\paragraph*{GI$\mid$GI$\mid$1 servers}
From \eqref{eq:sojourntime} we have
\begin{equation*}
T(n) = \max_{m \in [1,n]} \{S(n-m+1,n) - A(n-m+1,n) \} .
\end{equation*}
For $\theta > 0$ we can write
\begin{multline*}
\mathsf{P}[T(n) > \tau ] \\ = \mathsf{P}\left[\max_{m \in [1,n]} \left\{e^{\theta (S(n-m+1,n) - A(n-m+1,n))} \right\} > e^{\theta\tau} \right] .
\end{multline*}
Now consider the process
\begin{equation*}
U(m) = e^{\theta (S(n-m+1,n) - A(n-m+1,n))} .
\end{equation*}
Using the representation of $A(m,n) = \sum_{\nu=m}^{n-1} A(\nu,\nu+1)$ and $S(m,n) = \sum_{\nu=m}^n S(\nu)$ by increment processes, we have
\begin{equation*}
U(m+1) = U(m) e^{\theta (S(n-m) - A(n-m,n-m+1))} .
\end{equation*}
The conditional expectation can be computed as
\begin{align*}
&\mathsf{E}[U(m+1) | U(m), U(m-1),\dots,U(1)] \\
=& U(m) \mathsf{E}\Bigl[ e^{\theta S(n-m)}\Bigr] \mathsf{E}\Bigl[ e^{-\theta  A(n-m,n-m+1)}\Bigr] ,
\end{align*}
where we used the independence of the inter-arrival times and the service times. If $\rho_S(\theta) \le \rho_A(-\theta)$, it holds that $\mathsf{E}\bigl[ e^{\theta S(n-m)}\bigr] \mathsf{E}\bigl[ e^{-\theta  A(n-m,n-m+1)}\bigr] \le 1$ and
\begin{equation*}
\mathsf{E}[U(m+1) | U(m), U(m-1),\dots,U(1)] \le U(m),
\end{equation*}
i.e., $U(m)$ is a supermartingale. By application of Doob's inequality for submartingales~\cite[Theorem 3.2, p. 314]{doob:stochasticprocesses} and the formulation for supermartingales~\cite{jiang:onecoin, jiang:noteonsnetcalc} we have for non-negative $U(m)$ for $m \ge 1$ that
\begin{equation}
x \mathsf{P} \left[ \max_{m \in [1,n]} \{ U(m) \} \ge x \right] \le \mathsf{E}[U(1)] .
\label{eq:martingalebound}
\end{equation}
We derive
\begin{equation*}
\mathsf{E}[U(1)] = \mathsf{E}\Bigl[e^{\theta (S(n,n) - A(n,n))}\Bigr] = \mathsf{E}\Bigl[e^{\theta S(1)}\Bigr].
\end{equation*}
Letting $x = e^{\theta \tau}$ we have from \eqref{eq:martingalebound} that
\begin{equation*}
\mathsf{P} \left[ T(n) \ge \tau \right] \le e^{\theta \rho_S(\theta)} e^{-\theta\tau} ,
\end{equation*}
which completes the proof of Th.~\ref{th:gg1}.
\end{proof}
%
%
\begin{proof}[Proof of Th.~\ref{th:multistage}]
First, we derive the MGF of $S^{\text{net}}(m,n)$ as in Lem.~\ref{lem:multistage}. It follows for $\theta > 0$ that
\begin{multline*}
\mathsf{E}\Bigl[e^{\theta S^{\text{net}}(m,n)}\Bigr] \le \sum_{\nu^j : m \le \nu^1 \le \nu^2 \le \dots \le \nu^{h-1} \le n} \mathsf{E}\Bigl[e^{\theta S^{1}(m,\nu^1)}\Bigr] \\ \mathsf{E}\Bigl[e^{\theta S^{2}(\nu^1,\nu^2)}\Bigr] \cdots \mathsf{E}\Bigl[e^{\theta S^{h}(\nu^{h-1},n)}\Bigr] ,
\end{multline*}
where we estimated the maximum by the sum of its arguments and used the statistical independence of the stages. After some variable substitutions we obtain
\begin{multline*}
\mathsf{E}\Bigl[e^{\theta S^{\text{net}}(m,n)}\Bigr] \le \sum_{\nu^j \ge 0: \sum_{j=1}^h \nu^j = n-m} \mathsf{E}\Bigl[e^{\theta S^{1}(m,m+\nu^1)}\Bigr] \\ \mathsf{E}\Bigl[e^{\theta S^{2}(m+\nu^1,m+\nu^1+\nu^2)}\Bigr] \cdots \mathsf{E}\Bigl[e^{\theta S^{h}(m+\sum_{j=1}^{h-1} \nu^j,m+\sum_{j=1}^h \nu^j)}\Bigr] .
\end{multline*}
Given homogeneous stages that are $(\sigma_{S},\rho_{S})$ constrained as specified by Def.~\ref{def:sigmarho} and using~\cite[Prop. 6.2]{ross:probability} to replace the sum by a binomial coefficient, we have for $\theta > 0$ that
\begin{equation*}
\mathsf{E}\Bigl[e^{\theta S^{\text{net}}(m,n)}\Bigr] \le {n-m+h-1 \choose h-1} e^{\theta (h \sigma_{S}(\theta) + \rho_{S}(\theta) (n-m+h))} .
\end{equation*}

Next, we derive for the MGF of the sojourn time from \eqref{eq:sojourntime} for $\theta > 0$ that
\begin{equation*}
\mathsf{E}\Bigl[e^{\theta T(n)}\Bigr] \le \sum_{\nu=1}^n \mathsf{E}\Bigl[e^{\theta S^{\text{net}}(\nu,n)}\Bigr] \mathsf{E}\Bigl[e^{-\theta A(\nu,n)}\Bigr] ,
\end{equation*}
where we estimated the maximum by the sum of its arguments and used the statistical independence of arrivals and service. Considering $(\sigma_A,\rho_A)$ constrained traffic as in Def.~\ref{def:sigmarho}, we have
\begin{multline*}
\mathsf{E}\Bigl[e^{\theta T(n)}\Bigr] \le e^{\theta (\sigma_A(-\theta) + h \sigma_{S}(\theta) + h \rho_{S}(\theta))} \\ \sum_{\nu=1}^n {n-\nu+h-1 \choose h-1} e^{-\theta (\rho_A(-\theta) - \rho_{S}(\theta)) (n-\nu)} .
\end{multline*}
Next, we estimate for $\rho_{S}(\theta) < \rho_A(-\theta)$ that
\begin{align*}
& \sum_{\nu=1}^n {n-\nu+h-1 \choose h-1} e^{-\theta (\rho_A(-\theta)-\rho_{S}(\theta)) (n-\nu)} \\
\le & \sum_{\nu=0}^{\infty} {\nu+h-1 \choose h-1} \Bigl(e^{-\theta (\rho_A(-\theta) - \rho_{S}(\theta))}\Bigr)^{\nu} \\
= & \Bigl(1-e^{-\theta (\rho_A(-\theta) - \rho_{S}(\theta))}\Bigr)^{-h} ,
\end{align*}
where we used that
\begin{multline*}
\sum_{\nu=0}^{\infty} {\nu+h-1 \choose h-1} \\ \Bigl(e^{-\theta (\rho_A(-\theta) - \rho_{S}(\theta))}\Bigr)^{\nu} \Bigl(1-e^{-\theta (\rho_A(-\theta) - \rho_{S}(\theta))}\Bigr)^{h} = 1 ,
\end{multline*}
as the argument of the sum takes the form of the negative binomial probability mass function. By use of Chernoff's bound we obtain
\begin{equation*}
\mathsf{P}[T(n) \ge \tau] \le \frac{e^{\theta (\sigma_A(-\theta) + h \sigma_{S}(\theta))}}{\bigl(1-e^{-\theta (\rho_A(-\theta) - \rho_{S}(\theta))}\bigr)^h} e^{\theta h \rho_{S}(\theta)} e^{-\theta\tau}.
\end{equation*}
Finally, we insert the service parameters of the tasks $\sigma_{S}(\theta) = \sigma_{Q}(\theta) + \ln(k)/\theta$ and $\rho_S(\theta) = \rho_{Q}(\theta)$ from \eqref{eq:forkjoinsigma} and \eqref{eq:forkjoinrho} for each of the fork-join stages to complete the proof.
\end{proof}
%
%
\balance
\bibliographystyle{IEEEtran}
\bibliography{IEEEabrv,ParallelSystems}
%
%
\end{document}